\documentclass[12pt]{article}
\usepackage{amsmath,amsthm,amssymb}
\usepackage{graphicx,psfrag,epsf}
\usepackage{enumerate}
\usepackage{natbib}
\usepackage{url} 
\usepackage{color}
\newcommand{\blind}{0}

\usepackage[dvipdfm]{geometry}
\usepackage{subfigure}
\usepackage{caption}
\usepackage{epsfig}
\usepackage{epstopdf}
\usepackage[center]{titlesec}
\usepackage{booktabs}
\usepackage{threeparttable}
 \usepackage{algorithm}
\usepackage{color}
\usepackage{tikz}
\usepackage{tablefootnote}
\usepackage{threeparttable}
\usepackage{bbm, dsfont}

\addtolength{\oddsidemargin}{-.5in}%
\addtolength{\evensidemargin}{-1in}%
\addtolength{\textwidth}{1in}%
\addtolength{\textheight}{1.7in}%
\addtolength{\topmargin}{-1in}%

\newtheorem{theorem}{Theorem}
\newtheorem{lemma}{Lemma}
\newtheorem{remark}{Remark}
\newtheorem{proposition}{Proposition}
\newtheorem{corollary}{Corollary}

\theoremstyle{definition}

\numberwithin{equation}{section}

\begin{document}

\def\spacingset#1{\renewcommand{\baselinestretch}%
{#1}\small\normalsize} \spacingset{1}


\if0\blind
{
  \title{\bf Inference in semiparametric formation models for directed networks}
  \author{Lianqiang Qu, Lu Chen, Ting Yan\\
School of Mathematics and Statistics, Central China Normal University,\\
Wuhan, Hubei, 430079, P.R.China\\
and Yuguo Chen\\
Department of Statistics, University of Illinois at Urbana-Champaign, \\
Champaign, IL 61820
}
  \maketitle
} \fi

\if1\blind
{
  \bigskip
  \bigskip
  \bigskip
  \begin{center}
    {\LARGE\bf Inference in semiparametric formation models for directed networks}
\end{center}
  \medskip
} \fi

\bigskip
\begin{abstract}
We propose a semiparametric model for dyadic link formations in directed networks.
The model contains a set of degree parameters that
measure different effects of popularity or outgoingness across nodes, a regression parameter vector
that reflects the homophily effect resulting from the nodal attributes or pairwise covariates associated with edges,
and a set of latent random noises with unknown distributions.
Our interest lies in inferring the unknown degree parameters and homophily parameters.
The dimension of the degree parameters increases with the number of nodes.
Under the high-dimensional regime, we develop a kernel-based least squares approach to estimate the unknown parameters.
The major advantage of our estimator is that it does not encounter the incidental parameter problem for the homophily parameters.
We prove consistency of the estimators of the degree parameters and homophily parameters.
We establish high-dimensional central limit theorems for the proposed estimators
and provide several applications of our general theory,
including testing the existence of degree heterogeneity, testing sparse signals and recovering the support.
Simulation studies and a real data application are conducted to illustrate the finite sample performance of the proposed method.
\end{abstract}

\noindent%
{\it Keywords:}  Degree heterogeneity, Gaussian approximation, High dimensionality, Homophily, Network formation.
\vfill

\newpage
\spacingset{1.8} 
\section{Introduction}

Network data frequently arise in various fields such as genetics, sociology, finance and econometrics.
Networks consist of nodes and edges that connect one node to another.
Nodes may represent individuals in social networks, users in email networks, or countries in international trade networks.
Homophily and degree heterogeneity are two commonly observed characteristics of real-world social and economic networks.
Homophily indicates that nodes in a network are more likely to connect with others that have similar attributes,
rather than with those that possess dissimilar attributes.
Degree heterogeneity describes the variation in the number of edges among nodes;
specifically, a small number of nodes have many edges, while a large number of nodes have relatively fewer edges.
Quantifying the influence of these network features on edge formation is a critical issue in network analysis.
For a comprehensive review of network analysis, see \cite{kolaczyk2014statistical}.

The presence and extent of homophily, along with degree heterogeneity,
have significant implications for network formation (\citealp{graham2017econometric, yan2019}).
Several parametric models have been proposed to characterize these two critical network features
(e.g., \citealp{graham2017econometric,dzemski2019empirical,yan2019,de2020econometric, GRAHAM202023}),
where the estimation and inference methods depend on a specified distribution for the latent random noise.
For instance, \cite{graham2017econometric} and \cite{yan2019} assumed a logistic distribution,
while \cite{dzemski2019empirical} assumed a normal distribution.
However, network modeling based on a specific parametric distribution is vulnerable to model misspecification and potential instabilities.
When the assumed parametric distribution is inappropriate,
inference may have non-negligible bias, as demonstrated by the simulation results in Table \ref{Tab:CYan} of Section \ref{section:simulations}.

In this study, we introduce a semiparametric framework designed to model both homophily and degree heterogeneity in directed networks.
It is noteworthy that semiparametric inferences on the homophily parameter have been studied within the context of undirected networks \citep{toth2017semiparametric, zeleneev2020identification, candelaria2020semiparametric}.
These will be elaborated upon after the presentation of our main findings.
In our model, two node-specific parameters, $\alpha_i$ and $\beta_i$, are assigned to each node,
with $\alpha_i$  representing out-degree and $\beta_i$ signifying in-degree.
These are collected as two distinct sets: the out-degree parameters $\{\alpha_i\}_{i=1}^n$
and the in-degree parameters $\{\beta_j\}_{j=1}^n$,
with $n$ denoting the number of nodes in the directed graph.
Moreover, the model includes a common homophily parameter, $\gamma$,
for pairwise covariates $\{X_{ij}\}_{i,j=1}^n$ amongst the nodes.
Within the model, an edge from node $i$ to $j$ is depicted
if the sum of the degree effects $\alpha_i+\beta_j$
and the effect of covariates  $X_{ij}^\top \gamma$
surpasses a latent random noise with an undetermined distribution.
This modelling approach adopts an additive structure from the existing literature
\citep[e.g.][]{graham2017econometric,dzemski2019empirical,yan2019}.

Since edge formation is influenced by both homophily and degree heterogeneity,
it is equally important to estimate homophily parameter $\gamma$ and degree parameters $\{\alpha_i\}_{i=1}^n$ and $\{\beta_j\}_{j=1}^n$.
It is well-known that the maximum likelihood estimator (MLE) of $\gamma$
exhibits a non-negligible bias \citep{NS1948, graham2017econometric, FW2016},
thereby necessitating bias-correction procedures for valid inference \citep{yan2019,hughes2022estimating}.
A natural question arises: {\it Is it possible to derive an unbiased estimator for homophily parameter
while simultaneously obtaining the estimators for the degree parameters?}
To the best of our knowledge, this issue has not yet been addressed in the existing literature.
Moreover, estimating degree parameters allows us to test for the presence of degree heterogeneity
within a sub-network with a fixed or increasing number of nodes.
This analysis can provide insights into the extent of variation in degrees within the sub-network
and help determine whether certain nodes exhibit significantly different levels of attractiveness or popularity compared to others.

To address this problem, we develop a projection approach to estimate the unknown parameters, which involves three steps.
In the first step, we derive a kernel smoothing estimator for the conditional density of a special regressor given the other covariates.
A special regressor is defined as a continuous covariate that possesses a positive coefficient \citep{lewbel1998, lewbel2000semiparametric}.
In the following step, we obtain an unbiased estimator of $\gamma$
by projecting the covariates onto the subspace spanned by the column vectors of the design matrix of degree parameters.
The projection procedure eliminates the potential bias arising from the estimation of degree parameters.
Finally, we estimate the degree parameters by employing a constrained least squares method.
We establish consistency and asymptotic normality of the resulting estimators.
Since our asymptotic distributions for the estimators of degree parameters are high-dimensional,
it improves upon the fixed dimensional results presented in \cite{yan2019}.
Recently, \cite{shao} proposed a high-dimensional central limit theorem  for the analysis of undirected network data;
however, their method relies on the assumption of a logistic distribution for the noise.
Based on the asymptotic results, we develop  hypothesis testing methods to study three related problems:
(1) testing whether $\alpha_{i}$ and $\beta_{i}$ are equal to zero, which corresponds to testing for sparse signals;
(2) identifying which of $\alpha_{i}$ and $\beta_{i}$ are non-zero, thereby achieving support recovery;
and (3) assessing whether $\alpha_{i}=\alpha_{j}$ and $\beta_{i}=\beta_{j}$ within a sub-network,
which tests for the existence of degree heterogeneity.

As previously mentioned, inferences have been drawn in semiparametric models for undirected networks
\citep{toth2017semiparametric,candelaria2020semiparametric,zeleneev2020identification}.
While all these studies treated degree parameters as random variables, we consider them as fixed parameters.
\cite{toth2017semiparametric} employed conditional methods to eliminate the degree parameters
and proposed a tetrad inequality estimator for the homophily parameter.
\cite{zeleneev2020identification} developed estimators for homophily parameters based on a conditional pseudo-distance between two nodes,
which measures the similarity of degree heterogeneity.
However, the estimation of degree parameters is not addressed in \cite{toth2017semiparametric},
\cite{candelaria2020semiparametric} and \cite{zeleneev2020identification}.
Additionally, although \cite{gao2020nonparametric} derived identification results for nonparametric models of undirected networks,
the estimation aspect remains unexplored.
{\color{black}The work most closely related to our paper is \cite{candelaria2020semiparametric},
which also employs the special regressor method to address model identification issues.
However, the estimation strategies of \cite{candelaria2020semiparametric}
rely on the information derived from all sub-networks formed by groups of four distinct nodes.
In contrast, our estimator for the homophily parameter is based on a projection method.
For theoretical analysis, we develop
the Gaussian approximation method and derive the accurate inverse of the design matrix $V=U^\top U$
to analyze our proposed estimator with an explicit solution,
while \cite{candelaria2020semiparametric} primarily utilizes the U-statistic theory.
Consequently, the theoretical framework in our study significantly differs from that of \cite{candelaria2020semiparametric}.
Finally, our projection-based method can substantially reduce computational complexity compared to a U-statistic of order four
used in \cite{candelaria2020semiparametric}.
Further comparisons with the method of \cite{candelaria2020semiparametric} are presented in Section C.2 of the Supplementary Material.

We summarize our novel contributions as follows:
(1)  We propose a projection methodology that thoroughly addresses the identification issue of the homophily.
(2) We introduce a kernel-based least squares estimator for all unknown parameters and demonstrate their consistency and asymptotic normality.
(3) Our estimator is easy to compute since it has an explicit solution, unlike those estimators in \cite{toth2017semiparametric,candelaria2020semiparametric} and \cite{zeleneev2020identification} that require optimizing an objective function involving a complex U-statistic.
(4) The proposed semiparametric framework for  directed networks addresses
the issue of model misspecification when the parametric model in \cite{yan2019} is not correctly specified.
}

The remainder of this paper is organized as follows.
Section \ref{GNF:model} introduces the semiparametric network formation models.
Section \ref{section2-identification-estimation} provides the conditions for model identification and presents the estimation method.
Section \ref{section:Properties} provides  consistency and Gaussian approximations of the proposed estimators.
Section \ref{section:application} presents some applications of the general theory.
Section \ref{section:simulations} reports on the simulation studies and a real data analysis.
Section \ref{Sec:discussion} presents concluding remarks.
The technical details and additional numerical results are in the Online Supplementary Material.

We conclude this section by introducing some notation.
Let $e_{i}$ be a $(2n-1)$-dimensional row vector
with the $i$th element being 1 and 0 otherwise $(i=1,\dots,2n-1)$,
and $e_{2n}$ be the $(2n-1)$-dimensional zero vector.
For any $x=(x_1,\dots, x_n)^\top\in\mathbb{R}^n,$
we define the $\ell_p$-norm $\|x\|_p=(\sum_{i=1}^nx_i^p)^{1/p}$
and the $\ell_\infty$-norm $\|x\|_{\infty}=\max_{1\le i\le n}|x_i|.$
We define $\text{diag}\{d_1,\dots,d_n\}$ as a diagonal matrix, where $d_i$ is the $i$th element on the diagonal.
Let $I_n$ denote the $n\times n$ identity matrix, and $\mathbb{I}(\cdot)$ denote the indicator function.
For a matrix $D=(d_{ij})\in \mathbb{R}^{p\times q}$, define $\|D\|_{\max}=\max_{1\le l\le p,\ 1\le k\le q}|d_{ij}|.$
For the positive sequences $\{a_n\}$ and $\{b_n\},$
we write $a_n=o(b_n)$ if $a_n/b_n\rightarrow 0$ as $n\rightarrow \infty,$
and write $a_n=O(b_n)$ if there exists a constant $C$ such that $a_n\le  Cb_n$ for all $n.$
For vectors $x=(x_1,\dots,x_n)^\top$ and $y=(y_1,\dots,y_n)^\top,$
we write $x\le y$ if $x_i\le y_i$ for all $1\le i\le n.$
We denote the set $\{1, 2, \dots, n\}$ as $[n].$
For any set $B,$ denote its cardinality as $|B|.$
Denote by $\lfloor x\rfloor$ the integer part of a positive real number $x.$
The symbol $\mathcal{N}_{n}(\mu, \Sigma)$ is reserved for an
$n$-dimensional multivariate normal distribution with mean $\mu$ and covariance matrix $\Sigma$.
We use the subscript ``$0$" to denote the true parameter, under which the data are generated.

\section{Semiparametric network formation models} \label{GNF:model}
Consider a directed network composed of $n$ nodes, labelled as $``1, 2, \ldots, n"$.
Let $A=(A_{ij})_{n\times n}$ denote the adjacency matrix.
If there is a directed edge from node $i$ to $j$, we encode $A_{ij}=1$; otherwise, we set $A_{ij}=0$.
In this study, we assume that the network does not contain any self-loops (i.e., $A_{ii}=0$ for $i\in[n]$).
Let $X_{ij}=(X_{ij1},\dots,X_{ij,p+1})^\top\in \mathbb{R}^{p+1}$ denote the covariate for the node pair $(i,j)$,
which can be either a link-dependent vector or a function of the node-specific covariates.
For example, if node $i$ has a $d$-dimensional characteristic $W_i$,
the pairwise covariate $X_{ij}$ can be constructed as $X_{ij}=\|W_i-W_j\|_2.$
Under this specific choice of $X_{ij}$,  a smaller value of $X_{ij}$ indicates greater similarity between nodes  $i$ and $j$.
We assume that $p$ is fixed and that $A_{ij}$ are conditionally independent across $1\le i\neq j\le n$, given the covariates $X_{ij}$.

To capture the two aforementioned network features: homophily effects and degree heterogeneity,
we propose the following semiparametric link formation model for directed networks:
\begin{align}\label{model}
    A_{ij}=\mathbb{I}(\alpha_{i}+\beta_{j}+X_{ij}^\top\gamma-\varepsilon_{ij}>0),
\end{align}
where $\alpha_i$ represents the outgoingness parameter of node $i$,
$\beta_j$ denotes the popularity parameter of node $j$,
and $\gamma$ is the regression coefficient associated with the covariate $X_{ij}$.
In this model, $\varepsilon_{ij}$ denotes the unobserved latent noise,
for which we assume that $\mathbb{E}(\varepsilon_{ij}|X_{ij})=0$ almost surely.
The model posits that an edge from node $i$ to $j$ is formed
if the total effect, comprising the outgoingness of node $i$, the popularity of node $j$,
and the covariates effect $X_{ij}^\top\gamma$, exceeds the latent noise.

The parameter sets $\{\alpha_i\}_{i=1}^n$ and $\{\beta_i\}_{i=1}^n$ characterize the heterogeneity of nodes in their participation in network connections.
Larger values of $\alpha_{i}$ and $\beta_{i}$ indicate a higher propensity for node $i$ to establish links with other nodes in the network.
The term $X_{ij}^\top\gamma$ introduces the concept of homophily.
For instance, when $X_{ij}=\|W_i-W_j\|_2$ and $\gamma<0$,
a larger value of $X_{ij}^\top\gamma$ increases the likelihood of homophilous nodes interacting with each other.
Therefore, $\gamma$ captures the homophily effect of covariates (\citealp{graham2017econometric}).
The noise term $\varepsilon_{ij}$ accounts for the unobserved random factors
that influence the decision to form a specific interaction from node $i$ to $j$.


\section{Identification and estimation}
\label{section2-identification-estimation}

\subsection{Identification of parameters}
\label{Sec:Identification}
In this section, we discuss the conditions under which  model~\eqref{model} is identifiable.
Let $\alpha=(\alpha_1,\dots,\alpha_n)^\top$ and $\beta=(\beta_1,\dots,\beta_n)^\top.$
Clearly, model~\eqref{model} remains invariant under the transformation of the parameter vector $(\alpha, \beta, \gamma)$
to $(a\alpha + c,\ a\beta-c,\ a\gamma)$, where $a>0$ and $c\in\mathbb{R}$.
This invariance can be observed as follows:
\[
A_{ij}=\mathbb{I}(\alpha_{i}+\beta_{j}+{X}_{ij}^\top{\gamma}-\varepsilon_{ij}>0)
=\mathbb{I}(\widetilde{\alpha}_i+\widetilde{\beta}_j+{X}_{ij}^\top{\widetilde{\gamma}}-\widetilde \varepsilon_{ij}>0),
\]
where $\widetilde{\alpha}_i = a\alpha_i + c$, $\widetilde{\beta}_j = a\beta_j-c$, $\widetilde{\gamma}=a\gamma$
and $\widetilde{\varepsilon}_{ij}=a\varepsilon_{ij}$.
Thus, model~\eqref{model} is scale-shift invariant,
which requires specific constraints on the parameters $\alpha_i,~\beta_i$ and $\gamma$ for identification.
A common approach to avoid scale invariance is to set $\gamma_{k}=1$,
where $\gamma_k$ is the $k$th component of $\gamma,$
and $k$ is chosen such that $X_{ijk}$ is a continuous random variable.
In addition, one can establish $\sum_{i=1}^n \beta_i=0$ or $\beta_n=0$ to avoid shift invariance.
However, the identification of the parameters in model \eqref{model} is critically dependent on
the support of the joint distribution of $(X_{ij}, \varepsilon_{ij})$.
To illustrate this, we consider an example in which identification fails even when we set $\gamma_{k}=1$.
Let $X_{ij}$ be a random variable with support $(-4,-3)\cup(0,1)$.
We define $\gamma_{1}=\widetilde \gamma_1=1,$ and set $\alpha_{i}+\beta_{j}=1$ for $1\le i\neq j\le n,$
along with $\widetilde\alpha_{i}+\widetilde\beta_j=a(\alpha_{i}+\beta_{j})$, where $a\in[1,2].$
In addition, let $\varepsilon_{ij}$ and $\widetilde\varepsilon_{ij}$ be drawn from the uniform distribution on $(-1,1).$
In this scenario, we have
\begin{align*}
   & X_{ij}>\varepsilon_{ij}-1~~\text{if}~~X_{ij}\in (0,1)~~\text{and}~~X_{ij}<\varepsilon_{ij}-1~~\text{otherwise},\\
   &  X_{ij}>\widetilde\varepsilon_{ij}-a~~\text{if}~~X_{ij}\in (0,1) ~~\text{and}~~X_{ij}<\widetilde\varepsilon_{ij}-a~~\text{otherwise}.
\end{align*}
In other words, $A_{ij}=\widetilde A_{ij}$ almost surely if $a\in [1,2]$,
where $\widetilde A_{ij}=\mathbb{I}(\widetilde{\alpha}_i+\widetilde{\beta}_j+{X}_{ij}^\top{\widetilde{\gamma}}-\widetilde \varepsilon_{ij}>0)$.
Therefore, model \eqref{model} cannot be identified within the parameter set
$\{(\alpha, \beta): 1\le \alpha_i + \beta_j\le 2,\ 1\le i \neq j \le n \}.$
However, if we change the support of $X_{ij}$ to $(-4,4)$,
then the support of $(\alpha_i+\beta_j-\varepsilon_{ij})$ becomes a subset of $(-4,4).$
This leads to $\mathbb{P}(A_{ij}\neq \widetilde A_{ij})>0$ when $\widetilde \alpha_i+\widetilde \beta_j\neq \alpha_i+\beta_j.$
In this case, the identification issue does not arise.

Motivated by the aforementioned example, we consider the following conditions to ensure model identification.

\noindent{\bf Condition (C1).}
There exists at least one $k\in[p+1]$ such that $\gamma_{0k}>0$,
and the conditional distribution of $X_{ijk}$ given $X_{ij(-k)}$
is absolutely continuous with respect to the Lebesgue measure with nondegenerate conditional density $f(x|X_{ij(-k)}),$
where $\gamma_{0k}$ is the $k$th element of $\gamma_0$
and $X_{ij(-k)}=(X_{ij1},\dots,X_{ij(k-1)},X_{ij(k+1)},\dots,X_{ij(p+1)})^\top$.

The covariate $X_{ijk}$ that satisfies Condition (C1) is termed a special regressor \citep{lewbel1998,candelaria2020semiparametric}.
For the sake of simplicity, we assume that $X_{ij1}$ meets Condition (C1), and we denote $Z_{ij}=X_{ij(-1)}.$

\noindent{\bf Condition (C2).}
The conditional density $f(x|Z_{ij})$ of $X_{ij1}$ given $Z_{ij}$ has support $(B_L,B_U)$, where $-\infty\le B_L<0<B_U\le \infty.$
Furthermore, the support for $-(\alpha_{0i}+\beta_{0j}+Z_{ij}^\top\eta_0-\varepsilon_{ij})/\gamma_{01}$ is a subset of $(B_L,B_U),$
where $\eta_0=(\gamma_{02},\dots,\gamma_{0(p+1)})^\top.$

Condition (C2) restricts the support of $X_{ij1},$
which is mild and has been widely adopted by \cite{MANSKI1985313}, \cite{lewbel1998, lewbel2000semiparametric} and \cite{candelaria2020semiparametric}.
Conditions (C1) and (C2) do not impose any restrictions on the distribution of $Z_{ij}.$
Therefore, this identification strategy accommodates discrete covariates in $Z_{ij}.$

\noindent{\bf Condition (C3).} $\varepsilon_{ij}$ $(1\le i\neq j\le n)$ are independent of $X_{ij}$ and $\mathbb{E}(\varepsilon_{ij})=0.$

Condition (C3) is mild and can be relaxed to the scenario where $\varepsilon_{ij}$ is conditionally independent of $X_{ij1}$ given $Z_{ij}$,
as discussed in Section A of the Supplementary Material.

Following the works of \cite{lewbel1998} and \cite{candelaria2020semiparametric}, we define
\begin{align*}
Y_{ij} = \frac{A_{ij}-\mathbb{I}(X_{ij1}>0)}{f(X_{ij1}|Z_{ij})}.
\end{align*}
Additionally, we let $\eta_0=(\gamma_{02},\dots,\gamma_{0(p+1)})^\top$
and $\theta=(\alpha_1,\dots,\alpha_n,\beta_1,\dots,\beta_{n-1})^\top$.
The conditional expectation of $Y_{ij}$ is presented below.

\begin{theorem}
    \label{Theorem:LSE}
If Conditions (C1)-(C3) hold, then we have
\begin{align*}
   \mathbb{E}(Y_{ij}|Z_{ij})=(\alpha_{0i}+\beta_{0j}+Z_{ij}^\top\eta_0)/\gamma_{01}.
\end{align*}
\end{theorem}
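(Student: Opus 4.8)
The plan is to use the special-regressor cancellation of \cite{lewbel1998}. First I would isolate the special regressor in the model: since $\gamma_{01}>0$, writing $X_{ij}^\top\gamma_0=X_{ij1}\gamma_{01}+Z_{ij}^\top\eta_0$ and setting
\[
V_{ij}=-\,\frac{\alpha_{0i}+\beta_{0j}+Z_{ij}^\top\eta_0-\varepsilon_{ij}}{\gamma_{01}},
\]
model~\eqref{model} becomes $A_{ij}=\mathbb{I}(X_{ij1}>V_{ij})$ almost surely. Note that the quantity to be identified, $(\alpha_{0i}+\beta_{0j}+Z_{ij}^\top\eta_0)/\gamma_{01}$, equals $-\,\mathbb{E}(V_{ij}\mid Z_{ij})$ once we use $\mathbb{E}(\varepsilon_{ij}\mid Z_{ij})=0$ from Condition~(C3). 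So it suffices to prove $\mathbb{E}(Y_{ij}\mid Z_{ij},\varepsilon_{ij})=-V_{ij}$ and then apply the tower property.

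For the conditional computation, I would condition additionally on $\varepsilon_{ij}$. By Condition~(C3), $\varepsilon_{ij}$ is independent of $X_{ij}$, so given $(Z_{ij},\varepsilon_{ij})$ the variable $X_{ij1}$ still has density $f(\cdot\mid Z_{ij})$ on $(B_L,B_U)$ while $V_{ij}$ is a conditional constant. Hence
\[
\mathbb{E}(Y_{ij}\mid Z_{ij},\varepsilon_{ij})
=\int_{B_L}^{B_U}\frac{\mathbb{I}(x>V_{ij})-\mathbb{I}(x>0)}{f(x\mid Z_{ij})}\,f(x\mid Z_{ij})\,dx
=\int_{B_L}^{B_U}\bigl[\mathbb{I}(x>V_{ij})-\mathbb{I}(x>0)\bigr]\,dx.
\]
The centering term $\mathbb{I}(X_{ij1}>0)$ is exactly what makes this integrand compactly supported, so the integral is finite even when $(B_L,B_U)$ is unbounded. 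By Condition~(C2) we have $V_{ij}\in(B_L,B_U)$ almost surely; splitting into the cases $V_{ij}\ge0$ and $V_{ij}<0$, the integrand is $-\mathbb{I}(0<x\le V_{ij})$ in the first case and $\mathbb{I}(V_{ij}<x\le0)$ in the second, and in both cases the integral evaluates to $-V_{ij}$.

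Finally I would take the conditional expectation over $\varepsilon_{ij}$ given $Z_{ij}$: by the tower property and Condition~(C3),
\[
\mathbb{E}(Y_{ij}\mid Z_{ij})=\mathbb{E}\bigl(-V_{ij}\mid Z_{ij}\bigr)
=\frac{\alpha_{0i}+\beta_{0j}+Z_{ij}^\top\eta_0-\mathbb{E}(\varepsilon_{ij}\mid Z_{ij})}{\gamma_{01}}
=\frac{\alpha_{0i}+\beta_{0j}+Z_{ij}^\top\eta_0}{\gamma_{01}},
\]
which is the claim. The only delicate points are bookkeeping rather than substance: justifying that the conditioning variables can be treated as constants inside the inner expectation (this is where the independence in (C3)—or, in the relaxed version of Section~A, conditional independence of $\varepsilon_{ij}$ and $X_{ij1}$ given $Z_{ij}$—is used), invoking Fubini so that the conditional expectation can be written as the displayed integral against $f(\cdot\mid Z_{ij})$, and using (C2) so that $V_{ij}$ lies strictly inside the support and the two-case evaluation is exhaustive. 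I expect the proof to be short; the only real ``idea'' is choosing the order of conditioning that triggers the special-regressor cancellation.
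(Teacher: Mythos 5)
Your proposal is correct and follows essentially the same route as the paper's proof: writing $A_{ij}$ as an indicator that the special regressor exceeds a threshold (your $V_{ij}$ is exactly $-Y_{ij}^*/\gamma_{01}$ in the paper's notation), conditioning on $(Z_{ij},\varepsilon_{ij})$ so the density cancels, evaluating the two-case integral to recover the threshold, and then using Condition (C3) together with the tower property. The only cosmetic difference is the sign convention in the intermediate variable.
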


The proofs of Theorem \ref{Theorem:LSE} and all other theoretical results are provided in the Supplementary Material.
{\color{black}Theorem \ref{Theorem:LSE} states that
under Conditions (C1)-(C3), the expectation of $Y_{ij}$ conditional on $Z_{ij}$ exhibits an additive structure
involving $\alpha_{0i}$, $\beta_{0j}$, and a homophily term $Z_{ij}^\top \eta_0$.
This indicates that if $\alpha_{0},~\beta_{0}$ and $\eta_0$ are identifiable,
then they can be recovered using $Y_{ij}$.
However, according to Theorem \ref{Theorem:LSE},
the parameters $\alpha_{0i}$ and $\beta_{0j}$ are identified only up to scale rather than shift.
This is because, for any constant $c$,
the model remains invariant when transforming $(\alpha_{0i}+\beta_{0j})$ to $(\alpha_{0i}+c)+(\beta_{0j}-c)$.
Consequently, additional constraints on the parameters are necessary to ensure their identification.

We next fix some notations. Let $N=n(n-1)$.
Define $Y=(Y_1^\top,\dots,Y_n^\top)^\top\in\mathbb{R}^{N}$ and
$Z=(Z_1^\top,\dots,Z_n^\top)^\top\in \mathbb{R}^{N\times p}$,
where for each $i\in[n]$, $Y_i=(Y_{i1},\dots,Y_{i,i-1},Y_{i,i+1},\dots,Y_{in})^\top\in\mathbb{R}^{n-1}$
and $Z_i=(Z_{i1},\dots,Z_{i,i-1},Z_{i,i+1},\dots,Z_{in})^\top\in \mathbb{R}^{(n-1)\times p}$.
Recall that $e_i\in\mathbb{R}^{2n-1}$ denotes the standard basis vector,
where the $i$th element is $1$ and all other elements are $0$.
Let $U=(u_1^\top,\dots,u_N^\top)^\top\in \mathbb{R}^{N\times (2n-1)}$ be the design matrix for the parameter vector $(\alpha_1, \ldots, \alpha_n,
$ $\beta_1, \ldots, \beta_{(n-1)})$, where for each  $0\le k\le n-1$ and $1\le j\le n-1$, $u_{k(n-1)+j}=e_{k+1}+e_{n+j+1}$ if $j\ge k+1$ and $u_{k(n-1)+j}=e_{k+1}+e_{n+j}$ otherwise.
Let $V=U^\top U\in\mathbb{R}^{(2n-1)\times(2n-1)}$,
with its explicit expression provided in equation (B.1) of the Supplementary Material.
Define $D=(I_{N}-U V^{-1}U^\top) \in\mathbb{R}^{N\times N}.$
By Theorem \ref{Theorem:LSE}, if $\gamma_{01}=1$ and $\beta_{0n}=0$, we obtain
\begin{align}\label{eq:identification}
Z^\top D \mathbb{E}(Y|Z)=Z^\top D (U \theta + Z\eta_0 )
= Z^\top D Z\eta_0,
\end{align}
where the second equality follows from the property $D U =0$.
Therefore, to guarantee the existence and uniqueness of $\eta_0$, we consider the following condition.

\noindent{\bf Condition (C4).} There exists some positive constant $\phi$ such that
$\phi_{\min}(Z^\top DZ/N)>\phi$ almost surely, where $\phi_{\min}(H)$ denotes the smallest eigenvalue of any matrix $H$.

\begin{figure}
\centering
\includegraphics[width=0.45\textwidth]{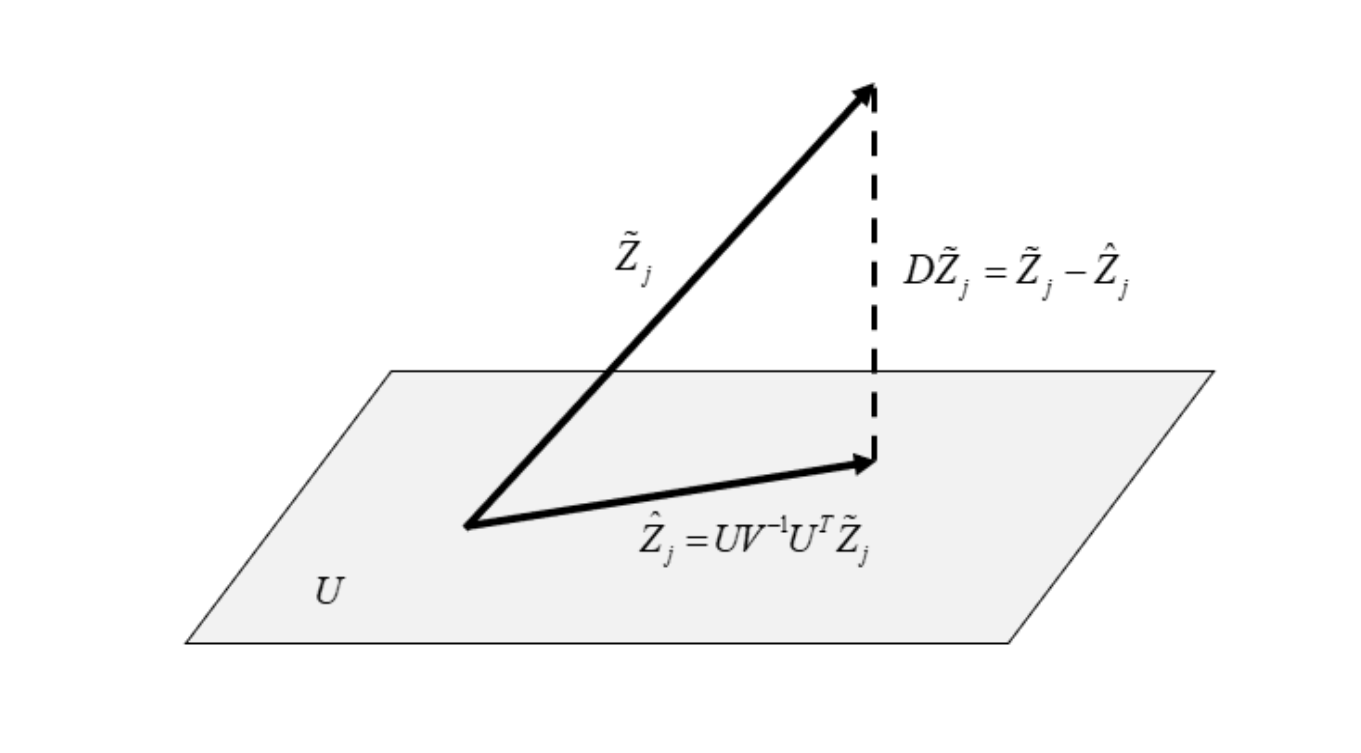}
\caption{Projection onto the linear subspace spanned by the column vectors of $U$.
Here, $\widetilde Z_j$ denotes the $j$th column vector of $Z$.}\label{Projection}
\end{figure}

Condition (C4) is essential for ensuring the existence and uniqueness of $\eta_0$.
It indicates that the effects of covariates should not be entirely captured by the degree heterogeneity.
To illustrate this intuitively, we present the projection of $Z\in \mathbb{R}^{N\times p}$
onto the linear subspace $\mathcal{U}$ spanned by the column vectors of $U$; see Figure 1.
Specifically, let $\widetilde Z_j$ denote the $j$th column of $Z$,
which can be decomposed as follows:
$$
\widetilde Z_j= \widehat Z_j+(\widetilde Z_j-\widehat Z_j).
$$
Here, $\widehat Z_j=UV^{-1}U^\top \widetilde Z_j\in \mathbb{R}^{N}$ denotes the projection of $\widetilde Z_j$
onto the linear subspace $\mathcal{U}$, while $(\widetilde Z_j-\widehat Z_j)=D\widetilde Z_j$ represents the residual vector.
If $\widetilde Z_j$ can be expressed as a linear combination of the column vectors of $U$, i.e., $\widetilde Z_j= \widehat Z_j$,
then the residual vector $D\widetilde Z_j$ is equal to zero, indicating that Condition (C4) does not hold.

Condition (C4) holds under certain straightforward assumptions.
For example, consider the following two conditions:
(i) $\widetilde Z_1,\dots, \widetilde Z_p$ are not collinear,
and (ii) they do not almost surely lie in the linear subspace $\mathcal{U}$.
Condition (i) is mild and has been widely adopted in the classical linear regression model,
while an intuitive understanding of condition (ii) has been provided above.
Under conditions (i) and (ii), we can establish that Condition (C4) holds.
The proof is provided in Section B.3 of the Supplementary Material.

Recall that $\eta_0=(\gamma_{02},\dots,\gamma_{0(p+1)})^\top$.
The following corollary states the identification of the parameters $\alpha_{0i},~\beta_{0j}$ and $\eta_0$.
The proof is provided in Section B.2 of the Supplementary material.

\begin{corollary}\label{Ident}
Under Conditions (C1)-(C4), the parameters $\alpha_{0i}, \beta_{0j}$ and $\eta_0$ are identifiable
within the subspace $\{(\alpha_{1},\dots,\alpha_{n},\beta_{1},\dots,\beta_{n},\gamma_{1},\dots,\gamma_{p+1})^\top: \beta_{n}=0~~\text{and}~~\gamma_{1}=1\}$.
\end{corollary}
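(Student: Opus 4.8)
The plan is to reduce the identification question to two linear-algebraic facts: the invertibility of $Z^\top D Z$, which is exactly Condition (C4), and the full column rank of the incidence matrix $U$, which is a structural property of the directed-network degree design once the shift is fixed by $\beta_n = 0$. The existence half of identification is free, since the true parameters satisfy the relevant moment equation by Theorem \ref{Theorem:LSE}; the content is uniqueness within the stated subspace.

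First I would record the basic moment equation. Applying Theorem \ref{Theorem:LSE} and imposing $\gamma_1 = 1$ gives $\mathbb{E}(Y_{ij}\mid Z_{ij}) = \alpha_{0i} + \beta_{0j} + Z_{ij}^\top \eta_0$ for every $i \neq j$, and with $\beta_{0n} = 0$ this is precisely the matrix identity $\mathbb{E}(Y\mid Z) = U\theta_0 + Z\eta_0$ with $\theta_0 = (\alpha_{01},\dots,\alpha_{0n},\beta_{01},\dots,\beta_{0(n-1)})^\top$. Note that the left-hand side, as well as $U$ and $Z$, are functionals of the law of the observable pair $(A,X)$, since the conditional density $f$ entering $Y_{ij}$ is itself determined by the distribution of $X$. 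Hence identifiability within the subspace $\{\beta_n = 0,\ \gamma_1 = 1\}$ amounts to showing that any two parameter vectors in that subspace which are consistent with the same data law, call their difference $(\delta_\theta,\delta_\eta)$, must satisfy $U\delta_\theta + Z\delta_\eta = 0 \Rightarrow \delta_\theta = 0,\ \delta_\eta = 0$.

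Next I would establish that $U$ has full column rank $2n-1$, so that $V = U^\top U$ is invertible and $D = I_N - UV^{-1}U^\top$ is the orthogonal projection onto the orthogonal complement of the column space of $U$; in particular $DU = 0$. Dropping $\beta_n$ is exactly what removes the rank deficiency of the unconstrained $2n$-column design, in which the sum of the $\alpha$-columns equals the sum of the $\beta$-columns; a short direct inspection of the rows $u_{k(n-1)+j} = e_{k+1} + e_{n+j}$ (with the $\beta_n$-coordinate deleted) shows the claim for $n \ge 3$. Given this, left-multiplying $U\delta_\theta + Z\delta_\eta = 0$ by $Z^\top D$ and using $DU = 0$ yields $Z^\top D Z\,\delta_\eta = 0$, which is the homogeneous version of display \eqref{eq:identification}.

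Finally, Condition (C4) forces $\phi_{\min}(Z^\top D Z) \ge N\phi > 0$ almost surely, so $Z^\top D Z$ is invertible and $\delta_\eta = 0$; substituting back gives $U\delta_\theta = 0$, whence $\delta_\theta = V^{-1}U^\top U\delta_\theta = 0$ by full column rank of $U$. Therefore $\alpha_{0i}$, $\beta_{0j}$ and $\eta_0$ are uniquely determined within the stated subspace, and indeed are given explicitly by $\eta_0 = (Z^\top D Z)^{-1} Z^\top D\,\mathbb{E}(Y\mid Z)$ and $\theta_0 = V^{-1}U^\top\big(\mathbb{E}(Y\mid Z) - Z\eta_0\big)$. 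I expect the only non-routine step to be the full-rank claim for $U$ — more precisely, checking that the single constraint $\beta_n = 0$, together with $n$ being large enough, genuinely eliminates the shift invariance — while everything else is a direct consequence of Theorem \ref{Theorem:LSE} and Condition (C4).
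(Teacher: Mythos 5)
Your proposal is correct and follows essentially the same route as the paper: identify $\eta_0$ from $Z^\top D\,\mathbb{E}(Y\mid Z)=Z^\top DZ\,\eta_0$ (using $DU=0$ and Condition (C4)), and then recover the degree parameters given $\eta_0$ under the constraints $\beta_n=0$, $\gamma_1=1$. The only difference is the final step: the paper proves uniqueness of $\theta_0$ by direct cancellation in $\alpha_{0i}+\beta_{0j}=\alpha_i^*+\beta_j^*$ using $\beta_{0n}=\beta_n^*=0$, whereas you invoke full column rank of $U$ (equivalently invertibility of $V=U^\top U$, which the paper's explicit formula for $V^{-1}$ confirms); the two arguments are interchangeable.
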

}

\subsection{Estimation methods}

In this section, we develop a procedure for estimating unknown parameters.
Hereafter, we assume that $\gamma_{01}=1$ and $\beta_{0n}=0$ for the identification of model \eqref{model}.
Let $\widehat f(X_{ij1}|Z_{ij})$ be a nonparametric estimator of $f(X_{ij1}|Z_{ij})$,
which will be discussed later. Correspondingly, we define $\widehat Y_{ij}$ as
\[
\widehat Y_{ij}=\frac{A_{ij}-\mathbb{I}(X_{ij1}\ge 0)}{\widehat f(X_{ij1}|Z_{ij})},
\]
and we write $\widehat Y=(\widehat Y_1^\top,\dots,\widehat Y_n^\top)^\top\in \mathbb{R}^{N}$,
where $\widehat Y_i=(\widehat Y_{i1},\dots,\widehat Y_{i,i-1},\widehat Y_{i,i+1},\dots,\widehat Y_{in})^\top\in \mathbb{R}^{n-1}$ for $i\in[n]$.
By \eqref{eq:identification}, we can estimate $\eta_0$ using
$$
\widehat\eta=(Z^\top DZ)^{-1}Z^\top D\widehat Y.
$$
To estimate $\alpha_0$ and $\beta_0$, we employ a constrained least squares method.
Specifically, we estimate $\alpha_0$ and $\beta_0$ using
\begin{align*}
(\widehat\alpha^\top,\widehat\beta^\top)^\top=\text{arg}\min_{\alpha,\, \beta}~\mathcal{M}(\alpha,\beta,\widehat\eta)~~\text{subject~to}~\beta_n=0,
\end{align*}
where $\mathcal{M}(\alpha,\beta,\eta)=\sum_{i=1}^{n}\sum_{j\ne i}(\widehat Y_{ij}-\alpha_i-\beta_j-Z_{ij}^\top\eta)^2$.

{\color{black} We provide an intuitive insight into how the projection procedure helps eliminate the potential bias of $\widehat\eta$.
As mentioned above, the $j$th column of $Z$, denoted as $\widetilde Z_j$, can be expressed as follows:
$\widetilde Z_j = \widehat Z_j + (\widetilde Z_j - \widehat Z_j)$,
where $\widehat Z_j$ represents the projection of $\widetilde Z_j$, and $(\widetilde Z_j - \widehat Z_j)$ denotes the residual.
Therefore, the information contained in $\widetilde Z_j$ consists of two components:
the first component, $\widehat Z_j$, is characterized by degree heterogeneity,
while the second component, i.e., $(\widetilde Z_j-\widehat Z_j)$, cannot be accounted for by degree heterogeneity.
This indicates that the effects of covariates contributing to network formation are integrated into the residual.
When $\widetilde Z_j$ is directly used to estimate $\eta_0$,
bias arises from the information redundancy between $U$ and $\widehat Z_j$.
We eliminate this bias through projection.}

We now discuss the nonparametric estimator $\widehat f(x|Z_{ij})$ of $f(x|Z_{ij})$.
We divide $Z_{ij}$ into two subvectors $\widetilde{Z}_{ij1}$ and $\widetilde{Z}_{ij2}$, where
$\widetilde{Z}_{ij1}$  includes all continuous elements of $Z_{ij}$,
and $\widetilde{Z}_{ij2}$ contains the remaining discrete elements.
We estimate $f(x|Z_{ij})$ using the Nadaraya-Watson type estimator \citep{watson1964smooth, nadaraya1964estimating}:
\begin{align*}
\widehat f(x|\widetilde Z_{ij1}=z_1, \widetilde Z_{ij2}=z_2)
=\frac{\sum_{1\le i\neq j\le n}\mathcal{K}_{xz,h}(X_{ij1}-x,\widetilde Z_{ij1}-z_1)\mathbb{I}(\widetilde Z_{ij2}=z_2)}{\sum_{1\le i\neq j\le n}\mathcal{K}_{z,h}(\widetilde Z_{ij1}-z_1)\mathbb{I}(\widetilde Z_{ij2}=z_2)},
\end{align*}
where $\mathcal{K}_{xz,h}(x,z)=h^{-(p_1+1)}\mathcal{K}_{xz}(x/h, z/h)$ and $\mathcal{K}_{z,h}(x)=h^{-p_1}\mathcal{K}_{z}(z/h)$.
Here, $\mathcal{K}_{xz}(\cdot)$ and $\mathcal{K}_{z}(\cdot)$ are two kernel functions,
$h$ denotes the bandwidth parameter, and $p_1$ denotes the number of continuous covariates in $Z_{ij}.$

{\color{black}
\begin{remark}
We consider a nonparametric method to estimate the conditional density function of $X_{ij1}$ given $Z_{ij}$,
which is robust without any model assumption for $X_{ij1}$ given $Z_{ij}$.
However, this method may suffer the curse of dimensionality, especially when the dimension of $Z_{ij}$ is large.
To overcome this issue, dimension reduction techniques, such as Lasso \citep{tibshirani1996regression} and sure independent screening \citep{fan2008sure}, may be used before carrying out the proposed method.
Nevertheless, the theoretical analysis requires further investigation, and we will explore this issue in future research.
\end{remark}

\begin{remark}
Note that the residuals $(Y_{ij}-\alpha_{0i}-\beta_{0j}-Z_{ij}^\top\eta_0)$ are bounded
by a rate of at most $O(\log n)$ if $q_n=O(\log n)$,
where $q_n=\max_{1\le i\neq j\le n} |\alpha_{0i}+\beta_{0j}+Z_{ij}^\top\eta_0|$.
Therefore, for moderately large $n$, the residuals do not exhibit a heavy-tailed behavior.
In this case, a least squares method generally performs well.
This is the primary reason for our choice of the least squares method.
Further details can be found in Sections \ref{section:Properties} and \ref{section:simulations}.
\end{remark}

\begin{remark} To apply the proposed method, it is essential to identify a covariate as the special regressor and determine its sign.
We consider the following procedure to achieve this.
For a given covariate $X_{ij1}$, we partition its support into $K$ non-overlapping subintervals of equal length,
denoted by $\mathcal{I}_k=[x_k,x_{k+1})~(1\le k\le K)$.
Here, $K$ is a prespecified positive integer.
We then calculate the number of edges for which $X_{ij1}$ falls within $\mathcal{I}_k$:
$\text{Count}_k=\sum_{i=1}^n\sum_{j\neq i}A_{ij}\mathbb{I}(X_{ij1}\in\mathcal{I}_k)$.
If $\text{Count}_k$ exhibits an increasing pattern as $k$ increases,
then the effect of covariate $X_{ij1}$ on network formation is likely positive, and we can set $\gamma_{01}=1$.
Conversely, if $\text{Count}_k$  shows a decreasing pattern with increasing $k$,
then the effect of covariate $X_{ij1}$ on network formation is likely negative, and we can set $\gamma_{01}=-1$.
When there are no such increasing or decreasing trends among all covariates,
identifying a covariate as the special regressor becomes challenging. We do not study this case here and
intend to investigate it in future research.
\end{remark}}

\section{Theoretical Results}
\label{section:Properties}

In this section, we present consistency and asymptotic normality of the estimators.
To achieve this, additional conditions are required.

\noindent{\bf Condition (C5).} $\max_{i,j}\|Z_{ij}\|_\infty\le \kappa$ almost surely,
where $\kappa$ is allowed to diverge with $n$.
Here, the subscript $n$ in $\kappa$ is omitted for simplicity.

\noindent{\bf Condition (C6).}
There exists a constant $m$ such that $f(x|Z_{ij})>m>0$ on the support of $X_{ij1}$.
In addition, the $r$th order partial derivative of
the probability density function $f_Z(z)$ of $Z_{ij}$ with respect to
continuous components of $Z_{ij}$ exists and is continuous and bounded.
The $r$th order partial derivative of the joint density function $f_{XZ}(x,z)$ of $(X_{ij1}, Z_{ij})$
with respect to continuous entries of $(X_{ij1}, Z_{ij})$ also exists and is continuous and bounded.
Here, $m$ is allowed to decrease to zero as $n\rightarrow \infty $, and the subscript $n$ in $m$ is suppressed.

\noindent{\bf Condition (C7).} The kernel function $\mathcal{K}_{z}(z)$ is a symmetric and piecewise Lipschitz continuous kernel of order $r$.
That is, $\int\dots\int \mathcal{K}_z(z_1,\dots,z_{p_1})dz_1\cdots dz_{p_1}=1,$
\begin{align*}
&\int\dots\int z_1^{j_1}\dots z_{p_1}^{j_{p_1}}\mathcal{K}_z(z_1,\dots,z_{p_1})dz_1\cdots dz_{p_1}=0~~(0<j_1+\dots+j_{p_1}<r),\\
&\int\dots\int z_1^{j_1}\dots z_{p_1}^{j_{p_1}}\mathcal{K}_z(z_1,\dots,z_{p_1})dz_1\cdots dz_{p_1}\neq 0~~(0<j_1+\dots+j_{p_1}=r).
\end{align*}
In addition, it is a bounded differentiable function with absolutely integrable Fourier transforms.
All of the conditions also hold for $\mathcal{K}_{xz}(x,z)$ by replacing $z$ with $(x,z)$.

Condition (C5) assumes the boundedness of $Z_{ij}$,
which is required to simplify the proof of the following theorems.
However, this assumption can be relaxed to allow for sub-Gaussian variables.
The first part of Condition (C6), together with Theorem \ref{Theorem:LSE}, implies that
$\max_{1\le i\neq j\le n}|\alpha_{0i}+\beta_{0j}+Z_{ij}^\top\eta_0|<2/m$ holds almost surely.
Note that individual-specific parameters $\alpha_{0i}$ and $\beta_{0i}$ can be utilized
to determine the level of sparsity in a network (e.g., \citealp{yan2019}).
Recall $q_n=\max_{1\le i\neq j\le n}|\alpha_{0i}+\beta_{0j}+Z_{ij}^\top\eta_0|$.
If $q_n$ is of the order $\log(n)$,
then $m$ must be at least of the order $1/\log(n)$.
In addition, this requires that the support $[B_L, B_U]$ of the special regressor $X_{ij1}$
satisfies $B_L=O(\log n)$ and $B_U=O(\log n)$.
The second part of Condition (C6) is mild and similar conditions have been used in
\cite{andrews_1995}, \cite{honore2002}, \cite{ARADILLASLOPEZ2012120} and \cite{candelaria2020semiparametric}.
Condition (C7) requires the application of a higher-order kernel,
which is widely adopted across various contexts; see \cite{andrews_1995}, \cite{lewbel1998},
\cite{honore2002}, \cite{Qi2005} and \cite{candelaria2020semiparametric}.

In the following, we redefine $\beta=(\beta_{1},\dots,\beta_{n-1})^\top$, excluding element $\beta_{n}$.
Define $\theta_0$ as the true value of $\theta$,
and let $\widehat\theta=(\widehat\alpha^\top,\widehat\beta^\top)^\top$ represent the estimator of $\theta_0$.
The consistency of $\widehat\theta$ and $\widehat\eta$ is presented below.

\begin{theorem}\label{theorem:consistency}
Suppose that Conditions (C1)-(C7) hold.
If 
\begin{align}\label{Theorem:consistency:eq2}
\frac{(\kappa+q_n)^2}{\phi m^2}\bigg[\sqrt{\frac{\log^5(n)}{n}}+\sqrt{\frac{\log^5(n)}{n^2h^{2p_1+2}}}+h^{r}\bigg]=o(1),
\end{align}
then we have
\begin{align*}
\|\widehat{\theta}-\theta_0\|_\infty=o_p(1)~~~\text{and}~~~\|\widehat{\eta}-\eta_0\|_\infty=o_p(1).
\end{align*}
\end{theorem}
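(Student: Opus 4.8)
The plan is to establish consistency by controlling three sources of error: the nonparametric estimation error in $\widehat f(X_{ij1}|Z_{ij})$, the resulting error in $\widehat Y$ relative to $Y$, and the propagation of these errors through the projection-based estimator $\widehat\eta$ and the constrained least squares estimator $\widehat\theta$. First I would decompose $\widehat Y_{ij} - Y_{ij}$ using the standard ratio identity: writing $\widehat Y_{ij} - Y_{ij} = (A_{ij}-\mathbb{I}(X_{ij1}\ge 0))\big(1/\widehat f - 1/f\big)$, and noting the numerator is bounded by $1$ in absolute value, the key quantity to bound is $\|1/\widehat f - 1/f\|_\infty$. Using Condition (C6) (so $f > m$) and a uniform rate for kernel density estimation, standard results (e.g.\ along the lines of Hansen's uniform convergence rates, or the arguments in \cite{andrews_1995}) give $\sup |\widehat f - f| = O_p\big(\sqrt{\log n / (n^2 h^{p_1+1})} + h^r\big)$ on the relevant support; dividing by $m^2$ (the lower bound on $f\widehat f$ up to a high-probability event where $\widehat f \ge m/2$) converts this to a bound on $\|\widehat Y - Y\|_\infty$. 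The $\log^5 n$ rather than $\log n$ in \eqref{Theorem:consistency:eq2} comes from needing the bound to hold uniformly over all $N = n(n-1)$ pairs together with the growth $q_n = O(\log n)$, so one carries an extra $\log$ factor through the union bounds and through $\|Y\|_\infty$ itself, which is $O(q_n/m) = O(\log n / m)$.

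Next I would handle $\widehat\eta - \eta_0$. Starting from $\widehat\eta = (Z^\top D Z)^{-1} Z^\top D \widehat Y$ and the identification identity $Z^\top D\, \mathbb{E}(Y|Z) = Z^\top D Z\, \eta_0$ from \eqref{eq:identification}, I write
\[
\widehat\eta - \eta_0 = (Z^\top D Z / N)^{-1} \cdot \frac{1}{N}\Big[ Z^\top D (\widehat Y - Y) + Z^\top D (Y - \mathbb{E}(Y|Z)) \Big].
\]
Condition (C4) controls the inverse: $\|(Z^\top D Z/N)^{-1}\|_{\mathrm{op}} \le 1/\phi$. For the first bracketed term, since $D$ is a projection (so $\|D v\|_2 \le \|v\|_2$) and $\|Z\|_{\max} \le \kappa$ by (C5), one gets a bound of order $\kappa \|\widehat Y - Y\|_\infty$. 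The second term is a mean-zero average; using conditional independence of the $A_{ij}$ and boundedness of the residuals $Y_{ij} - \mathbb{E}(Y_{ij}|Z_{ij})$ by $O(1/m)$ (a consequence of (C6) and the $q_n$ bound), a Bernstein-type inequality — applied carefully because $D$ couples the entries, so one may instead bound it as $\|Z^\top (Y-\mathbb{E}(Y|Z))\|_\infty + \|Z^\top U V^{-1} U^\top(Y - \mathbb{E}(Y|Z))\|_\infty$ and use the explicit structure of $V^{-1}$ referenced in (B.1) — yields an order $\kappa \sqrt{\log n}/(m\sqrt N)$ term, i.e.\ $\kappa\sqrt{\log^3 n}/(m n)$ roughly. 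Collecting these and comparing with the rate in \eqref{Theorem:consistency:eq2} gives $\|\widehat\eta - \eta_0\|_\infty = o_p(1)$ (in fact $\|\widehat\eta-\eta_0\|_2 = o_p(1)$, which suffices since $p$ is fixed).

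Finally, for $\widehat\theta - \theta_0$ I would use the closed form of the constrained least squares solution, $\widehat\theta = V^{-1} U^\top(\widehat Y - Z\widehat\eta)$, against $\theta_0 = V^{-1} U^\top \mathbb{E}(Y - Z\eta_0 | Z) + V^{-1}U^\top(\mathbb{E}(Y|Z) - U\theta_0 - Z\eta_0)$ where the last term vanishes by Theorem \ref{Theorem:LSE}. Thus
\[
\widehat\theta - \theta_0 = V^{-1} U^\top (\widehat Y - Y) + V^{-1} U^\top (Y - \mathbb{E}(Y|Z)) + V^{-1} U^\top Z (\eta_0 - \widehat\eta).
\]
The $\ell_\infty$ norm of each piece is controlled by the explicit entrywise bounds on $V^{-1}$ (whose entries are $O(1/n)$ off the "corner" structure) combined with the fact that each column of $U$ has exactly $n-1$ nonzero entries: this gives $\|V^{-1}U^\top w\|_\infty = O(\|w\|_\infty)$ roughly for bounded-type $w$, plus a concentration gain of $1/\sqrt n$ for the mean-zero middle term. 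Hence each term is $o_p(1)$ under \eqref{Theorem:consistency:eq2}, using the already-established $\|\widehat Y - Y\|_\infty = o_p(1)$ and $\|\widehat\eta - \eta_0\| = o_p(1)$ and $\|Z\|_{\max}\le\kappa$.

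The main obstacle I anticipate is the middle, stochastic term — controlling $\|V^{-1}U^\top(Y - \mathbb{E}(Y|Z))\|_\infty$ and $\|Z^\top D(Y-\mathbb{E}(Y|Z))\|_\infty$ uniformly. Although the summands are conditionally independent given the covariates, the matrices $V^{-1}U^\top$ and $D$ mix them in a nontrivial way, so a naive entrywise Bernstein bound is not immediate; one needs the precise form of $V^{-1}$ from (B.1) to see that the relevant linear combinations still have variance of the right order and bounded coefficients, and then a union bound over $2n-1$ (resp.\ $p$) coordinates, each costing a $\sqrt{\log n}$ factor. Getting these constants to line up exactly with the $\sqrt{\log^5 n / n}$ and $\sqrt{\log^5 n/(n^2 h^{2p_1+2})}$ terms in \eqref{Theorem:consistency:eq2} — rather than a worse power of $\log n$ — is the delicate bookkeeping, and it is where the explicit inverse $V^{-1}$ advertised in the introduction does the real work.
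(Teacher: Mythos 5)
Your proposal is correct in substance, and for $\widehat\eta$ it takes a genuinely more direct route than the paper. The paper proves consistency of $\widehat\eta$ by a Z-estimation argument: it works with the estimating function $\widehat Q_c(\eta)=-Z^\top(\widehat Y-U\widehat\theta_\eta-Z\eta)$, which equals $-Z^\top D(\widehat Y-Z\eta)$, controls $\widehat Q_c-Q_c$ uniformly in $\eta$ via a grid/covering argument plus Hoeffding's inequality and the explicit approximation $\breve S$ of $V^{-1}$ (with $\|V^{-1}-\breve S\|_{\max}=O(n^{-2})$), and then invokes a separation inequality $\inf_{\|\eta-\eta_0\|>\vartheta_0}\|\mathbb{E}[Q_c(\eta)]-\mathbb{E}[Q_c(\eta_0)]\|_2>p\phi\vartheta_0 N$ derived from Condition (C4), followed by a Taylor expansion to get the rate. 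You instead exploit the closed form $\widehat\eta=(Z^\top DZ)^{-1}Z^\top D\widehat Y$ together with \eqref{eq:identification}, bound $(Z^\top DZ/N)^{-1}$ by $1/\phi$, and split the numerator into a kernel-error piece and a conditionally mean-zero piece; since the estimating function is linear in $\eta$, this is equivalent but avoids the covering argument, and it is in fact the decomposition the paper itself uses later (in the lemma underlying the Gaussian approximation for $\widehat\eta$). For $\widehat\theta$, your three-term decomposition $V^{-1}U^\top(\widehat Y-Y)+V^{-1}U^\top(Y-\mathbb{E}(Y|Z))+V^{-1}U^\top Z(\eta_0-\widehat\eta)$ coincides with the paper's final step (there $F_{\eta_0}(\theta_0)=U^\top(Y-\mathbb{E}(Y|Z))$), and your plan of entrywise control of $V^{-1}$ plus coordinatewise Hoeffding bounds with a $\sqrt{\log n}$ union-bound cost is exactly what their lemmas carry out; note also that $D$ and $V^{-1}U^\top$ are deterministic, so conditioning on the covariates makes the Hoeffding step for the mixed terms unproblematic, as you anticipated. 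Two small caveats: the uniform kernel rate you quote, $\sqrt{\log n/(n^2h^{p_1+1})}$, is sharper than the rate the paper actually uses, $\sqrt{\log n/(n^2h^{2p_1+2})}$; this is harmless here because the assumed condition \eqref{Theorem:consistency:eq2} already contains the slower term, but you should invoke whichever uniform rate you can justify for the pairwise data (the paper leans on Andrews, 1995). Likewise, the $\log^5(n)$ is not really produced by your union bounds—consistency needs far fewer logarithms; it is inherited from the high-dimensional Gaussian-approximation conditions of the later theorems, so the extra logarithmic slack simply works in your favor.
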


Condition \eqref{Theorem:consistency:eq2} places restrictions on the rate at which $\kappa$ increases
and on the rates at which $m$ and $\phi$ decrease.
Here, $\phi$, $\kappa$ and $m$ are defined in Conditions (C4)-(C6), respectively.
Moreover, the terms involving $h$ in \eqref{Theorem:consistency:eq2} are included
to balance the bias and variance of $\widehat Y_{ij}$ when using the kernel smoothing method.
This indicates that the bandwidth $h$ significantly affects the behavior of the estimator both theoretically and practically.
When $\kappa$, $\phi$ and $m$ are treated as constants,
it is necessary for $h\rightarrow 0$ and $n^2h^{2p_1+2}/\log(n)\rightarrow \infty$ as $n\rightarrow \infty$
to ensure the consistency of the estimators.

Next, we present a high-dimensional central limit theorem for the estimators $\widehat \alpha$ and $\widehat\beta$.
Specifically, we consider the inferences related to $\mathcal{L}_{1\alpha}\alpha_0$, $\mathcal{L}_{1\beta}\beta_0$ and $\mathcal{L}_2\eta_0$,
where $\mathcal{L}_{1\alpha}$, $\mathcal{L}_{1\beta}$ and $\mathcal{L}_2$ denote
$M_{1\alpha}\times n$, $M_{1\beta}\times (n-1)$ and $M_2\times p$ matrices, respectively.
This framework can be utilized to construct confidence intervals for the linear combinations of the parameters $\alpha_0$, $\beta_0$ and $\eta_0$.
{\color{black}
To illustrate this, we provide two examples for $\mathcal{L}_{1\alpha}$.
The matrices $\mathcal{L}_{1\beta}$ and $\mathcal{L}_2$ can be defined analogously.
The first example is given by $\mathcal{L}_{1\alpha}=\widetilde e_i,$
where $\widetilde e_i$ is an $n$-dimensional row vector
with its $i$th element equal to 1 and all other elements equal to 0.
In this case, $\mathcal{L}_{1\alpha}(\widehat\alpha-\alpha_0)=(\widehat\alpha_i-\alpha_{0i})$,
which allows us to construct a confidence interval for $\alpha_{0i}$.
The second example is $\mathcal{L}_{1\alpha}=I_{n\times n}$,
where $I_{n\times n}$ denotes the $n\times n$ identity matrix.
In this case, we have $\mathcal{L}_{1\alpha}(\widehat\alpha-\alpha_0)=(\widehat\alpha_1-\alpha_{01}, \ldots,
\widehat\alpha_n-\alpha_{0n})^\top$.
Therefore, we can perform simultaneous inference for all out-degree parameters.
}

To obtain asymptotic distributions of the estimators, we need an additional condition.

\noindent{\bf Condition (C8)}.
(i) 
$0<M_{1\alpha}\le n$ and
$0<M_{1\beta}\le n $.
(ii) There exist some constants $s_{U\alpha}$, $s_{U\beta}$, $L_{U\alpha}$ and $L_{U\beta}$ (independent of $n$) such that
$1\le \max_{k\in [M_{1\alpha}]}|S_{k\alpha}|\le s_{U\alpha},$
$1\le \max_{k\in [M_{1\beta}]}|S_{k\beta}|\le s_{U\beta},$
$\|\mathcal{L}_{1\alpha}\|_{\max}\le L_{U\alpha}$
and
$\|\mathcal{L}_{1\beta}\|_{\max}\le L_{U\beta}$,
where  $S_{k\alpha}=\{j: L_{1,kj}^\alpha\neq 0 \}$,
$S_{k\beta}=\{j: L_{1,kj}^\beta\neq 0 \}.$
Here, $L_{1,kj}^\alpha$ and $L_{1,kj}^\beta$ denote the $(k,j)$th elements of $\mathcal{L}_{1\alpha}$ and $\mathcal{L}_{1\beta},$ respectively.

The first part of Condition (C8) states that the row dimensions of the matrices $\mathcal{L}_{1\alpha}$ and $\mathcal{L}_{1\beta}$
must satisfy the conditions $M_{1\alpha}\le n$ and $M_{1\beta}\le n-1$, respectively, which can increase as $n$  increases.
This implies that we are operating in high-dimensional settings.
The second part assumes the sparsity and boundedness of the matrices $\mathcal{L}_{1\alpha}$ and $\mathcal{L}_{1\beta},$
which can be satisfied by cases such as $\mathcal{L}_{1\alpha}=\widetilde e_i$
and $\mathcal{L}_{1\alpha}=\widetilde e_{i}-\widetilde e_j$.

For $i\neq j\in[n]$, define $\epsilon_{ij}=Y_{ij}-\alpha_{0i}-\beta_{0j}-Z_{ij}^\top\eta_0.$
The high-dimensional central limit theorems for $\widehat{\alpha}$ and $\widehat\beta$ are presented below.

\begin{theorem}\label{Theorem:GA:theta}
Suppose that Conditions (C1)-(C8) hold,
and there exist some constants $0<\sigma_{\epsilon L}^2<\sigma_{\epsilon U}^2<\infty$
such that $\sigma_{\epsilon L}^2<\mathbb{E}(\epsilon_{ij}^2)=\sigma_\epsilon^2<\sigma_{\epsilon U}^2$ for all $i \neq j\in [n]$.
If
\begin{align}\label{Theorem:GA:theta:condition}
\frac{(\kappa+q_n)^2}{\phi m^2}\bigg[\sqrt{\frac{\log^5(n)}{nh^{2p_1+2}}}+\sqrt{n\log(n)}h^{r}\bigg]=o(1),
\end{align}
then we have
\begin{align*}
&\sup_{x\in\mathbb{R}^{M_{1\alpha}}}\big|\mathbb{P}\big(\sqrt{n-1}\mathcal{L}_{1\alpha}(\widehat\alpha-\alpha_0)\le x\big)-\mathbb{P}\big(\mathcal{L}_1^\alpha G_1\le x\big)\big|=o(1)\\
\text{and}~~~~&\sup_{x\in\mathbb{R}^{M_{1\beta}}}\big|\mathbb{P}\big(\sqrt{n-1}\mathcal{L}_{1\beta}(\widehat\beta-\beta_0)\le x\big)-\mathbb{P}\big(\mathcal{L}_1^\beta G_1\le x\big)\big|=o(1),
\end{align*}
where $G_{1}\sim \mathcal{N}_{2n-1}(0,(n-1)\sigma_\epsilon^2V^{-1})$, $\mathcal{L}_{1}^\alpha=(\mathcal{L}_{1\alpha},0_{M_{1\alpha}\times (n-1)})$
and $\mathcal{L}_{1}^\beta=(0_{M_{1\beta}\times n},\mathcal{L}_{1\beta}).$
Here, $0_{b_1\times b_2}$ denotes a $b_1\times b_2$ zero matrix.
\end{theorem}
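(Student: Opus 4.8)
The plan is to derive an explicit linear representation for $\widehat\theta - \theta_0$, separate it into a dominant Gaussian-like term plus negligible remainders, and then invoke a high-dimensional Gaussian approximation result for sums of independent vectors. First I would write $\widehat\theta = V^{-1}U^\top(I_N - Z(Z^\top D Z)^{-1}Z^\top D)\widehat Y$ by combining the formula for $\widehat\eta$ with the constrained least-squares solution for $(\widehat\alpha,\widehat\beta)$, using the explicit inverse of $V = U^\top U$ that the paper references (equation (B.1) of the Supplement). Substituting $\widehat Y = Y + (\widehat Y - Y)$ and $Y = U\theta_0 + Z\eta_0 + \epsilon$, and using $DU = 0$ and $DZ(Z^\top D Z)^{-1}Z^\top D = D$ (idempotency on the relevant subspace), I expect the decomposition
\begin{align*}
\widehat\theta - \theta_0 = V^{-1}U^\top(I_N - Z(Z^\top D Z)^{-1}Z^\top D)\epsilon + V^{-1}U^\top(I_N - Z(Z^\top D Z)^{-1}Z^\top D)(\widehat Y - Y).
\end{align*}
The first term is the ``oracle'' linear statistic in the independent mean-zero errors $\epsilon_{ij}$; the second is the kernel-estimation error contribution.

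Second, I would control the remainder term. Using Condition (C5)–(C7) and the uniform rate for $\widehat f - f$ (a standard Nadaraya–Watson bound giving $\sup|\widehat f - f| = O_p(\sqrt{\log n/(n^2 h^{2p_1+2})} + h^r)$), together with $f > m$, one bounds $\|\widehat Y - Y\|_\infty$ and hence, after multiplying by the explicit $O(1/n)$-type entries of $V^{-1}U^\top$ and the bounded operator $(Z^\top D Z/N)^{-1}$ controlled by Condition (C4), shows that $\sqrt{n-1}\,\mathcal{L}_{1\alpha}$ times the remainder is $o_p(1)$ uniformly; here condition \eqref{Theorem:GA:theta:condition} is exactly what makes the two pieces $\sqrt{\log^5(n)/(nh^{2p_1+2})}$ and $\sqrt{n\log(n)}\,h^r$ vanish after the prefactor $(\kappa+q_n)^2/(\phi m^2)$. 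The sparsity/boundedness of $\mathcal{L}_{1\alpha}$ in Condition (C8) ensures that applying $\mathcal{L}_{1\alpha}$ multiplies errors only by $O(1)$ many bounded coefficients, so no dimension blow-up occurs.

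Third, for the main Gaussian term I would note that $V^{-1}U^\top(\cdot)\epsilon$ is a sum over the $N = n(n-1)$ independent random variables $\epsilon_{ij}$ (independent across dyads by the conditional-independence assumption), and that the cross term $V^{-1}U^\top Z(Z^\top D Z)^{-1}Z^\top D\epsilon$ is lower order because $U^\top D = 0$ forces it to involve only the residual of $Z$ — more precisely $U^\top Z (Z^\top DZ)^{-1} Z^\top D \epsilon$ has entries that are $O_p(\kappa/\sqrt n)$ smaller than the leading term after scaling. Thus $\sqrt{n-1}\,\mathcal{L}_{1\alpha}(\widehat\theta-\theta_0)$ is, up to $o_p(1)$, equal to $\mathcal{L}_1^\alpha$ times $\sqrt{n-1}\,V^{-1}U^\top\epsilon$, a normalized sum of independent bounded (by $O(q_n + \kappa + 1/m)$ via Condition (C6)) mean-zero vectors with covariance converging to $\sigma_\epsilon^2 V$ after the $V^{-1}$ sandwich, i.e. covariance $(n-1)\sigma_\epsilon^2 V^{-1}$. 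Applying a high-dimensional central limit theorem for independent summands over hyperrectangles (Chernozhukov–Chetverikov–Kato type, whose moment/dimension requirements are met because $M_{1\alpha} \le n$, the summands are bounded by a $\mathrm{polylog}$ factor, and the minimal eigenvalue of $V$ is bounded below) yields the stated sup-norm approximation by $\mathbb{P}(\mathcal{L}_1^\alpha G_1 \le x)$.

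The main obstacle I anticipate is the second and third steps jointly: rigorously showing the kernel-error remainder and the $Z$-projection cross term are uniformly $o_p(1)$ after the $\sqrt{n-1}\,\mathcal{L}_{1\alpha}$ scaling, which requires sharp control of the entries of $V^{-1}$ (they behave like $1/n$ on the diagonal but have a delicate rank-one-plus-diagonal structure, so $\|V^{-1}U^\top w\|_\infty$ must be bounded carefully in terms of $\|w\|_\infty$ and block sums of $w$), combined with propagating the divergence rates $\kappa$, $q_n$, $1/m$, $1/\phi$ through every bound so that condition \eqref{Theorem:GA:theta:condition} is precisely the clean sufficient condition. Verifying the hypotheses of the high-dimensional CLT — in particular that the anti-concentration and Gaussian-comparison bounds degrade only polylogarithmically in $n$ given the boundedness of the rescaled summands and the eigenvalue control on $V$ — will also need care, but is largely a matter of citing and instantiating the appropriate theorem.
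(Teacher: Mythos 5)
Your plan is correct and follows essentially the same route as the paper's proof: linearize $\widehat\theta-\theta_0$ around the oracle term $V^{-1}U^\top\epsilon$, kill the kernel-estimation and $\widehat\eta$-projection remainders using the uniform Nadaraya--Watson rate, the explicit near-diagonal structure of $V^{-1}$, and Conditions (C4)--(C8) under the stated rate condition, and then apply a Chernozhukov--Chetverikov--Kato high-dimensional CLT together with Nazarov-type anti-concentration to absorb the $o_p(1)$ remainders into the sup-norm bound. The differences are organizational rather than substantive: you work from the exact combined identity $\widehat\theta=V^{-1}U^\top(I_N-Z(Z^\top DZ)^{-1}Z^\top D)\widehat Y$ and treat the cross term $V^{-1}U^\top Z(Z^\top DZ)^{-1}Z^\top D\epsilon$ directly, whereas the paper bounds the equivalent term $V^{-1}U^\top Z(\widehat\eta-\eta_0)$ via the consistency rate of $\widehat\eta$ (its Lemma on the linear expansion of $\widehat\theta$); and you would apply the CLT directly to the $V^{-1}$-weighted sum, while the paper first replaces $V^{-1}$ by an explicit approximation $\breve S$ with $\|V^{-1}-\breve S\|_{\max}=O(n^{-2})$, applies the CLT to the resulting clean sums, and then uses Nazarov's inequality plus a Gaussian-to-Gaussian comparison to pass to $G_1\sim\mathcal{N}_{2n-1}(0,(n-1)\sigma_\epsilon^2V^{-1})$ --- a device that makes the summand structure and the variance lower bounds transparent, which is exactly the care you flag as needed if you work with $V^{-1}$ directly. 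One caveat: your parenthetical identity $DZ(Z^\top DZ)^{-1}Z^\top D=D$ is false (the left side is a rank-$p$ projection while $D$ has rank $n(n-2)$), but it is also unnecessary --- your displayed decomposition follows from $Z^\top DU=0$ alone, so this slip does not affect the argument.
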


When $\kappa$, $q_n$, $\phi$ and $m$ are constants,
\eqref{Theorem:GA:theta:condition} indicates that
the bandwidth $h$ must satisfy the conditions $nh^{2(p_1+1)}/\log^2(n)\rightarrow\infty$ and $n\log(n)h^{2r}\rightarrow 0$ as $n\rightarrow \infty$.
To fulfill this bandwidth condition, we can set $h = O(n^{-1/d})$ for some
integer $d>2(p_1+1)$, and choose $r$ to be the smallest even integer such that $r \ge d-(p_1+1)$.
For example, when $p_1=2$,  we can set $d=7$ and $r=4$.
In practice, the bandwidth should be carefully chosen to balance the trade-off between the bias and variance of  $\widehat Y_{ij}$.
To enhance the practicality of the proposed method,
we develop a data-driven procedure for selecting the bandwidth $h$ in Section \ref{section:simulations}.

Define $Q=(Q_1^\top,\dots,Q_{n}^\top)^\top,$
where $Q_i=(Q_{i1},\dots,Q_{i,i-1},Q_{i,i+1},\dots,Q_{in})^\top$ and
$Q_{ij}=Y_{ij}-\mathbb{E}(Y_{ij}|X_{ij1}, Z_{ij}).$

\begin{theorem}\label{Theorem:GA:eta}
Suppose that Conditions (C1)-(C8) hold, and
there exist some constants $0<\sigma_{QL}^2<\sigma_{QU}^2<\infty$ such that $\sigma_{QL}^2<\mathbb{E}(Q_{ij}^2)=\sigma_Q^2<\sigma_{QU}^2$
for all $i\neq j\in [n].$
If
\begin{align}\label{Theorem:GA:eta:condition}
 \frac{(\kappa+q_n)^2}{\phi m^2}\Bigg[\sqrt{\frac{\log^5(n)}{n^2h^{2p_1+2}}}+\sqrt{n^2\log n}h^{r}\Bigg]=o(1),
\end{align}
then we have
\begin{align*}
&\sup_{x\in\mathbb{R}^{M_2}}\big|\mathbb{P}\big(\sqrt{N}\mathcal{L}_2(\widehat\eta-\eta_0)\le x\big)-\mathbb{P}\big(\mathcal{L}_2 G_2\le x\big)\big|=o(1),
\end{align*}
where $G_2\sim \mathcal{N}_p(0,\sigma_Q^2[\mathbb{E}(Z^\top DZ/N)]^{-1}).$
\end{theorem}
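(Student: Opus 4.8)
Before the author's proof, here is how I would approach Theorem~\ref{Theorem:GA:eta}.

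The plan is to start from the closed form $\widehat\eta-\eta_0=(Z^\top DZ)^{-1}Z^\top D(\widehat Y-Z\eta_0)$ and linearise. Writing $Y=U\theta_0+Z\eta_0+\epsilon$ with $\epsilon_{ij}=Y_{ij}-\alpha_{0i}-\beta_{0j}-Z_{ij}^\top\eta_0$, and using the orthogonality $DU=0$ (illustrated in Figure~\ref{Projection}), one gets
\[
\sqrt N(\widehat\eta-\eta_0)=\Big(\tfrac1N Z^\top DZ\Big)^{-1}\Big[\tfrac1{\sqrt N}Z^\top D\epsilon+\tfrac1{\sqrt N}Z^\top D(\widehat Y-Y)\Big].
\]
It is convenient to further split $\epsilon_{ij}=Q_{ij}+\big(\mathbb{E}(Y_{ij}\mid X_{ij1},Z_{ij})-\alpha_{0i}-\beta_{0j}-Z_{ij}^\top\eta_0\big)$, where the first term has conditional mean zero given all covariates and the second has conditional mean zero given $Z_{ij}$. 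The strategy is then: (a) replace $(\tfrac1NZ^\top DZ)^{-1}$ by the deterministic limit $\Sigma_*^{-1}$, $\Sigma_*:=\mathbb{E}(Z^\top DZ/N)$; (b) show $\tfrac1{\sqrt N}Z^\top D(\widehat Y-Y)=o_p(1)$; and (c) prove that $\tfrac1{\sqrt N}Z^\top D\epsilon$ is Gaussian-approximable, with the leading contribution coming from $\tfrac1{\sqrt N}Z^\top DQ$.

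Step (a) is routine because $p$ is fixed: $\tfrac1NZ^\top DZ$ is a $p\times p$ average of entries bounded by $O(\kappa^2)$ (Condition~(C5)), so it concentrates entrywise around $\Sigma_*$, and $\phi_{\min}(\Sigma_*)\ge\phi$ (Condition~(C4)) keeps the inverse bounded, giving $\|(\tfrac1NZ^\top DZ)^{-1}-\Sigma_*^{-1}\|=o_p(1)$; with $\tfrac1{\sqrt N}Z^\top D\epsilon=O_p(1)$ the replacement error is negligible. Step (b) is the delicate one. I would expand $\widehat Y_{ij}-Y_{ij}=-Y_{ij}\,\{\widehat f(X_{ij1}|Z_{ij})-f(X_{ij1}|Z_{ij})\}/f(X_{ij1}|Z_{ij})+R_{ij}$ with $R_{ij}$ quadratic in $\widehat f-f$ and $|Y_{ij}|\le 1/m$. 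A pointwise bound via $\|\widehat f-f\|_\infty$ is far too crude to reach the $\sqrt N$ scale; instead I would decompose the Nadaraya--Watson numerator and denominator into their Hájek (linear-in-the-kernel) and degenerate parts and observe that summing the linear part against the two-way–centred rows $(DZ)_{ij}$ over all $N=n(n-1)$ pairs buys an extra $n^{-1}$ averaging relative to the pointwise rate. Bounding the linear term, the degenerate remainder, and the kernel bias --- which is $O(h^r)$ by the higher-order kernel of Condition~(C7), with smoothness from Condition~(C6) --- yields $\tfrac1{\sqrt N}\|Z^\top D(\widehat Y-Y)\|_\infty=O_p\big(\tfrac{(\kappa+q_n)^2}{\phi m^2}[\sqrt{\log^5 n/(n^2h^{2p_1+2})}+\sqrt{n^2\log n}\,h^r]\big)$, which is $o_p(1)$ by \eqref{Theorem:GA:eta:condition}.

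For step (c) I would condition on all covariates $\mathcal X=\{X_{ij}\}_{i\neq j}$. Since the $A_{ij}$ are conditionally independent given $\mathcal X$, the vectors $(DZ)_{ij}Q_{ij}$ are independent, mean zero, with moments bounded through $1/m$ and $q_n$ (Condition~(C6)), and live in fixed dimension $p$; applying a multivariate Berry--Esseen bound over convex sets (e.g.\ a Bentkus-type inequality) to $\mathcal L_2\Sigma_*^{-1}\tfrac1{\sqrt N}\sum_{i\neq j}(DZ)_{ij}Q_{ij}$ controls the supremum over all rectangles $\{x:\,\cdot\le x\}$ in $\mathbb R^{M_2}$ --- uniformly in $M_2$, as such sets are convex --- by the conditional Lyapunov ratio, which is $o_p(1)$. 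The conditional covariance $\tfrac1N\sum_{i\neq j}(DZ)_{ij}(DZ)_{ij}^\top\mathbb E(Q_{ij}^2\mid\mathcal X)$ concentrates around $\sigma_Q^2\,\Sigma_*$ under the moment assumptions on $Q_{ij}$, so the statistic is asymptotically $\mathcal N_{M_2}(0,\sigma_Q^2\mathcal L_2\Sigma_*^{-1}\mathcal L_2^\top)$, i.e.\ the law of $\mathcal L_2G_2$; de-conditioning by bounded convergence transfers this to the unconditional distribution, and one checks that the remaining piece of $\tfrac1{\sqrt N}Z^\top D\epsilon$ (the part with conditional mean zero given $Z_{ij}$) merges into this limit. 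Combining the four steps gives the assertion.

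The main obstacle is step (b): showing the nonparametric estimation error for the conditional density, once aggregated through $Z^\top D$, is negligible at the $\sqrt N$ rate. Because $\widehat\eta$ is $\sqrt N$-consistent while $\widehat f$ converges only at a nonparametric rate, one must genuinely exploit the averaging over the $N=n(n-1)$ pairs via a U-statistic/Hájek-type decomposition and carry the diverging nuisance quantities $\kappa$, $q_n$, $1/m$, $1/\phi$ through every term --- this is exactly what condition \eqref{Theorem:GA:eta:condition} encodes, and a crude bound will not close the argument. The Berry--Esseen step is comparatively standard for fixed $p$, but de-conditioning --- controlling the dependence between $DZ$ and the conditional variances of the $Q_{ij}$ across the network, and the concentration of $Z^\top DZ/N$ despite the node-sharing dependence among the $Z_{ij}$ --- still needs care.
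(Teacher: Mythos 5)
Your skeleton (the closed form for $\widehat\eta$, the split $\epsilon_{ij}=Q_{ij}-\Delta_{ij}$ with $\Delta_{ij}=\mathbb{E}(Y_{ij}\mid Z_{ij})-\mathbb{E}(Y_{ij}\mid X_{ij1},Z_{ij})$, and a fixed-dimension Berry--Esseen step for the final limit) is the right shape, but your step (b) contains a genuine error that breaks the argument. The aggregated kernel-estimation error is \emph{not} negligible at the $\sqrt N$ scale: as in the paper's expansion (cf.\ \eqref{Theorem:eq12} and Lemma \ref{Lemma:linear:eta}), the H\'ajek/linear part of the double sum $Z^\top D(\widehat Y-Y)$ equals $Z^\top D\Delta$ plus a remainder that is $o_p(\sqrt N)$ under \eqref{Theorem:GA:eta:condition}; since the $\Delta_{ij}$ are mean zero with nondegenerate variance, $N^{-1/2}Z^\top D\Delta=O_p(1)$, not $o_p(1)$. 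The extra averaging over the $N$ pairs that you invoke only brings this term down from $N\times(\text{pointwise rate})$ to the CLT scale $O_p(\sqrt N)$; it does not kill it, and the rate condition \eqref{Theorem:GA:eta:condition} is there to control the \emph{remainder after} extracting $Z^\top D\Delta$, not the whole term. This first-order contribution is exactly what converts the noise from $\epsilon$ to $Q$: because $\epsilon_{ij}+\Delta_{ij}=Q_{ij}$, one gets $\sqrt N(\widehat\eta-\eta_0)=(Z^\top DZ/N)^{-1}N^{-1/2}Z^\top DQ+o_p(1)$, and only then does the covariance $\sigma_Q^2[\mathbb{E}(Z^\top DZ/N)]^{-1}$ appear.

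With step (b) as you state it, your conclusion is internally inconsistent: if $N^{-1/2}Z^\top D(\widehat Y-Y)$ really were $o_p(1)$, the limit would be driven by $N^{-1/2}Z^\top D\epsilon$, whose per-pair variance is $\sigma_\epsilon^2=\sigma_Q^2+\mathrm{Var}(\Delta_{ij})>\sigma_Q^2$ (by the $L^2$-orthogonality of $Q_{ij}$ and $\Delta_{ij}$), so you would end up with $\sigma_\epsilon^2$, not $\sigma_Q^2$, in the covariance. Your remark in step (c) that the ``remaining piece'' of $\epsilon$ ``merges into this limit'' is where the inconsistency surfaces: that piece cannot merge into a $\sigma_Q^2$ limit; it must be cancelled by the first-order part of the kernel-error term. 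The repair is precisely the paper's linearization lemma for $\widehat\eta$: show $Z^\top D(\widehat Y-Y)=Z^\top D\Delta+o_p(\sqrt N)$ under \eqref{Theorem:GA:eta:condition}, combine with $Z^\top D\epsilon$ to obtain $Z^\top DQ$, and then run your (otherwise reasonable) steps (a) and (c) on $N^{-1/2}Z^\top DQ$, whose summands are conditionally independent and mean zero given all covariates with variance $\sigma_Q^2$ --- this is exactly how the paper reduces Theorem \ref{Theorem:GA:eta} to the Gaussian-approximation arguments used for Theorem \ref{Theorem:GA:theta}.
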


\begin{remark}
Theorem \ref{Theorem:GA:theta} addresses a high-dimensional central limit theorem,
while the asymptotic distribution presented in \cite{yan2019}
is developed specifically for a fixed-dimensional subvector of $\widehat\theta$.
The asymptotic distributions of the estimators for the degree parameters have not been established in \cite{graham2017econometric}.
Furthermore, the central limit theorem for homophily parameters in both \cite{graham2017econometric} and \cite{yan2019} exhibits asymptotic bias due to the incidental parameter problem in likelihood inference \citep{NS1948, FW2016}.
However, $\widehat \gamma$ in current study is unbiased, owing to the projection technique.
\end{remark}

The variance of $G_1$ in Theorem \ref{Theorem:GA:theta} involves the inverse of $V=U^\top U$.
An explicit expression for this is provided in (B.2) of the Supplementary Material.
In addition, the variance of $G_1$ also incorporates the variance $\sigma^2_\epsilon$ of $\epsilon_{ij}$,
which is unknown but can be estimated using $\widehat{\sigma}_\epsilon^2=N^{-1}\sum_{i=1}^n\sum_{j\neq i}^n\widehat \epsilon_{ij}^2$.
Here, $\widehat \epsilon_{ij}=\widehat Y_{ij}-\widehat\alpha_i-\widehat\beta_j-Z_{ij}^\top\widehat\eta.$

We now estimate the unknown parameter $\sigma_Q^2$ in the covariance matrix of $G_2$.
We consider the following estimator:
$\widehat{\sigma}_Q^2=N^{-1}\sum_{i=1}^n\sum_{j\neq i}^n\widehat Q_{ij}^2,$
where $\widehat Q_{ij}=\widehat Y_{ij}-\widehat{\mathbb{E}}(Y_{ij}|X_{ij1},Z_{ij})$.
Here, $\widehat{\mathbb{E}}(Y_{ij}|X_{ij1},Z_{ij})$ is a nonparametric estimator of $\mathbb{E}(Y_{ij}|X_{ij1},Z_{ij})$.
We adopt the Nadaraya-Watson type estimator, which is defined as follows:
\begin{align*}
\widehat{\mathbb{E}}(Y_{ij}|X_{ij1}=x,\widetilde Z_{ij1}=z_1, \widetilde Z_{ij2}=z_2)
=\frac{\sum_{1\le i\neq j\le n}\widehat Y_{ij}\mathcal{K}_{xz,h}(X_{ij1}-x,\widetilde Z_{ij1}-z_1)\mathbb{I}(\widetilde Z_{ij2}=z_2)}{\sum_{1\le i\neq j\le n}\mathcal{K}_{xz,h}(X_{ij1}-x,\widetilde Z_{ij1}-z_1)\mathbb{I}(\widetilde Z_{ij2}=z_2)}.
\end{align*}
Here, $\widetilde Z_{ij1}$ and $\widetilde Z_{ij2}$ are defined in Section \ref{section2-identification-estimation}.

The consistency of the Gaussian approximation
when replacing $\mathbb{E}(Y_{ij}|X_{ij1}, Z_{ij})$ with $\widehat{\mathbb{E}}(Y_{ij}|X_{ij1}, Z_{ij})$
is established in the following theorem.

\begin{theorem}\label{Theorem:hatGA:theta}
If the conditions in Theorem \ref{Theorem:GA:theta} hold, then we have
\begin{align*}
&\sup_{x\in\mathbb{R}^{M_{1\alpha}}}\big|\mathbb{P}\big(\sqrt{n-1}\mathcal{L}_{1\alpha}(\widehat\alpha-\alpha_0)\le x\big)-\mathbb{P}\big(\mathcal{L}_1^\alpha \widehat G_1\le x\big)\big|=o(1)\\
\text{and}~~~&\sup_{x\in\mathbb{R}^{M_{1\beta}}}\big|\mathbb{P}\big(\sqrt{n-1}\mathcal{L}_{1\beta}(\widehat\beta-\beta_0)\le x\big)-\mathbb{P}\big(\mathcal{L}_1^\beta \widehat G_1\le x\big)\big|=o(1),
\end{align*}
where $\widehat G_1\sim \mathcal{N}_{2n-1}(0,(n-1)\widehat\sigma_{\epsilon}^2V^{-1}).$
In addition, if the conditions in Theorem \ref{Theorem:GA:eta} hold, then we have
\begin{align*}
&\sup_{x\in\mathbb{R}^{M_2}}\big|\mathbb{P}\big(\sqrt{N}\mathcal{L}_2(\widehat\eta-\eta_0)\le x\big)-\mathbb{P}\big(\mathcal{L}_2 \widehat G_2\le x\big)\big|=o(1),
\end{align*}
where $\widehat G_2\sim \mathcal{N}_p(0,\widehat\sigma_{Q}^2(Z^\top DZ/N)^{-1}).$
\end{theorem}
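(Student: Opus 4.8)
\textbf{Proof proposal for Theorem \ref{Theorem:hatGA:theta}.}

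The plan is to show that replacing the population variances $\sigma_\epsilon^2$ and $\sigma_Q^2$ by their empirical counterparts $\widehat\sigma_\epsilon^2$ and $\widehat\sigma_Q^2$ perturbs the Gaussian approximations in Theorems \ref{Theorem:GA:theta} and \ref{Theorem:GA:eta} only by a $o_p(1)$ amount, uniformly over rectangles, and hence the conclusions carry over. The key quantitative input is a consistency statement of the form $|\widehat\sigma_\epsilon^2-\sigma_\epsilon^2|=o_p(1)$ and $|\widehat\sigma_Q^2-\sigma_Q^2|=o_p(1)$, combined with an anti-concentration bound for the multivariate Gaussian vectors $\mathcal{L}_1^\alpha G_1$, $\mathcal{L}_1^\beta G_1$ and $\mathcal{L}_2 G_2$. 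First I would invoke the triangle inequality: for any $x$,
\begin{align*}
\big|\mathbb{P}(\sqrt{n-1}\mathcal{L}_{1\alpha}(\widehat\alpha-\alpha_0)\le x)-\mathbb{P}(\mathcal{L}_1^\alpha\widehat G_1\le x)\big|
&\le \big|\mathbb{P}(\sqrt{n-1}\mathcal{L}_{1\alpha}(\widehat\alpha-\alpha_0)\le x)-\mathbb{P}(\mathcal{L}_1^\alpha G_1\le x)\big|\\
&\quad+\big|\mathbb{P}(\mathcal{L}_1^\alpha G_1\le x)-\mathbb{P}(\mathcal{L}_1^\alpha\widehat G_1\le x)\big|.
\end{align*}
The first term on the right is $o(1)$ uniformly in $x$ by Theorem \ref{Theorem:GA:theta}, so it remains to control the second term, which compares two centered Gaussian laws whose covariances differ only by the scalar factor $\widehat\sigma_\epsilon^2/\sigma_\epsilon^2$.

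The second step is the Gaussian comparison. Since $\widehat G_1 = (\widehat\sigma_\epsilon/\sigma_\epsilon) G_1$ in distribution (conditionally on the scaling, as both share the same covariance structure $V^{-1}$ up to the scalar), I would condition on the data used to form $\widehat\sigma_\epsilon^2$ and write $\widehat G_1 = \lambda_n G_1$ with $\lambda_n := \widehat\sigma_\epsilon/\sigma_\epsilon = 1+o_p(1)$. Then $\{\mathcal{L}_1^\alpha\widehat G_1\le x\} = \{\mathcal{L}_1^\alpha G_1\le x/\lambda_n\}$, and by a Gaussian anti-concentration inequality (e.g. the Nazarov-type bound used in the Chernozhukov--Chetverikov--Kato framework, which the earlier proofs in the paper already rely on for Theorems \ref{Theorem:GA:theta}--\ref{Theorem:GA:eta}) one has
\[
\sup_x\big|\mathbb{P}(\mathcal{L}_1^\alpha G_1\le x/\lambda_n)-\mathbb{P}(\mathcal{L}_1^\alpha G_1\le x)\big|
\le C\,|\lambda_n-1|\cdot\big(1\vee\sqrt{\log M_{1\alpha}}\big)\cdot\max_k \frac{\sqrt{\mathrm{Var}((\mathcal{L}_1^\alpha G_1)_k)}}{\text{(scale)}},
\]
which, because the diagonal entries of $(n-1)\sigma_\epsilon^2 V^{-1}$ are bounded away from $0$ and $\infty$ under Conditions (C5)--(C8) (this uses the explicit inverse of $V$ from (B.2) of the Supplementary Material and the sparsity/boundedness of $\mathcal{L}_{1\alpha}$), is $o_p(1)$ provided $|\lambda_n-1|\sqrt{\log n}=o_p(1)$. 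Hence I need the slightly sharper rate $|\widehat\sigma_\epsilon^2-\sigma_\epsilon^2|=o_p(1/\sqrt{\log n})$, which I would establish by decomposing $\widehat\epsilon_{ij}-\epsilon_{ij} = (\widehat Y_{ij}-Y_{ij}) - (\widehat\alpha_i-\alpha_{0i}) - (\widehat\beta_j-\beta_{0j}) - Z_{ij}^\top(\widehat\eta-\eta_0)$ and using the consistency rates from Theorem \ref{theorem:consistency} together with the uniform kernel-estimation error bounds for $\widehat f$ (controlled by the bracket in \eqref{Theorem:consistency:eq2}) and an elementary concentration bound for $N^{-1}\sum_{i\ne j}\epsilon_{ij}^2$ around $\sigma_\epsilon^2$; the conditions \eqref{Theorem:GA:theta:condition} on $h$ are what make these error terms beat $\sqrt{\log n}$.

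The third step repeats this argument for $\widehat\eta$, comparing $\mathcal{N}_p(0,\sigma_Q^2[\mathbb{E}(Z^\top DZ/N)]^{-1})$ with $\mathcal{N}_p(0,\widehat\sigma_Q^2(Z^\top DZ/N)^{-1})$. Here there are two perturbations to absorb: the scalar factor $\widehat\sigma_Q^2/\sigma_Q^2$ and the replacement of the population matrix $\mathbb{E}(Z^\top DZ/N)$ by the sample matrix $Z^\top DZ/N$. For the latter I would show $\|Z^\top DZ/N - \mathbb{E}(Z^\top DZ/N)\|_{\max}=o_p(1)$ by a $U$-statistic / Hoeffding-type concentration bound (using boundedness of $Z_{ij}$ from Condition (C5) and the structure of $D$), so that the inverses are close in operator norm since the smallest eigenvalue is bounded below by $\phi$ via Condition (C4); then another Gaussian comparison inequality (now for the full covariance, e.g. the bound in terms of the max-norm difference of covariance matrices times $\log M_2$) finishes it. The main obstacle I anticipate is the consistency rate for $\widehat\sigma_Q^2$: the estimator $\widehat Q_{ij}$ subtracts the Nadaraya--Watson fit $\widehat{\mathbb{E}}(Y_{ij}|X_{ij1},Z_{ij})$, which itself is built from the estimated $\widehat Y_{ij}$'s, so one must carefully propagate the kernel-density error from $\widehat f$ into both $\widehat Y$ and the regression estimator and check that the compound error remains $o_p(1/\sqrt{\log n})$ under \eqref{Theorem:GA:eta:condition}; disentangling this nested nonparametric dependence, while keeping track of the divergence of $\kappa$, $q_n$ and $1/m$, is the delicate part, but it is governed entirely by the bandwidth conditions already imposed and by the uniform convergence rates for Nadaraya--Watson estimators with higher-order kernels from Condition (C7).
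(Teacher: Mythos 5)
Your proposal is correct and follows essentially the paper's own route: bound $|\widehat\sigma_\epsilon^2-\sigma_\epsilon^2|$ (and $|\widehat\sigma_Q^2-\sigma_Q^2|$) using the consistency rates of $\widehat\theta$, $\widehat\eta$ and $\widehat Y$ under \eqref{Theorem:GA:theta:condition}, then transfer Theorems \ref{Theorem:GA:theta} and \ref{Theorem:GA:eta} by a Gaussian comparison --- the paper does the comparison via the max-norm covariance-difference bound of Chernozhukov et al.\ (2022), showing $\|(\widehat\sigma_\epsilon^2-\sigma_\epsilon^2)V^{-1}\|_{\max}=o_p(\log^{-2}(n)/n)$, whereas you use the equivalent scaling-plus-anti-concentration device (your Nazarov step has an $x$-dependent shift and so needs the routine truncation at $\|x\|_\infty\lesssim\sqrt{\log n}$, which only tightens your requirement to $|\widehat\sigma_\epsilon^2-\sigma_\epsilon^2|=o_p(1/\log n)$, still easily met). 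Your explicit handling of replacing $\mathbb{E}(Z^\top DZ/N)$ by $Z^\top DZ/N$ in the $\widehat\eta$ part is in fact more detailed than the paper, which omits that half as ``similar''.
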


\begin{remark}
The above theorem can be used to construct confidence intervals for the parameters $\alpha_{0i},~\beta_{0i}$ and $\eta_0$.
For example, if we set $\mathcal{L}_{1\alpha}=\widetilde e_i~(i\in[n])$
and $M_{1\alpha}=1$, then we have $\sup_{x\in\mathbb{R}}\big|\mathbb{P}\big(\sqrt{n-1} (\widehat\alpha_i-\alpha_{0i})\le x\big)
-\mathbb{P}\big(\widehat G_{1i}\le x\big)\big|\rightarrow 0$ by Theorem \ref{Theorem:hatGA:theta},
where $\widehat G_{1i}$ represents the $i$th element of $\widehat G_{1}.$
Let $z_{i,1-\nu/2}$ denote the upper $(\nu/2)$-quantile of the distribution $\widehat G_{1i},$
We can then construct the pointwise $(1-\nu)$ confidence interval for each $\alpha_{0i}$ as
$\big[\widehat\alpha_i-z_{i,1-\nu/2}/\sqrt{n-1},\ \widehat\alpha_i+z_{i,1-\nu/2}/\sqrt{n-1}\big].$
In addition, if we are interested in constructing confidence intervals for the difference $(\alpha_{0i}-\alpha_{0j})$ for any pair $(i,j),$
we can set $\mathcal{L}_{1\alpha}=\widetilde e_{ij}~(1\le i\neq j\le n)$ and  $M_{1\alpha}=1.$
Theorem \ref{Theorem:hatGA:theta} implies that
$\sup_{x\in\mathbb{R}}\big|\mathbb{P}\big(\sqrt{n-1}[\widehat\alpha_i-\widehat\alpha_j-(\alpha_{0i}-\alpha_{0j})]\le x\big)
-\mathbb{P}\big(\widehat G_{1i}-\widehat G_{1j}\le x\big)\big|\rightarrow 0.$
Let $z_{ij,1-\nu/2}$ be the upper $(\nu/2)$-quantile of the distribution $(\widehat G_{1i}-\widehat G_{1j}).$
Then, the pairwise confidence interval for $(\alpha_{0i}-\alpha_{0j})$ is defined as
$\big[\widehat\alpha_i-\widehat\alpha_j-z_{ij,1-\nu/2}/\sqrt{n-1},\ \widehat\alpha_i-\widehat\alpha_j+z_{ij,1-\nu/2}/\sqrt{n-1}\big].$
Similarly, we can obtain the confidence intervals for $\beta_{0i}$ and for the difference $(\beta_{0i}-\beta_{0j})$
by setting $\mathcal{L}_{1\beta}=\widetilde e_{n+i}$ and $\mathcal{L}_{1\beta}=\widetilde e_{n+i,n+j},$ respectively.
\end{remark}

\section{Applications}
\label{section:application}

This section presents several concrete applications of Theorem \ref{Theorem:hatGA:theta}.
Specifically, we consider the following applications:
(i) testing for sparse signals;
(ii) support recovery and (iii) testing the existence of the degree heterogeneity.

\subsection{Testing for sparse signals}\label{section:application:TS}
In this section, we focus on testing the following hypotheses:
\begin{align*}
& H_{0\alpha,S}: \alpha_{0i}=0~~ \text{for~all}~~ i\in[n]~~~~~~~
\text{ versus }~~ H_{1\alpha,S}: \alpha_{0i}\neq 0~~\text{for some}\ i\in[n];\\
& H_{0\beta,S}: \beta_{0i}=0~~ \text{for~all}~~  i\in[n-1]~~
\text{ versus }~~ H_{1\beta,S}: \beta_{0i}\neq 0~~\text{for some}\ i\in[n-1].
\end{align*}
Let $\widehat\zeta_{i}$ denote the $i$th diagonal element of $(n-1)\widehat\sigma_{\epsilon}^2V^{-1}.$
To test $H_{0\alpha,S},$  the matrix $\mathcal{L}_{1\alpha}$ is given by $\mathcal{L}_{1\alpha,S}=\text{diag}\{1/\widehat\zeta_{1}^{1/2},\dots,1/\widehat\zeta_{n}^{1/2}\}$ with $M_{1\alpha}=n.$
Under $H_{0\alpha},$ we have $\mathcal{L}_{1\alpha,S}(\widehat\alpha-\alpha_0)
=(\widehat \alpha_{1}/\widehat\zeta_{1}^{1/2},\dots,\widehat\alpha_{n}/\widehat\zeta_{n}^{1/2})^\top$.
To test $H_{0\beta,S},$ we define $\mathcal{L}_{1\beta}$ as
$\mathcal{L}_{1\beta,S}=\text{diag}\{1/\widehat\zeta_{n+1}^{1/2},\dots,1/\widehat\zeta_{2n-1}^{1/2}\}$ with $M_{1\beta}=n-1$.
Under $H_{0\beta,S}$, we obtain $\mathcal{L}_{1\beta}(\widehat\beta-\beta_0)
=(\widehat \beta_{1}/\widehat\zeta_{n+1}^{1/2},\dots,\widehat\beta_{n-1}/\widehat\zeta_{2n-1}^{1/2})^\top$.

We consider the following test statistics for $H_{0\alpha,S}$ and $H_{0\beta,S},$ respectively:
\begin{align*}
\mathcal{T}_{\alpha,S}=\max_{1\le i\le n}
\sqrt{n-1}|\widehat \alpha_{i}|/\widehat\zeta_{i}^{1/2}~~~\text{and}~~~
\mathcal{T}_{\beta,S}=\max_{1\le i\le n-1}\sqrt{n-1}|\widehat \beta_{i}|/\widehat\zeta_{n+i}^{1/2}.
\end{align*}
The test statistics $\mathcal{T}_{\alpha,S}$ and $\mathcal{T}_{\beta,S}$ are close to zero under the null hypotheses $H_{0\alpha}$ and $H_{0\beta,S}$.
Consequently, we reject $H_{0\alpha,S}$ if $\mathcal{T}_\alpha>c_{\alpha,S}(\nu),$
and reject $H_{0\beta}$ if $\mathcal{T}_\beta>c_{\beta,S}(\nu),$
where $c_{\alpha,S}(\nu)$ and $c_{\beta,S}(\nu)$ are the critical values.
Based on Theorem \ref{Theorem:hatGA:theta}, we employ a resampling method to obtain these critical values.
Specifically, we repeatedly generate normal random samples from $\widehat G_1$
and derive $c_{\alpha,S}(\nu)$ and $c_{\beta,S}(\nu)$ using the empirical distributions of
$\|\mathcal{L}_{1S}^\alpha\widehat G_1\|_{\infty}$ and $\|\mathcal{L}_{1S}^\beta\widehat G_1\|_{\infty},$
where $\mathcal{L}_{1S}^\alpha=(\mathcal{L}_{1\alpha,S},0_{M_{1\alpha}\times (n-1)})$
and $\mathcal{L}_{1S}^\beta=(0_{M_{1\beta}\times n},\mathcal{L}_{1\beta,S}).$
By Theorem \ref{Theorem:hatGA:theta}, we conclude that the proposed test is asymptotically of level $\nu$.

\begin{sloppypar}
We now examine the asymptotic power analysis of the aforementioned procedure.
To do this, we define the separation sets as follows:
$\mathcal{U}_{\alpha,S}(c)=\{\alpha=(\alpha_1,\dots,\alpha_{n})^\top: \ \max_{i\in[n]} |\alpha_{0i}|/\zeta_{i}^{1/2}>c\sqrt{\log(n)/(n-1)}\}$
and $\mathcal{U}_{\beta,S}(c)=\{\beta=(\beta_1,\dots,\beta_{n-1},0)^\top:\ \max_{i\in[n-1]} |\beta_{0i}|/\zeta_{n+i}^{1/2}>c\sqrt{\log(n-1)/(n-1)}\},$
where $\zeta_{i}$ is the $i$th element of $(n-1)\sigma_{\epsilon}^2V^{-1}.$
\end{sloppypar}

\begin{proposition}\label{proposition:power}
  Under the conditions in Theorem \ref{Theorem:hatGA:theta},  we have for any $\lambda_0>0$,
\begin{align*}
\inf_{\alpha_0\in \mathcal{U}_{\alpha,S}(\sqrt{2}+\lambda_0)}\mathbb{P}\big(\max_{i\in[n]} \sqrt{n-1}|\widehat\alpha_i|/\widehat\zeta_{i}^{1/2}>c_{\alpha,S}(\nu) \big)&\rightarrow 1,\\
\text{and}~~\inf_{\beta_0\in \mathcal{U}_{\beta,S}(\sqrt{2}+\lambda_0)}\mathbb{P}\big(\max_{i\in[n-1]} \sqrt{n-1}|\widehat\beta_i|/\widehat\zeta_{n+i}^{1/2}>c_{\beta,S}(\nu) \big)&\rightarrow 1
~~\text{as}~~n\rightarrow \infty.
\end{align*}
\end{proposition}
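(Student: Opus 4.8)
The plan is to prove the two assertions in the same way; I describe the argument for $\alpha_0$, the statement for $\beta_0$ following verbatim after replacing the index $i$ by $n+i$, the count $n$ by $n-1$, and $\log n$ by $\log(n-1)$. The overall strategy is the standard one for maximum-type tests against sparse alternatives: bound the critical value from above by essentially $\sqrt{2\log n}$, and exhibit one coordinate of the test statistic which, on the separation set, overshoots this bound with probability tending to one, uniformly over $\alpha_0$.

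\textbf{Step 1 (the critical value).} Conditionally on the data, $\widehat G_1\sim\mathcal{N}_{2n-1}(0,(n-1)\widehat\sigma_\epsilon^2 V^{-1})$ and $\widehat\zeta_i$ is exactly its $i$th diagonal entry, so each $\widehat G_{1i}/\widehat\zeta_i^{1/2}$ is marginally $\mathcal{N}(0,1)$; moreover the joint correlation matrix of $\{\widehat G_{1i}/\widehat\zeta_i^{1/2}\}_{i\le n}$ is the one induced by the deterministic matrix $V^{-1}$ — the scalar $(n-1)\widehat\sigma_\epsilon^2$ cancels — and hence does not depend on the data. A union bound then gives $\mathbb{P}(\max_{i\le n}|\widehat G_{1i}|/\widehat\zeta_i^{1/2}>t\mid\text{data})\le 2n\{1-\Phi(t)\}$ for all $t>0$, whence the deterministic bound $c_{\alpha,S}(\nu)\le\Phi^{-1}(1-\nu/(2n))=\sqrt{2\log n}\,(1+o(1))$. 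I also record that by the explicit form of $V^{-1}$ in the Supplementary Material the entries $[V^{-1}]_{ii}$ are between constant multiples of $1/n$, so $\zeta_i$ is bounded away from $0$ and $\infty$; and that $\widehat\sigma_\epsilon^2$ is consistent for $\sigma_\epsilon^2$ uniformly over $\alpha_0$ (a consequence of Theorem~\ref{theorem:consistency} and boundedness of the residuals, and part of the proof of Theorem~\ref{Theorem:hatGA:theta}). Since $\widehat\zeta_i/\zeta_i=\widehat\sigma_\epsilon^2/\sigma_\epsilon^2$ for every $i$, this yields $\max_{i\le n}|\widehat\zeta_i/\zeta_i-1|=o_p(1)$, uniformly in $\alpha_0$.

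\textbf{Step 2 (power).} Fix $\alpha_0\in\mathcal{U}_{\alpha,S}(\sqrt2+\lambda_0)$ and choose $i^\star=i^\star(\alpha_0)$ with $\sqrt{n-1}\,|\alpha_{0i^\star}|/\zeta_{i^\star}^{1/2}>(\sqrt2+\lambda_0)\sqrt{\log n}$. By Step 1, on an event of probability $1-o(1)$ (uniform in $\alpha_0$) we have $c_{\alpha,S}(\nu)\,\widehat\zeta_{i^\star}^{1/2}\le(\sqrt2+\lambda_0/2)\sqrt{\log n}\,\zeta_{i^\star}^{1/2}$, and on that event
\begin{align*}
\{\mathcal{T}_{\alpha,S}\le c_{\alpha,S}(\nu)\}
&\subseteq\{\sqrt{n-1}\,|\widehat\alpha_{i^\star}|\le(\sqrt2+\lambda_0/2)\sqrt{\log n}\,\zeta_{i^\star}^{1/2}\}\\
&\subseteq\{\sqrt{n-1}\,|\widehat\alpha_{i^\star}-\alpha_{0i^\star}|/\zeta_{i^\star}^{1/2}>(\lambda_0/2)\sqrt{\log n}\},
\end{align*}
the first inclusion because $\mathcal{T}_{\alpha,S}\ge\sqrt{n-1}|\widehat\alpha_{i^\star}|/\widehat\zeta_{i^\star}^{1/2}$, and the second by the triangle inequality and the defining property of $i^\star$. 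Hence $\mathbb{P}(\mathcal{T}_{\alpha,S}\le c_{\alpha,S}(\nu))\le\mathbb{P}(\sqrt{n-1}|\widehat\alpha_{i^\star}-\alpha_{0i^\star}|/\zeta_{i^\star}^{1/2}>(\lambda_0/2)\sqrt{\log n})+o(1)$. Now I would apply Theorem~\ref{Theorem:GA:theta} with $\mathcal{L}_{1\alpha}=\widetilde e_{i^\star}$ and $M_{1\alpha}=1$: it shows $\sqrt{n-1}(\widehat\alpha_{i^\star}-\alpha_{0i^\star})/\zeta_{i^\star}^{1/2}$ is asymptotically $\mathcal{N}(0,1)$, and because for a single coordinate this reduces to a Lindeberg-type argument for a weighted sum of the mean-zero terms $\epsilon_{kl}$ with uniformly comparable variances, the rate is uniform in $i^\star$ and $\alpha_0$. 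Thus, for every fixed $M$ and all $n$ large enough that $(\lambda_0/2)\sqrt{\log n}>M$, the displayed probability is at most $2\{1-\Phi(M)\}+o(1)$; letting $n\to\infty$ and then $M\to\infty$ gives $\sup_{\alpha_0\in\mathcal{U}_{\alpha,S}(\sqrt2+\lambda_0)}\mathbb{P}(\mathcal{T}_{\alpha,S}\le c_{\alpha,S}(\nu))\to0$, which is the first claim; the second follows identically using $\mathcal{L}_{1\beta}=\widetilde e_{i^\star}$ and the diagonal entry $\widehat\zeta_{n+i^\star}$.

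\textbf{Main obstacle.} The delicate point is the uniformity over the growing separation set, and within it the fact that the argument needs the estimation error at the one favourable coordinate $i^\star(\alpha_0)$ to be $O_p(1)$: the maximal error $\max_{i\le n}\sqrt{n-1}|\widehat\alpha_i-\alpha_{0i}|/\zeta_i^{1/2}$ is itself of order $\sqrt{2\log n}$, which is too large to close the gap between the separation rate $(\sqrt2+\lambda_0)\sqrt{\log n/(n-1)}$ and the threshold $\sqrt{2\log n/(n-1)}$ when $\lambda_0$ is small. This is why the argument must isolate a single coordinate before invoking the $M_{1\alpha}=1$ case of Theorem~\ref{Theorem:GA:theta}, and one has to verify that the Berry--Esseen-type rate there is uniform in the coordinate index and in $\alpha_0$. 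The remaining ingredients — the deterministic union bound on the critical value and the uniform consistency of $\widehat\sigma_\epsilon^2$ — are routine.
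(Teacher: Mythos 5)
Your proposal is correct and follows essentially the same route as the paper's own proof: both arguments isolate the single favourable coordinate $i^\star$, use the $M_{1\alpha}=1$ Gaussian approximation to show the estimation error at that coordinate is $o_p(\sqrt{\log n})$ after $\sqrt{n-1}$-scaling, and combine this with the fact that $c_{\alpha,S}(\nu)$ sits at the $\sqrt{2\log n}$ scale and with uniform consistency of $\widehat\zeta_i$ (the paper uses the weighted inequality $(a+b)^2\le(1+\varsigma^{-1})a^2+(1+\varsigma)b^2$ where you use the plain triangle inequality). The only real difference is how the critical value is controlled: you upper-bound it by a conditional union bound, $c_{\alpha,S}(\nu)\le\Phi^{-1}(1-\nu/(2n))$, whereas the paper pins down its asymptotic location via the Gumbel limit of the Gaussian maximum (Lemma 6 of Cai, Liu and Xia, 2014) — an inessential distinction here, since only the upper bound is needed for the power statement.
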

Proposition \ref{proposition:power} indicates that the proposed test can be triggered
even when only a single entry of $\alpha$ exceeds the threshold of $(\sqrt{2}+\lambda_0)\sqrt{\log(n)/(n-1)}.$
Consequently, the proposed test demonstrates sensitivity in detecting sparse alternatives.

\subsection{Support recovery}\label{section:application:SR}

Denote $\mathcal{S}_{0\alpha}=\{j\in[n]: \alpha_{0j}\neq 0\}$ and $\mathcal{S}_{0\beta}=\{j\in[n-1]: \beta_{0j}\neq 0\}$ as
the supports of $\alpha_0$ and $\beta_0,$ respectively.
Let $s_{0\alpha}=|\mathcal{S}_{0\alpha}|$ and $s_{0\beta}=|\mathcal{S}_{0\beta}|.$
When the null hypothesis $H_{0\alpha,S}$ and $H_{0\beta,S}$ are rejected,
recovering the supports $\mathcal{S}_{0\alpha}$ and $\mathcal{S}_{0\beta}$ is of great interest in practice.
Our support recovery procedure employs a proper threshold $t$ defined in the sets:
\begin{align*}
 \widehat{\mathcal{S}}_{\alpha}(t)=&\bigg\{i\in[n]: |\widehat\alpha_i|>\sqrt{t\widehat\zeta_{i}\log(n)/(n-1)}\bigg\}\\
\text{and}~~\widehat{\mathcal{S}}_{\beta}(t)=&\bigg\{i\in [n-1]: |\widehat\beta_i|>\sqrt{t\widehat\zeta_{n+i}\log(n-1)/(n-1)}\bigg\}.
\end{align*}
This strategy is widely utilized for recovering sparse signals in high-dimensional data contexts,
and it is also referenced in \cite{shao} for the analysis of undirected network data.
In Proposition \ref{proposition:SR}, we demonstrate that the aforementioned support recovery procedure is consistent
when the threshold value is set to $t = 2.$
To this end, we define ${\mathcal{G}}_{\alpha}(t_0)=\{\alpha: \min_{i\in \mathcal{S}_{0\alpha}} |\alpha_i|/\zeta_{i}^{1/2}>t_0\sqrt{\log (n)/(n-1)}\}$
and $\mathcal{G}_{\beta}(t_0)=\{\beta: \min_{i\in \mathcal{S}_{0\beta}} |\beta_i|/\zeta_{n+i}^{1/2}>t_0\sqrt{\log (n-1)/(n-1)}\}.$
Let $\mathcal{G}_{\alpha}^*=\{\alpha: \sum_{i=1}^{n}\mathbb{I}\{\alpha_i\neq 0\}=s_{0\alpha}\}$
and $\mathcal{G}_{\beta}^*=\{\beta: \sum_{i=1}^{n-1}\mathbb{I}\{\beta_i\neq 0\}=s_{0\beta}\}$
denote the classes of $s_{0\alpha}$-sparse and $s_{0\beta}$-sparse vectors, respectively.

\begin{proposition}\label{proposition:SR}
Under the conditions in Theorem \ref{Theorem:hatGA:theta}, we have
\begin{align*}
\inf_{\alpha\in \mathcal{G}_{\alpha}(2\sqrt{2})}\mathbb{P}\big(\widehat{\mathcal{S}}_{\alpha}(2)=\mathcal{S}_{0\alpha}\big)\rightarrow 1~~~
\text{and}~~~\inf_{\beta\in \mathcal{G}_{\beta}(2\sqrt{2})}\mathbb{P}\big(\widehat{\mathcal{S}}_{\beta}(2)=\mathcal{S}_{0\beta}\big)\rightarrow 1
~~\text{as}~~n\rightarrow\infty.
\end{align*}
Moreover, if $s_{0\alpha}=o(n)$ and $s_{0\beta}=o(n),$
then for any $0<t<2,$
\begin{align*}
\sup_{\alpha\in \mathcal{G}_\alpha^*}\mathbb{P}\big(\widehat{\mathcal{S}}_{\alpha}(t)=\mathcal{S}_{0\alpha}\big)\rightarrow 0~~~
\text{and}~~~\sup_{\beta\in \mathcal{G}_\beta^*}\mathbb{P}\big(\widehat{\mathcal{S}}_{\beta}(t)=\mathcal{S}_{0\beta}\big)\rightarrow 0
~~\text{as}~~n\rightarrow\infty.
\end{align*}
\end{proposition}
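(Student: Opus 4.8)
The argument reduces both displays to the extremal behaviour of the standardized errors $T_i:=\sqrt{n-1}\,(\widehat\alpha_i-\alpha_{0i})/\widehat\zeta_i^{1/2}$, and uses two ingredients that hold with probability tending to one, uniformly over the classes in play. First, since $\widehat\zeta_i$ and $\zeta_i$ are the $i$th diagonal entries of $(n-1)\widehat\sigma_\epsilon^2V^{-1}$ and $(n-1)\sigma_\epsilon^2V^{-1}$, one has $\widehat\zeta_i/\zeta_i=\widehat\sigma_\epsilon^2/\sigma_\epsilon^2=1+o_p(1)$ for every $i$ by the consistency of $\widehat\sigma_\epsilon^2$ (established in the proof of Theorem~\ref{Theorem:hatGA:theta}); and, by the explicit formula for $V^{-1}$ in Section~B of the Supplement, the diagonal entries of $(n-1)V^{-1}$ are $1+o(1)$ uniformly while its off-diagonal entries are of strictly smaller order, so the correlation matrix of $\widehat G_1\sim\mathcal N_{2n-1}(0,(n-1)\widehat\sigma_\epsilon^2 V^{-1})$ is asymptotically diagonal. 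Second, applying Theorem~\ref{Theorem:hatGA:theta} with $\mathcal L_{1\alpha}=\operatorname{diag}\{\widehat\zeta_1^{-1/2},\dots,\widehat\zeta_n^{-1/2}\}$ and, separately, with $-\mathcal L_{1\alpha}$ — each having only $n$ rows, within the $M_{1\alpha}\le n$ regime of Condition~(C8) — identifies $\{\max_i T_i\le t\}$ and $\{\max_i(-T_i)\le t\}$ with orthant events whose probabilities agree, up to $o(1)$, with those of $\max_i\widehat\zeta_i^{-1/2}\widehat G_{1i}$ and $\max_i(-\widehat\zeta_i^{-1/2}\widehat G_{1i})$ (each $\widehat\zeta_i^{-1/2}\widehat G_{1i}$ being standard normal); combining this with a sharpened Mills-ratio union bound and a Berman-type extremal estimate for the asymptotically independent Gaussian surrogates yields, for every fixed $\delta>0$ and every index set $B\subseteq[n]$,
\[
\mathbb P\big(\max_{i\in B}|T_i|>\sqrt{2\log|B|}\big)\ \to\ 0\qquad\text{and}\qquad \mathbb P\big(\max_{i\in B}|T_i|<(1-\delta)\sqrt{2\log|B|}\big)\ \to\ 0 .
\]
All statements below are proved for $\widehat\alpha$; those for $\widehat\beta$ follow verbatim after $n\mapsto n-1$ and $\widehat\zeta_i\mapsto\widehat\zeta_{n+i}$.

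\emph{Consistency of $\widehat{\mathcal S}_\alpha(2)$.} Fix $\alpha\in\mathcal G_\alpha(2\sqrt2)$ and write $\tau_i:=\sqrt{2\widehat\zeta_i\log n/(n-1)}=\widehat\zeta_i^{1/2}\sqrt{2\log n/(n-1)}$ for the threshold of $\widehat{\mathcal S}_\alpha(2)$. For $i\notin\mathcal S_{0\alpha}$ one has $\alpha_{0i}=0$, so $|\widehat\alpha_i|=\widehat\zeta_i^{1/2}(n-1)^{-1/2}|T_i|\le\tau_i$ on the event $\{\max_j|T_j|\le\sqrt{2\log n}\}$, whence $i\notin\widehat{\mathcal S}_\alpha(2)$; by the first ingredient this event has probability $\to1$ uniformly over the class. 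For $i\in\mathcal S_{0\alpha}$ the separation condition reads $|\alpha_{0i}|>2\sqrt2\,\zeta_i^{1/2}\sqrt{\log n/(n-1)}$, so using the triangle inequality, $\zeta_i/\widehat\zeta_i=1+o_p(1)$, and $|\widehat\alpha_i-\alpha_{0i}|\le(\max_j|T_j|/\sqrt{2\log n})\,\tau_i$, one gets $|\widehat\alpha_i|\ge|\alpha_{0i}|-|\widehat\alpha_i-\alpha_{0i}|>\{2-\max_j|T_j|/\sqrt{2\log n}-o_p(1)\}\tau_i$. Here the constants are borderline ($2\sqrt2$ versus $\sqrt2$), so one refines the event to $\{\max_j|T_j|\le(1-c\,\log\log n/\log n)\sqrt{2\log n}\}$ for a fixed $c\in(0,\tfrac14)$, which still has probability $\to1$ (sharpened union bound plus Berman), and uses that $\widehat\sigma_\epsilon^2/\sigma_\epsilon^2-1$ is of smaller order than $\log\log n/\log n$; then $2-\max_j|T_j|/\sqrt{2\log n}-o_p(1)\ge1+c\,\log\log n/\log n-o_p(\log\log n/\log n)>1$ for all large $n$, so $|\widehat\alpha_i|>\tau_i$ and $i\in\widehat{\mathcal S}_\alpha(2)$. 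Combining the two cases, $\widehat{\mathcal S}_\alpha(2)=\mathcal S_{0\alpha}$ with probability $\to1$ uniformly over $\mathcal G_\alpha(2\sqrt2)$, which is the first display of the proposition.

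\emph{Failure of $\widehat{\mathcal S}_\alpha(t)$ for $t<2$.} Fix $t\in(0,2)$ and $\alpha\in\mathcal G_\alpha^*$; then the noise set $\mathcal S_{0\alpha}^c$ has cardinality at least $n-s_{0\alpha}=n(1+o(1))$. It suffices to exhibit a false positive with probability $\to1$, uniformly in $\alpha$. On $\mathcal S_{0\alpha}^c$ we have $\alpha_{0i}=0$, hence $\max_{i\in\mathcal S_{0\alpha}^c}\sqrt{n-1}\,|\widehat\alpha_i|/\widehat\zeta_i^{1/2}=\max_{i\in\mathcal S_{0\alpha}^c}|T_i|$, which by the second ingredient with $B=\mathcal S_{0\alpha}^c$ exceeds $(1-\delta)\sqrt{2\log(n-s_{0\alpha})}$ with probability $\to1$. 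Since $s_{0\alpha}=o(n)$ gives $\log(n-s_{0\alpha})=(1+o(1))\log n$, choosing $\delta$ so small that $(1-\delta)\sqrt2>\sqrt t$ makes this maximum exceed $\sqrt{t\log n}$ for all large $n$; thus some $i^\ast\in\mathcal S_{0\alpha}^c$ has $|\widehat\alpha_{i^\ast}|>\sqrt{t\widehat\zeta_{i^\ast}\log n/(n-1)}$, so $i^\ast\in\widehat{\mathcal S}_\alpha(t)\setminus\mathcal S_{0\alpha}$ and $\widehat{\mathcal S}_\alpha(t)\neq\mathcal S_{0\alpha}$. The bound is uniform over $\mathcal G_\alpha^*$ because the approximating covariance $(n-1)\widehat\sigma_\epsilon^2V^{-1}$ does not depend on the location $\alpha_0$, $\widehat\sigma_\epsilon^2$ is consistent for $\sigma_\epsilon^2$ regardless of $\alpha_0$, and $|\mathcal S_{0\alpha}^c|\ge n-s_{0\alpha}$ for all $\alpha\in\mathcal G_\alpha^*$. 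Taking suprema gives the second display.

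\emph{Main obstacle.} The crux is the lower-tail statement in the second ingredient: forcing the maximum of the $\Theta(n)$ noise-coordinate surrogates up to $(1-\delta)\sqrt{2\log n}$ with overwhelming probability. This needs a quantitative handle on the dependence carried by $V^{-1}$ — obtained from the explicit inverse in Section~B, which shows its off-diagonal entries to be of strictly smaller order than its diagonal entries, so that the surrogates are asymptotically uncorrelated and a Berman-type extremal theorem applies — together with a Gaussian anti-concentration argument and care in transferring the conclusion back through the orthant-type Gaussian approximation of Theorem~\ref{Theorem:hatGA:theta}; this last point is exactly why the one-sided maxima $\max_i T_i$ and $\max_i(-T_i)$ are treated separately, to remain within the $M_{1\alpha}\le n$ regime of Condition~(C8). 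A secondary, bookkeeping difficulty in the positive part is that the relevant constants are doubly borderline — the threshold constant $\sqrt2$ equals the Gaussian-maximum constant, and the signal margin $2\sqrt2-\sqrt2$ again equals it — so the sharp $(1-c\,\log\log n/\log n)\sqrt{2\log n}$ (rather than merely $(1+o(1))\sqrt{2\log n}$) form of the upper deviation, together with a sufficiently fast rate for $\widehat\sigma_\epsilon^2$, is genuinely required.
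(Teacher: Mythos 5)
Your proof of the first display is essentially sound and close in spirit to the paper's: there you only need \emph{upper} deviations of $\max_j|T_j|$, which follow from Theorem~\ref{Theorem:hatGA:theta} applied to $\pm\mathcal L_{1\alpha}$ plus a Mills-ratio union bound (no independence needed), and your $(1-c\log\log n/\log n)$ refinement with $c<1/4$ together with the $o_p(\log^{-2}n)$ rate of $\widehat\sigma_\epsilon^2$ handles the borderline constants, much as the paper does via the second-order level $2\log n-\tfrac12\log\log n$ and the inequality $|\alpha_{0i}|^2\le 2|\widehat\alpha_i-\alpha_{0i}|^2+2|\widehat\alpha_i|^2$. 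The genuine gap is in your second ingredient and hence in the second display. Your argument rests on the claim that the off-diagonal entries of $(n-1)V^{-1}$ are of strictly smaller order than its diagonal entries, so that the coordinates of $\widehat G_1$ are asymptotically uncorrelated and a Berman-type lower bound at the $\sqrt{2\log|B|}$ scale applies. This contradicts the paper's explicit inverse of $V$ (and its approximation $\breve S$): within the $\alpha$-block the diagonal entries of $(n-1)V^{-1}$ tend to $2$ while the off-diagonal entries tend to $1$ (same for the $\beta$-block), i.e.\ $(n-1)V^{-1}\approx I+vv^\top$ with $v=(\mathbf 1_n^\top,-\mathbf 1_{n-1}^\top)^\top$, reflecting the common component $\tfrac1{n-1}\sum_{k\ne n}\epsilon_{kn}$ induced by the constraint $\beta_n=0$. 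So the standardized coordinates have pairwise correlations tending to $1/2$, not $0$. For such an (essentially equicorrelated) surrogate the maximum concentrates near $\sqrt{(1-\rho)2\log|B|}=\sqrt{\log|B|}$ plus an $O_p(1)$ common shift, so your lower-deviation statement $\mathbb P\big(\max_{i\in B}|T_i|<(1-\delta)\sqrt{2\log|B|}\big)\to0$ fails for every $\delta<1-1/\sqrt2$; and since your second-display argument requires choosing $\delta$ with $(1-\delta)\sqrt2>\sqrt t$, it only yields failure of the threshold for $t<1$, not for all $t<2$.

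For comparison, the paper gets the needed lower extremal bound by a different route: it restricts attention to a subset of null coordinates of cardinality $\lfloor n^{t_1}\rfloor$ with $t/2<t_1<1$ and invokes the Gumbel limit for maxima of dependent Gaussians (Lemma~6 of Cai, Liu and Xia, 2014) for the surrogate over that subset, concluding that $\max(n-1)|\widehat\alpha_i|^2/\widehat\zeta_i>t_2\log n$ for some $t<t_2<2t_1$, which produces a false positive. In other words, the lower bound is imported from an extreme-value result for \emph{dependent} Gaussian vectors rather than derived from asymptotic independence. If you wish to keep your route, you must either verify the hypotheses of such a lemma for the actual correlation structure of $\widehat G_1$, or handle the rank-one common component explicitly (e.g.\ by conditioning on $\tfrac1{n-1}\sum_{k\ne n}\epsilon_{kn}$ and applying the independent-coordinate extremal argument to the residual part); as written, the asymptotic-diagonality premise is false and the step it supports does not go through.
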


The first part of Proposition \ref{proposition:SR} shows that
with a probability approaching 1, $\widehat{\mathcal{S}}_{\alpha}(2)$ and $\widehat{\mathcal{S}}_{\beta}(2)$
can accurately recover the supports $\mathcal{S}_{0\alpha}$
and ${\mathcal{S}}_{0\beta}$ uniformly over the collections $\mathcal{G}_\alpha(2\sqrt{2})$ and $\mathcal{G}_\beta(2\sqrt{2}),$ respectively.
The second part of Proposition \ref{proposition:SR} states the optimality of the threshold parameter $t=2$.
In other words, if the threshold parameter $t$ is set below $2,$ some zero entries of $\alpha_0$ and $\beta_0$
will be retained in $\widehat{\mathcal{S}}_{\alpha}(t)$ and $\widehat{\mathcal{S}}_{\beta}(t),$ leading to an overfitting phenomenon.

\subsection{Testing for degree heterogeneity}\label{section:application:TDH}

Degree heterogeneity is an important feature of real-world networks.
However, it is not always clear whether a subnetwork exhibits degree heterogeneity or not.
Therefore, we consider the following hypotheses:
\begin{align*}
& H_{0\alpha,D}: \alpha_{0i}=\alpha_{0j}~~\text{for~all}~i,j\in\mathcal{G}_1~\;
\text{ versus }~~H_{1\alpha,D}: \alpha_{0i}\neq \alpha_{0,j}~\;\text{for some}\ i,j\in \mathcal{G}_1;\\
& H_{0\beta,D}: \beta_{0i}=\beta_{0,j}~~\text{for~all}~i,j\in\mathcal{G}_2~
\text{ versus }~~H_{1\beta,D}: \beta_{0i}\neq \beta_{0,i+1}~~\text{for some}\ i\in \mathcal{G}_2,
\end{align*}
where $\mathcal{G}_1$ and $\mathcal{G}_2$ are prespecified subsets of $[n]$ and $[n-1]$, respectively.
Under $H_{0\alpha,D}$, the observed subnetwork does not exhibit out-degree heterogeneity,
while  under $H_{0\beta,D}$, it does not display in-degree heterogeneity.
To test $H_{0\alpha,D}$ and $H_{0\beta,D},$ we apply the following test statistics:
\begin{align*}
 \widetilde{\mathcal{T}}_{\alpha,D}=\max_{i<j\in\mathcal{G}_1} \sqrt{n-1}|\widehat\alpha_i-\widehat\alpha_{j}|/\widehat\zeta_{ij,D}^{1/2}~~~\text{and}~~~
 \widetilde{\mathcal{T}}_{\beta,D}=\max_{i<j\in\mathcal{G}_2} \sqrt{n-1}|\widehat\beta_i-\widehat\beta_{j}|/\widehat\zeta_{n+i,n+j,D}^{1/2},
\end{align*}
where $\widehat\zeta_{ij,D}=e_{ij} [(n-1)\widehat{\sigma_{\epsilon}}^2V^{-1}]e_{ij}^\top$ and $e_{ij}=e_i-e_j.$
However, the tests $\widetilde{\mathcal{T}}_{\alpha,D}$ and $\widetilde{\mathcal{T}}_{\beta,D}$ are computationally intensive.
To overcome this issue, we propose a simpler test for the null hypotheses.
Specifically, define $\mathcal{L}_{1\alpha}=\mathcal{L}_{1\alpha,D}$ and $\mathcal{L}_{1\beta}=\mathcal{L}_{1\beta,D},$
where the $i$th row of $\mathcal{L}_{1\alpha,D}$ for $i\in[n-1]$ and $\mathcal{L}_{1\beta,D}$ for $i\in[n-2]$ is $\widetilde e_i-\widetilde e_{i+1}.$
Then, we have  $\mathcal{L}_{1\alpha,D}(\widehat\alpha-\alpha_0)
=(\widehat \alpha_{1}-\widehat\alpha_{2},\dots,\widehat \alpha_{n-1}-\widehat\alpha_{n})^\top$ under $H_{0\alpha,D},$
and $\mathcal{L}_{1\beta,D}(\widehat\beta-\beta_0)=(\widehat \beta_{1}-\widehat\beta_{2},\dots,\widehat \beta_{n-2}-\widehat\beta_{n-1})^\top$
under $H_{0\beta,D}.$ Furthermore, we define
\begin{align*}
\mathcal{T}_{\alpha,D}=\max_{i\in \mathcal{G}_1}
\sqrt{n-1}|\widehat \alpha_{i}-\widehat \alpha_{i+1}|/\widehat\zeta_{i,D}^{1/2}~~~\text{and}~~~
\mathcal{T}_{\beta,D}=\max_{i\in \mathcal{G}_2}\sqrt{n-1}|\widehat \beta_{i}-\widehat \beta_{i+1}|/\widehat\zeta_{n+i,D}^{1/2},
\end{align*}
where $\widehat\zeta_{i,D}=e_{i,i+1} [(n-1)\widehat{\sigma}^2V^{-1}]e_{i,i+1}^\top.$

Based on $\mathcal{T}_{\alpha,D}$ and $\mathcal{T}_{\beta,D},$ the proposed tests are defined as follows.
First, we rearrange the indices of the nodes and recalculate the tests $\mathcal{T}_{\alpha,D}$ and $\mathcal{T}_{\beta,D}.$
This procedure is then repeated $\widetilde {M}$ times, where $\widetilde M$ is a prespecified constant.
Finally, we obtain the tests $\mathcal{T}_{\alpha,D}(\widetilde{M})$ and $\mathcal{T}_{\beta,D}(\widetilde{M})$ using
\[
\mathcal{T}_{\alpha,D}(\widetilde{M})=\max_{1\le s\le \widetilde{M}}{\mathcal{T}_{\alpha,D}^{s}}~~~\text{and}~~~\mathcal{T}_{\beta,D}(\widetilde{M})=\max_{1\le s\le \widetilde{M}}{\mathcal{T}_{\beta,D}^{s}},
\]
where $\mathcal{T}_{\alpha,D}^{s}$ and $\mathcal{T}_{\beta,D}^{s}$ denote the tests $\mathcal{T}_{\alpha,D}$
and $\mathcal{T}_{\beta,D}$ calculated at the $s$th replication, respectively.
It can be observed that $\mathcal{T}_{\alpha,D}(\widetilde{M})$ and $\mathcal{T}_{\beta,D}(\widetilde{M})$ approach zero under the null hypotheses
$H_{0\alpha,D}$ and $H_{0\beta,D},$ respectively.
Therefore, we reject $H_{0\alpha,D}$ if $\mathcal{T}_{\alpha,D}(\widetilde{M})> c_{\alpha,D}(v)$
and reject $H_{0\beta,D}$ if $\mathcal{T}_{\beta,D}(\widetilde{M})> c_{\beta,D}(v),$
where $c_{\alpha,D}(v)$ and $ c_{\beta,D}(v)$ are the critical values.
In practice, $c_{\alpha,D}(\nu)$ and $c_{\beta,D}(\nu)$ can be obtained using a resampling method discussed in Section \ref{section:application:TS}.

\begin{remark}
It is worth noting that the tests $\mathcal{T}_{\alpha,D}(\widetilde{M})$ and $\mathcal{T}_{\beta,D}(\widetilde{M})$
may exhibit reduced power compared to $\widetilde{\mathcal{T}}_{\alpha,D}$ and $\widetilde{\mathcal{T}}_{\beta,D}.$
However, our simulation studies indicate that the tests $\mathcal{T}_{\alpha,D}(\widetilde{M})$ and $\mathcal{T}_{\beta,D}(\widetilde{M})$
with $\widetilde{M}=3$ can be comparable to $\widetilde{\mathcal{T}}_{\alpha,D}$ and $\widetilde{\mathcal{T}}_{\beta,D}.$
A more detailed comparison of these tests is provided in Section \ref{section:SM:TD}.
\end{remark}

\section{Numerical studies}
\label{section:simulations}

It is well known that bandwidth significantly impacts the finite sample performance of kernel smoothing methods.
Inspired by \cite{lewbel1998}, we propose a bandwidth selection procedure that has proven effective in our simulation studies.
Specifically, let $\delta$ be an arbitrarily chosen constant.
By Theorem \ref{Theorem:LSE}, we can demonstrate that
$\delta=\mathbb{E}\big\{\big[\mathbb{I}(X_{1ij}+\delta>0)-\mathbb{I}(X_{1ij}>0)\big]/f(X_{1ij}|Z_{ij})\big\}.$
Define $\widehat\delta(h)=N^{-1}\sum_{1\le i\neq j\le n}\big[\mathbb{I}(X_{1ij}+\delta>0)-\mathbb{I}(X_{1ij}>0)\big]/\widehat f(X_{1ij}|Z_{ij}).$
We can then obtain an estimate of $h$ as
$\widehat h=\text{arg}\min_{h} \sum_{i=1}^{M_0} \big(\delta_i-\widehat\delta_i(h)\big)^2,$
where $\delta_i~(1\le i\le M_0)$ are prespecified grid points on the interval $(0,1]$,
and $M_0$ is a prespecified integer.
In subsequent simulation studies, we set $\delta_i=i/M_0$ and $M_0=10$ for the bandwidth selection procedure.
Next, we describe the kernel function utilized in our study.
We employ the biweight product kernel for parameter estimation, defined as follows:
$\mathcal{K}_{z}(z_1,,\dots,z_{p_1})=\Pi_{l=1}^{p_1}(15/16)(1-z_l^2)^2\mathbb{I}(|z_l|\le 1).$
A similar biweight product kernel is utilized for $\mathcal{K}_{xz}(x,z)$.
The rationale for this choice is its computational efficiency and generally favorable performance (see e.g., \citealp{hardle1992bandwidth}).

\subsection{Evaluating asymptotic properties}
\label{section:CN}

In this section, we conduct simulation studies to assess the finite sample performance of the proposed method.
The covariates $Z_{ij}~(1\le i\neq j\le n)$ are independently generated from a bivariate normal distribution
with mean 0 and covariance matrix $\Sigma\in\mathbb{R}^{2\times 2}.$
Denote $\tilde \sigma_{ij}$ as the $(i,j)$th entry of $\Sigma$.
Here, we set $\tilde\sigma_{11}=\tilde\sigma_{22}=1$ and $\tilde\sigma_{12}=\tilde\sigma_{21}=0.25.$
We then generate $X_{ij1}$ using $X_{ij1}=Z_{ij}^\top b+\mathcal{E}_{ij},$
where $b=(0.5,-0.5)^\top$ and $\mathcal{E}_{ij}~(1\le i\neq j\le n)$ are independently drawn from the standard normal distribution.
For each $i\in[n]$, the parameter $\alpha_{0i}$ is defined as
$\alpha_{0i}=-0.25\log(n)+(i-1)[0.25\log(n)+\rho_1\log(n)]/(n-1),$
where $\rho_1$ is a parameter that controls network sparsity.
We set $\beta_{0i}=\alpha_{0i}$ for $i\in[n-1]$
and $\beta_{0n}=0$ to ensure the identification of model \eqref{model}.
The parameter $\eta_0=(\eta_{01},\eta_{02})^\top=(-0.5,0.5)^\top.$
For the noises $\varepsilon_{ij}~(1\le i\neq j\le n),$ we consider the following three scenarios:
(i) $\varepsilon_{ij}$ are independently generated from the standard normal distribution $\mathcal{N}_1(0,1)$;
(ii) $\varepsilon_{ij}$ are independently generated from a logistic distribution
with the cumulative distribution function given by $[1+\exp\{-(x-\mu)/\upsilon\}]^{-1}$.
Here, we set $\mu=0$ and $\upsilon=1/2$, denoted by $\text{Logistic}(0,1/2)$;
and (iii) $\varepsilon_{ij}$ are independently generated from $\mathcal{N}_1(-0.3,0.91)$
with a probability of 0.75 and from $\mathcal{N}_1(0.9,0.19)$ with a probability of $0.25$, denoted as $\text{MNorm}_1.$
The first case corresponds to the probit regression model, while the second case corresponds to a generalized logistic regression model.
The last case is a mixture of normal distributions with the overall mixture having a mean of zero and a variance of one,
which is designed to yield a distribution that is both skewed and bimodal.
We set $n=50$ and $100$. For each configuration, we replicate 1000 simulations.

The value of $z_{1-\nu/2}$ is obtained using 10000 bootstrap iterations.
The results for $\alpha_{0j}~(j=1,n/2,n),$ $\alpha_{n/5,4n/5},$
$\alpha_{n/2,n/2+1}$ and $\eta_{0}$ are presented in Tables \ref{Tab:CN:50} and \ref{Tab:CN:100}.
Here, $\alpha_{i,j}=\alpha_{0i}-\alpha_{0j}.$
The results for $\widehat\beta$ are similar to those of $\widehat\alpha,$
which are omitted here to save space.
Tables \ref{Tab:CN:50} and \ref{Tab:CN:100} report the bias (Bias),
defined as the difference between the sample means of the proposed estimates and the true values,
the standard deviations (SDs) characterizing sample variations over 1000 simulations,
and the $95\%$ empirical coverage probability (CP).
We see that the proposed estimators are nearly unbiased
and the SDs decrease as the sample size increases.
The $95\%$ empirical coverage probabilities are reasonable.
{\color{black}Figure S4 in the Supplementary Material further presents the histogram of the standardized $\widehat\alpha_{n/2},$
which indicates that the distribution of the standardized estimators can be closely approximated by the standard normal distribution.
The results for the other estimators are similar and are therefore not presented here.
Additionally, we find that the performance of the proposed method is comparable
across the different distributions for $\varepsilon_{ij}$.
The primary reason is that in our simulation studies, $q_n$ is approximately $0.5\log(n)$,
which leads to that the residuals $(Y_{ij}-\alpha_i-\beta_j-Z_{ij}^\top\eta)$ exhibit light-tailed behavior.
Consequently, the least squares method generally performs well in this scenario.

We also compare the proposed method with \cite{yan2019}.
They specified the standard logistic distribution for the noise with $\mu=0$ and $\upsilon=1$,
denoted by $\text{Logistic(0,1)}$.
To ensure a fair comparison, when implementing the method developed in \cite{yan2019},
we constrain $\gamma_{01}$ to 1 while allowing $\eta_0$ to remain free.
Additionally, we utilize the bias-corrected formula for $\eta_0$.
Note that although we specify a logistic model in case (ii), the parameter $\upsilon$ is set to $1/2$.
Therefore, under the setting of case (ii), the model remains misspecified for \cite{yan2019}.
The results are presented in Table \ref{Tab:CYan}.
We observe the following: (1) When the noise distribution deviates from the standard logistic distribution,
the bias of our estimator approaches zero,
whereas the bias in their estimator for degree parameters is not negligible;
(2) When the standard logistic distribution is correctly specified,
their method is better than ours, even though
 both methods consistently estimate the unknown parameters;
(3) Their estimator for $\eta_0$ seems unbiased under the constraint $\gamma_{01}=1$,
even when the noise distribution is misspecified.
}

\begin{table}[!htpb]{\centering
{\footnotesize
\caption{The results of bias, standard deviation (SD) and $95\%$ empirical coverage probability (CP) of the estimators with $n=50$.}\label{Tab:CN:50}
\begin{tabular}{ccrcccrcccrcc}
\hline
& &\multicolumn{3}{c}{$\rho_1=0$} && \multicolumn{3}{c}{$\rho_1=0.1$} && \multicolumn{3}{c}{$\rho_1=0.2$} \\
\cline{3-5} \cline{7-9} \cline{11-13}
$\varepsilon_{ij}$ & & Bias & SD &  CP($\%$) & & Bias & SD &  CP($\%$) &  & Bias & SD &  CP($\%$)\\
\hline
$\mathcal{N}_1(0,1)$&	$\alpha_{01}$	&	0.089 	&	0.448 	&	94.0 	&	&	0.096 	&	0.446 	&	93.0 	&	&	0.076 	&	0.418 	&	93.7 	\\
	&	$\alpha_{0\frac{n}{2}}$	&	$0.014$ 	&	0.416 	&	95.2 	&	&	$-0.010$ 	&	0.396 	&	95.2 	&	&	0.011 	&	0.384 	&	94.8 	\\
	&	$\alpha_{0n}$	&	$-0.015$ 	&	0.375 	&	97.3 	&	&	$-0.040$ 	&	0.377 	&	95.2 	&	&	$-$0.048 	&	0.388 	&	94.2 	\\
	&	$\alpha_{\frac{n}{5},\frac{4n}{5}}$	&	0.085 	&	0.461 	&	94.1 	&	&	0.086 	&	0.433 	&	93.3 	&	&	0.097 	&	0.441 	&	91.9 	\\
	&	$\alpha_{\frac{n}{2},\frac{n}{2}+1}$	&	$-0.014$ 	&	0.450 	&	94.1 	&	&	$-0.015$ 	&	0.426 	&	93.1 	&	&	0.019 	&	0.390 	&	94.1 	\\
	&	$\eta_{01}$	&	$-0.003$ 	&	0.038 	&	93.8 	&	&	0.007 	&	0.038 	&	94.0 	&	&	0.006 	&	0.038 	&	93.6 	\\
	&	$\eta_{02}$	&	0.001 	&	0.040 	&	94.4 	&	&	$-$0.004 	&	0.038 	&	93.6 	&	&	$-$0.006 	&	0.036 	&	95.2 	\\
\hline
$\text{Logistic}(0,1/2)$	&	$\alpha_{01}$	&	0.092 	&	0.420 	&	92.9 	&	&	0.045 	&	0.443 	&	91.5 	&	&	0.056 	&	0.395 	&	92.5 	\\
	&	$\alpha_{0\frac{n}{2}}$	&	0.009 	&	0.388 	&	95.7 	&	&	0.009 	&	0.364 	&	95.9 	&	&	$-$0.004 	&	0.364 	&	95.1 	\\
	&	$\alpha_{0n}$	&	$-$0.027 	&	0.381 	&	96.8 	&	&	$-$0.018 	&	0.343 	&	96.1 	&	&	$-$0.047 	&	0.375 	&	93.2 	\\
	&	$\alpha_{\frac{n}{5},\frac{4n}{5}}$	&	0.095 	&	0.425 	&	94.0 	&	&	0.039 	&	0.431 	&	93.0 	&	&	0.084 	&	0.412 	&	92.5 	\\
	&	$\alpha_{\frac{n}{2},\frac{n}{2}+1}$	&	$-$0.008 	&	0.406 	&	95.2 	&	&	0.009 	&	0.369 	&	94.2 	&	&	0.008 	&	0.356 	&	95.3 	\\
	&	$\eta_{01}$	&	0.001 	&	0.039 	&	93.3 	&	&	0.005 	&	0.036 	&	93.5 	&	&	0.010 	&	0.034 	&	95.1 	\\
	&	$\eta_{02}$	&	$-$0.001 	&	0.037 	&	93.7 	&	&	$-$0.007 	&	0.034 	&	95.2 	&	&	$-$0.009 	&	0.035 	&	95.1 	\\
\hline
$\text{Mnorm}_1$&	$\alpha_{01}$	&	0.103 	&	0.489 	&	92.9 	&	&	0.061 	&	0.481 	&	91.0 	&	&	0.067 	&	0.418 	&	94.1 	\\
	&	$\alpha_{0\frac{n}{2}}$	&	0.023 	&	0.424 	&	95.4 	&	&	0.008 	&	0.395 	&	95.5 	&	&	$-$0.008 	&	0.374 	&	96.0 	\\
	&	$\alpha_{0n}$	&	$-$0.015 	&	0.402 	&	96.0 	&	&	$-$0.022 	&	0.378 	&	96.6 	&	&	$-$0.050 	&	0.417 	&	92.8 	\\
	&	$\alpha_{\frac{n}{5},\frac{4n}{5}}$	&	0.089 	&	0.448 	&	94.3 	&	&	0.070 	&	0.428 	&	94.4 	&	&	0.095 	&	0.453 	&	93.6 	\\
	&	$\alpha_{\frac{n}{2},\frac{n}{2}+1}$	&	0.004 	&	0.458 	&	95.4 	&	&	$-$0.003 	&	0.402 	&	94.9 	&	&	$-$0.009 	&	0.383 	&	95.3 	\\
	&	$\eta_{01}$	&	$-$0.002 	&	0.040 	&	94.2 	&	&	0.006 	&	0.039 	&	93.5 	&	&	0.006 	&	0.037 	&	93.9 	\\
	&	$\eta_{02}$	&	0.001 	&	0.039 	&	94.1 	&	&	$-$0.006 	&	0.035 	&	95.4 	&	&	$-$0.005 	&	0.036 	&	96.2 	\\
\hline
\end{tabular}}}
\end{table}

\begin{table}[!htpb]{\centering
{\footnotesize
\caption{The results of bias, standard deviation (SD) and $95\%$ empirical coverage probability (CP) of the estimators with $n=100$.}\label{Tab:CN:100}
\begin{tabular}{ccrcccrcccrcc}
\hline
& &\multicolumn{3}{c}{$\rho_1=0$} && \multicolumn{3}{c}{$\rho_1=0.1$} && \multicolumn{3}{c}{$\rho_1=0.2$} \\
\cline{3-5} \cline{7-9} \cline{11-13}
$\varepsilon_{ij}$ & & Bias & SD &  CP($\%$) & & Bias & SD &  CP($\%$) &  & Bias & SD &  CP($\%$)\\
\hline
$\mathcal{N}_1(0,1)$ &	$\alpha_{01}$	&	0.094 	&	0.412 	&	93.6 	&	&	0.102 	&	0.359 	&	93.0 	&	&	0.074 	&	0.359 	&	92.9 	\\
 	&	$\alpha_{0\frac{n}{2}}$	&	0.004 	&	0.352 	&	96.4 	&	&	0.017 	&	0.292 	&	97.9 	&	&	0.005 	&	0.284 	&	97.2 	\\
 	&	$\alpha_{0n}$	&	$-0.038$ 	&	0.303 	&	98.3 	&	&	$-$0.023 	&	0.284 	&	97.2 	&	&	$-$0.051 	&	0.295 	&	95.8 	\\
 	&	$\alpha_{\frac{n}{5},\frac{4n}{5}}$	&	0.080 	&	0.432 	&	93.6 	&	&	0.086 	&	0.364 	&	92.5 	&	&	0.069 	&	0.360 	&	91.7 	\\
 	&	$\alpha_{\frac{n}{2},\frac{n}{2}+1}$	&	0.002 	&	0.414 	&	94.6 	&	&	0.060 	&	0.381 	&	93.1 	&	&	0.057 	&	0.339 	&	94.1 	\\
 	&	$\eta_{01}$	&	$-$0.010 	&	0.023 	&	92.2 	&	&	$-$0.001 	&	0.022 	&	93.4 	&	&	$-$0.001 	&	0.021 	&	96.1 	\\
 	&	$\eta_{02}$	&	0.011 	&	0.023 	&	92.2 	&	&	0.000 	&	0.022 	&	94.5 	&	&	$-$0.001 	&	0.020 	&	96.1 	\\
\hline
$\text{Logistic}(0,1/2)$ &	$\alpha_{01}$	&	0.094 	&	0.401 	&	93.1 	&	&	0.067 	&	0.346 	&	93.1 	&	&	0.073 	&	0.332 	&	93.9 	\\
	&	$\alpha_{0\frac{n}{2}}$	&	0.003 	&	0.327 	&	97.1 	&	&	$-$0.002 	&	0.289 	&	97.4 	&	&	0.004 	&	0.278 	&	96.5 	\\
	&	$\alpha_{0n}$	&	$-$0.028 	&	0.284 	&	98.3 	&	&	$-$0.020 	&	0.264 	&	97.3 	&	&	$-$0.038 	&	0.272 	&	96.5 	\\
	&	$\alpha_{\frac{n}{5},\frac{4n}{5}}$	&	0.092 	&	0.401 	&	92.6 	&	&	0.056 	&	0.358 	&	91.7 	&	&	0.058 	&	0.323 	&	92.0 	\\
	&	$\alpha_{\frac{n}{2},\frac{n}{2}+1}$	&	0.066 	&	0.384 	&	94.1 	&	&	0.054 	&	0.369 	&	92.7 	&	&	0.059 	&	0.312 	&	94.5 	\\
	&	$\eta_{01}$	&	$-$0.008 	&	0.022 	&	91.8 	&	&	0.000 	&	0.020 	&	96.0 	&	&	0.002 	&	0.019 	&	95.8 	\\
	&	$\eta_{02}$	&	0.008 	&	0.022 	&	93.2 	&	&	0.000 	&	0.020 	&	95.6 	&	&	$-$0.002 	&	0.019 	&	96.1 	\\
\hline
$\text{Mnorm}_1$	&	$\alpha_{01}$	&	0.089 	&	0.461 	&	93.7 	&	&	0.102 	&	0.402 	&	90.5 	&	&	0.102 	&	0.380 	&	92.4 	\\
	&	$\alpha_{0\frac{n}{2}}$	&	0.015 	&	0.364 	&	97.2 	&	&	$-$0.001 	&	0.305 	&	96.5 	&	&	0.004 	&	0.291 	&	97.2 	\\
	&	$\alpha_{0n}$	&	$-$0.030 	&	0.310 	&	98.5 	&	&	$-$0.035 	&	0.282 	&	97.6 	&	&	$-$0.037 	&	0.327 	&	93.7 	\\
	&	$\alpha_{\frac{n}{5},\frac{4n}{5}}$	&	0.074 	&	0.394 	&	94.7 	&	&	0.093 	&	0.353 	&	93.0 	&	&	0.064 	&	0.367 	&	91.8 	\\
	&	$\alpha_{\frac{n}{2},\frac{n}{2}+1}$	&	0.001 	&	0.402 	&	94.4 	&	&	0.064 	&	0.377 	&	93.2 	&	&	0.051 	&	0.325 	&	93.3 	\\
	&	$\eta_{01}$	&	$-$0.007 	&	0.024 	&	93.9 	&	&	0.000 	&	0.022 	&	94.0 	&	&	$-$0.001 	&	0.021 	&	95.6 	\\
	&	$\eta_{02}$	&	0.008 	&	0.023 	&	93.1 	&	&	0.001 	&	0.021 	&	95.8 	&	&	0.000 	&	0.020 	&	95.9 	\\
\hline
\end{tabular}}}
\end{table}

\begin{table}[!htpb]
\begin{center}
\caption{The comparison results of the bias of the estimators obtained by our method and \cite{yan2019}'s method with $n=100.$}\label{Tab:CYan}
{\footnotesize
\begin{tabular}{ccccccccccccc}
\hline
& &\multicolumn{2}{c}{$\mathcal{N}_1(0,1)$} && \multicolumn{2}{c}{$\text{Logistic}(0,1/2)$} && \multicolumn{2}{c}{$\text{MNorm}_1$}
& &\multicolumn{2}{c}{$\text{Logistic}(0,1)$}\\
\cline{3-4} \cline{6-7} \cline{9-10} \cline{12-13}
$\rho_1$ & & Ours & Yan et al. && Ours & Yan et al. &  & Ours & Yan et al. & & Ours & Yan et al.\\
\hline
0	&	$\alpha_{01}$		&	~~0.094 	&	$-0.501 $	&&	~~0.094 	&	$-0.626$ 	&&	~~0.089 	&	$-1.094$ && ~~0.105 & $-0.007$	\\
	&	$\alpha_{0n}$		&	$-0.038$ 	&	~~0.029 	&&	$-0.028 $	&	~~0.006 	&&	$-0.030$ 	&	~~0.041 && $-0.058$ & $-0.009$	\\
	&	$\alpha_{\frac{n}{5},\frac{4n}{5}}$		&	~~0.080 	&	$-0.330$ 	&&	~~0.092 	&	$-0.367$ 	&&	~~0.074 	&	$-0.664$ && ~~0.098 & $-0.027$	\\
	&	$\alpha_{\frac{n}{2},\frac{n}{2}+1}$		&	~~0.001 	&	$-0.272$ 	&&	~~0.066 	&	$-0.308$ 	&&	~~0.001 	&	$-0.532$ && $-0.043$ &  $-0.009$\\
	&	$\eta_{01}$		&	$-0.010$ 	&	~~0.001 	&&	$-0.008$ 	&	$-0.001$ 	&&	$-0.007$ 	&	$-0.001$ && $-0.039$ & $-0.001$
	\\
	&	$\eta_{02}$		&	~~0.011 	&	~~0.001 	&&	~~0.008 	&	~~0.000 	&&	~~0.008 	&	~~0.001 && ~~0.037 & ~~0.001	\\
0.1	&	$\alpha_{01}$		&	~~0.102 	&	$-0.497$ 	&&	~~0.067 	&	$-0.601$ 	&&	~~0.102 	&	$-1.021$ & & ~~0.096 & $-0.013$	\\
	&	$\alpha_{0n}$		&	$-0.023$ 	&	~~0.192 	&&	$-0.020$ 	&	~~0.223 	&&	$-0.035$ 	&	~~0.383 && ~~0.077 & ~~0.024	\\
	&	$\alpha_{\frac{n}{5},\frac{4n}{5}}$		&	~~0.086 	&	$-0.381$ 	&&	~~0.056 	&	$-0.482$ 	&&	~~0.093 	&	$-0.802$ &&  ~~0.092 & $-0.039$	\\
	&	$\alpha_{\frac{n}{2},\frac{n}{2}+1}$		&	~~0.060 	&	$-0.335$ 	&&	~~0.054 	&	$-0.394$ 	&&	~~0.064 	&	$-0.653$ && $-0.017$ & $-0.039$		\\
	&	$\eta_{01}$		&	$-0.001$ 	&	~~0.000 	&&	~~0.000 	&	~~0.000 	&&	~~0.000 	&	~~0.000 && $-0.034$ & ~~0.001	\\
	&	$\eta_{02}$		&	~~0.000 	&	$-0.002$ 	&&	~~0.000 	&	~~0.000 	&&	~~0.001 	&	~~0.000 && ~~0.031 & ~~0.001	\\
\hline
\end{tabular}}
\end{center}
\end{table}

\begin{figure}[!htpb]
\centering
\includegraphics[width=0.9\textwidth]{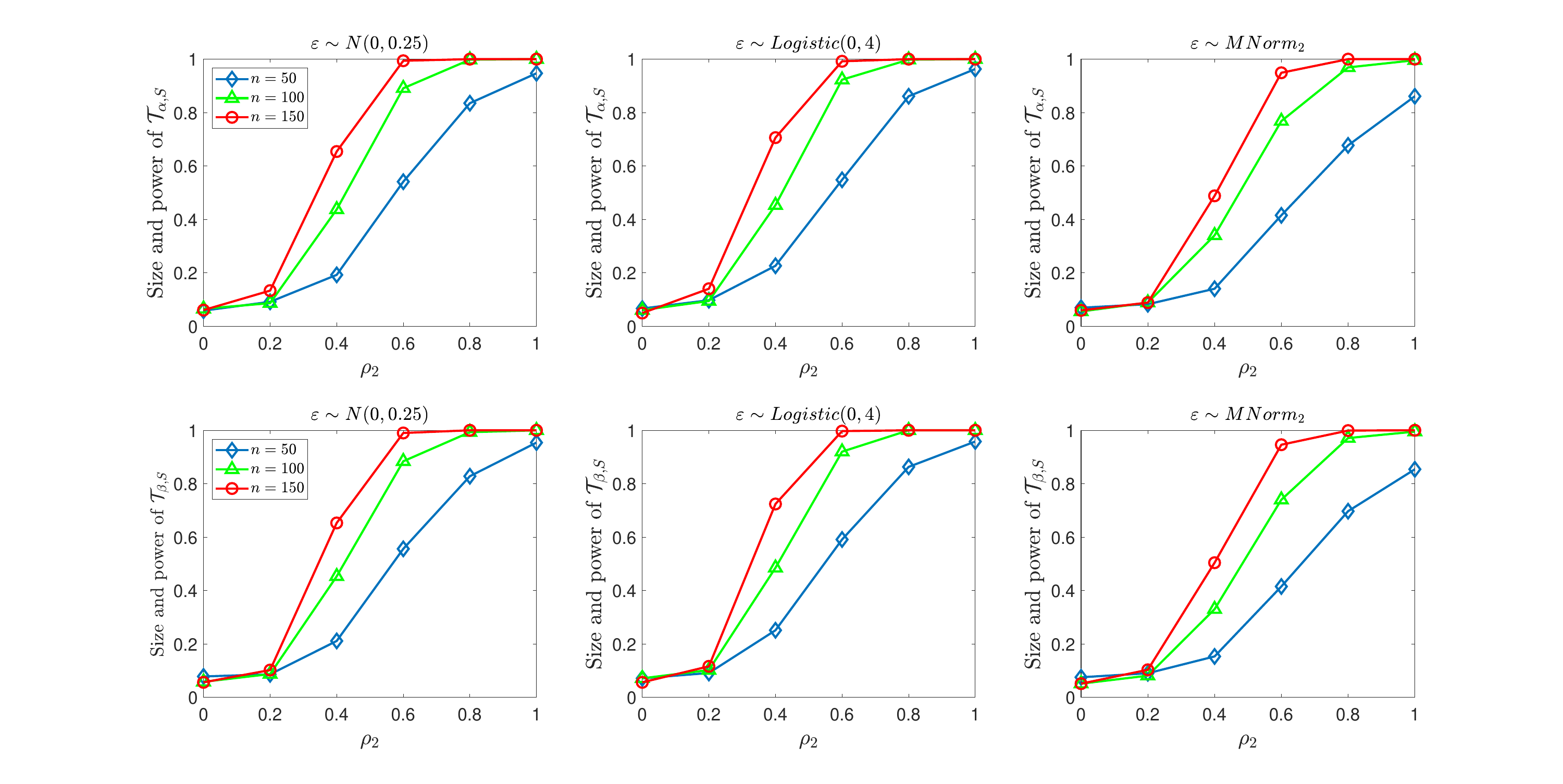}
\caption{Empirical size and power of $\mathcal{T}_{\alpha,S}$ and $\mathcal{T}_{\beta,S}.$}\label{Figure:SM:TS}
\end{figure}

\subsection{Testing for sparse signal}\label{section:SM:TS}
In this section, we evaluate the finite sample performance of $\mathcal{T}_{\alpha,S}$
and $\mathcal{T}_{\beta,S}$ in testing for sparse signals.
To do that, we set $\alpha_{0i}=-2i/n$ if $i\in [\rho_2 n]$ and $\alpha_{0i}=0$ otherwise.
For simplicity, we set $\beta_{0i}=\alpha_{0i}$ for $i\in[n-1]$ and $\beta_{0n}=0$.
The parameter $\rho_2$ is chosen from the set $\{0, 0.2, 0.4, 0.6, 0.8, 1\}.$
When $\rho_2=0,$ the null hypotheses $H_{0\alpha,S}$ and $H_{0\beta,S}$ are valid.
As $\rho_2$ increases from 0 to 1, the departures from the null hypotheses $H_{0\alpha,D}$ and $H_{0\beta,D}$ increase,
while the density of the generated network decreases from $50\%$ to $33\%$.
The sample size is set to $n=50, 100$ and $150.$
We take the significance level $\nu$ to be $0.05$.
The critical values are calculated using the bootstrap method with 10000 simulated realizations.
For the noise term $\varepsilon_{ij},$ we consider the following three settings:
(i) $\varepsilon_{ij}\sim \mathcal{N}_1(0,0.25)$;
(ii)  $\varepsilon_{ij}\sim\text{Logistic}(0,1/4)$
and (ii) $\varepsilon_{ij}$ are independently generated from $\mathcal{N}_1(-0.3,0.5)$
with a probability of 0.75 and from $\mathcal{N}_1(0.9,0.5)$ with a probability of $0.25$.
The corresponding distribution is denoted by $\text{MNorm}_2.$
All other settings are consistent with those in Section \ref{section:CN}.

Figure \ref{Figure:SM:TS} shows the empirical size and
power of the test statistics $\mathcal{T}_{\alpha,S}$ and $\mathcal{T}_{\beta,S}$ at the significance level of 0.05.
We observe that the empirical sizes of the proposed tests are close to the nominal level of $0.05,$
and the tests exhibit reasonable power in detecting deviations from the null hypothesis.
The powers of the tests increase as $\rho_2$ increases from $0.2$ to $1$.
Additionally, the powers of the tests improve as the sample size $n$ increases from $50$ to $150.$

Next, we evaluate the finite sample performance of the method developed for support recovery.
For this, we set $\alpha_{0}$ and $\beta_{0}$ as follows:
$\alpha_{0}=(-1,2,-2,1.5,-3,-1.5\mathds{1}_{d}^\top,0_{n-d-5}^\top)^\top$ and $\beta_{0}=(0_5^\top,-1,2,-2,1.5,-3,-1.5\mathds{1}_{d}^\top,0_{n-d-10}^\top)^\top,$
where $d=\lfloor n/15 \rfloor.$ We set $n=100$ and $150.$
To assess the accuracy of the support recovery, we adopt the following similarity measure \citep{cai2013two}:
$\mathcal{M}(\widehat{\mathcal{S}},\mathcal{S}_{0})=|\widehat{\mathcal{S}}\cap\mathcal{S}_{0}|/\sqrt{|\widehat{\mathcal{S}}||\mathcal{S}_0|},$
where $0\le \mathcal{M}(\widehat{\mathcal{S}},\mathcal{S}_{0})\le 1$.
A value of 0 indicates that the intersection of $\widehat{\mathcal{S}}$ and $\mathcal{S}_{0}$ is empty,
while a value of 1 indicates exact recovery.
The results are presented in Table \ref{Tab:Supp},
which summarizes the mean and the standard deviation of $\mathcal{M}(\widehat{\mathcal{S}},\mathcal{S}_{0})$,
as well as the counts of false positives (FP) and false negatives (FN) based on 1000 replications.
We observe that $\mathcal{M}(\widehat{\mathcal{S}}_{\alpha}(2),\mathcal{S}_{0\alpha})$
tends to approach one as $n$ increases.
Furthermore, the values of false positives and false negative are relatively low and decrease as $n$ increases.

\begin{table}[!htpb]
\begin{center}
\caption{The mean and standard deviation (SD) of $\mathcal{M}(\widehat{\mathcal{S}},\mathcal{S}_0)$, and the numbers of false positives (FP) and false negatives (FN).}\label{Tab:Supp}
{\small
\begin{tabular}{ccccccccccc}
\hline
& &\multicolumn{4}{c}{$\alpha_0$} && \multicolumn{4}{c}{$\beta_0$}\\
\cline{3-6} \cline{8-11}
$\varepsilon_{ij}$ & $n$ & Mean & SD &  FP & FN &  & Mean & SD &  FP & FN\\
\hline
$\mathcal{N}_1(0,0.25)$	&	$n=100$	&	0.978 	&	0.057 	&	0.229 	&	0.281 	&	&	0.977 	&	0.056 	&	0.264 	&	0.264 	\\
	&	$n=150$	&	0.989 	&	0.051 	&	0.207 	&	0.129 	&	&	0.986 	&	0.056 	&	0.230 	&	0.108 	\\
Logistic(0,1/4)	&	$n=100$	&	0.977 	&	0.062 	&	0.377 	&	0.263 	&	&	0.979 	&	0.060 	&	0.317 	&	0.255 	\\
	&	$n=150$	&	0.992 	&	0.021 	&	0.220 	&	0.048 	&	&	0.992 	&	0.019 	&	0.190 	&	0.056 	\\
$\text{Mnorm}_2$	&	$n=100$	&	0.962 	&	0.071 	&	0.350 	&	0.568 	&	&	0.957 	&	0.076 	&	0.383 	&	0.585 	\\
	&	$n=150$	&	0.984 	&	0.064 	&	0.262 	&	0.241 	&	&	0.983 	&	0.059 	&	0.233 	&	0.224 	\\
\hline
\end{tabular}}
\end{center}
\end{table}

\subsection{Testing for degree heterogeneity}\label{section:SM:TD}
In this section, we evaluate the finite sample performance of the tests
$\mathcal{T}_{\alpha,D}(\widetilde M)$ and $\mathcal{T}_{\beta,D}(\widetilde M)$ for assessing the presence of degree heterogeneity.
We consider $\mathcal{G}_1=[n]$ and $\mathcal{G}_2=[n-1]$.
Let $\alpha_{0i}=-\rho_3i/n$ for $i\in[n],$ $\beta_{0i}=\alpha_{0i}$ for $i\in[n-1]$ and $\beta_{0n}=0.$
Under these settings, we have $|\alpha_{0i}-\alpha_{0,i+1}|=\rho_3/n$, while $\max_{1\le i<j\le n}|\alpha_{0i}-\alpha_{0,j}|=\rho_3(1-1/n).$
In this analysis, we set $\widetilde M\in\{0, 1, 2, 3\}$ and $\rho_3\in\{0, 0.2, 0.4, 0.6\}.$
Note that $\mathcal{T}_{\alpha,D}(0)$ and $\mathcal{T}_{\beta,D}(0)$
are identical to the tests $\mathcal{T}_{\alpha,D}$ and $\mathcal{T}_{\beta,D},$ respectively,
and may not be sensitive to the signals $|\alpha_{0i}-\alpha_{0,i+1}|=\rho_3/n$ and $|\beta_{0i}-\beta_{0,i+1}|=\rho_3/n$.
Furthermore, the null hypotheses $H_{0\alpha,D}$ and $H_{0\beta,D}$ hold when $\rho_3=0.$
All other settings remain consistent with those used previously.

The results for $\mathcal{T}_{\alpha,D}(\widetilde{M})$ are summarized in Figure \ref{Figure:SM:compare},
while those for $\mathcal{T}_{\beta,D}(\widetilde{M})$ are similar and omitted here to save space.
We observe that the empirical sizes of all tests are close to the nominal level of $0.05.$
However, the test $\mathcal{T}_{\alpha,D}$ exhibits almost no power,
thus failing to detect deviations from the null hypothesis.
Additionally, Figure \ref{Figure:SM:compare} indicates that
the test $\mathcal{T}_{\alpha,D}(\widetilde M)$ with $\widetilde M\ge 2$
significantly enhances the power of $\mathcal{T}_{\alpha,D}$,
and the power of $\mathcal{T}_{\alpha,D}(\widetilde M)$ increases when $\widetilde M$ increases from $0$ to $3$.
Notably, $\mathcal{T}_{\alpha,D}(2)$ is comparable to $\mathcal{T}_{\alpha,D}(3)$.
Figure S5 in the Supplementary Material shows that the time required to compute the test $\mathcal{T}_{\alpha,D}(\widetilde M)$
may increase linearly with $\widetilde M$.
For instance, when $n=150$ and $\varepsilon_{ij} \sim \mathcal{N}(0,0.25)$,
the time required to calculate $\widetilde{T}_{\alpha,D}$ is approximately 450 seconds for each replication.
We also construct simulation studies to evaluate the finite sample performance of the proposed method
for the conditionally independent cases and weighted networks.
The results are summarized in Section C of the Supplementary Material,
which suggest that the proposed method works well.

\begin{figure}
\centering
\includegraphics[width=1\textwidth]{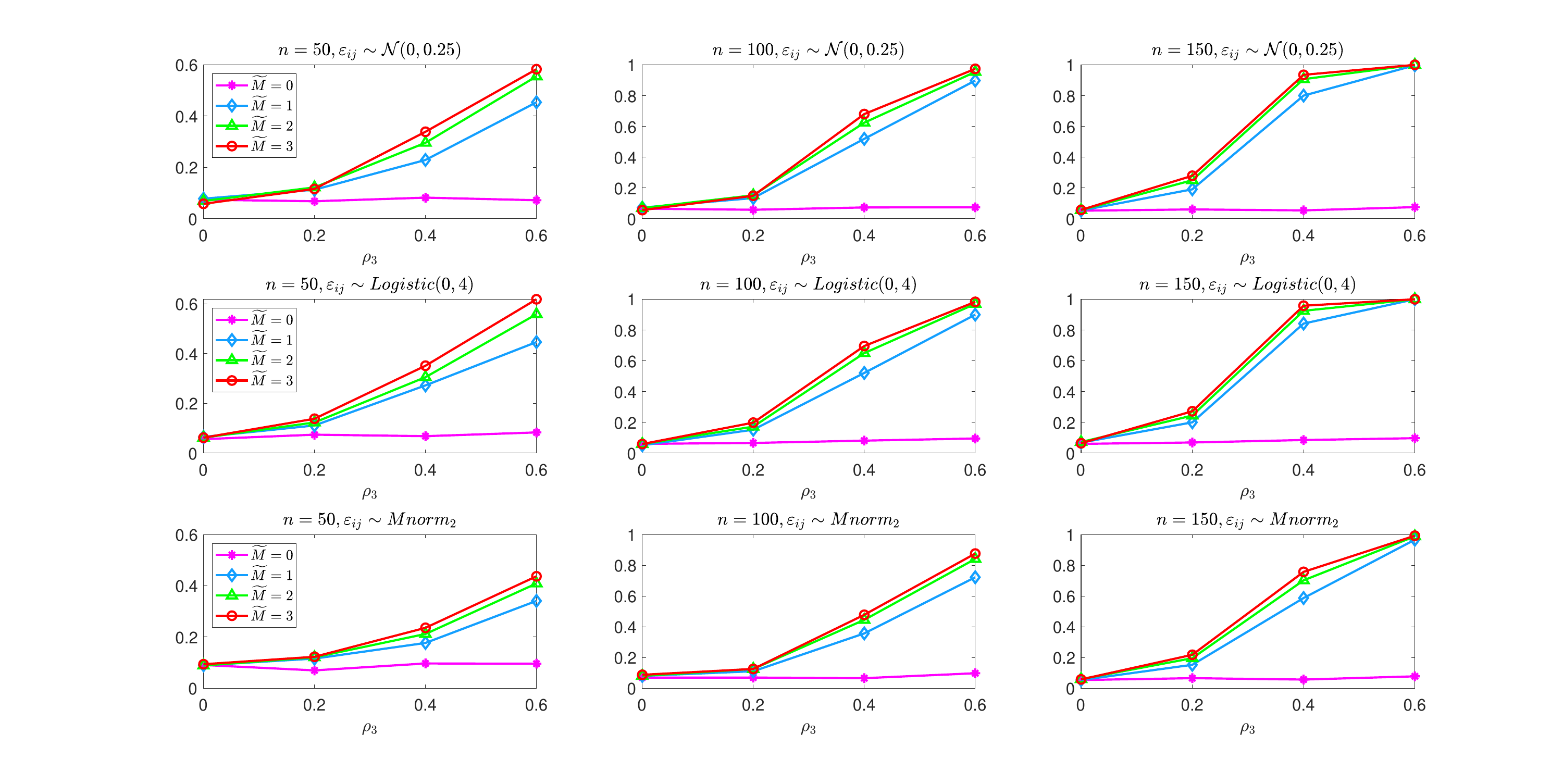}
\caption{Empirical size and power of $\mathcal{T}_{\alpha,D}(\widetilde M)$ with $\widetilde M=0,1,2$ and $3.$}\label{Figure:SM:compare}
\end{figure}

\subsection{Real data analysis}\label{Applicarion}

The data used in this study originates from Lazega's network analysis of corporate law partnerships conducted in a Northeastern U.S. corporate law firm between 1988 and 1991 in New England \citep{lazega2001collegial}.
The dataset contains 71 attorneys, including partners and associates of this firm.
In the following, we analyze the friendship network among these attorneys,
who were asked to identify their social contacts outside of work.
In addition, several covariates were collected for each attorney.
To avoid the  curse of dimensionality issue,
we focus on three specific covariates: gender (male and female), years with the firm, and age.
Prior to analysis, the variables for years with the firm and age are normalized by subtracting the mean and dividing by the standard deviation.
As in \cite{yan2019}, we define the covariates
for each pair as the absolute difference between continuous variables
and indicators to determine the equality of categorical variables.
{\color{black} We set age as the special regressor $X_{ij1}$.
To determine the sign of $\gamma_{01}$,
we partition the support of $X_{ij1}$ into 7 equal subintervals.
The values of $\text{Count}_k$ are 249, 149, 119, 22, 17, 4 and 0 for $1\le k\le 7$, respectively,
exhibiting a significant decreasing pattern as $k$ increases. Therefore, we set $\gamma_{01}=-1$.}

In the dataset, individuals are labeled from 1 to 71.
After removing individuals with zero in-degrees or out-degrees,
we analyze the remaining 63 vertices.
We set $\beta_{71} = 0$ as the reference.
The bandwidth is selected using the procedure developed above,
and the kernel employed is the biweight kernel.
As shown in Figure S6 of the Supplementary Material,
$\widehat h=0.7651$ results in the smallest prediction error.

We first conduct the testing procedures outlined in Section \ref{section:application:TS}
to assess whether $\alpha_{0i}$ and $\beta_{0j}$ are equal to zero.
The resulting p-values for testing $\alpha_{0i}=0~(i\in[n])$ are less than $0.001$,
while those for testing $\beta_{0i}=0~(i\in[n-1])$ are $0.021$.
This indicates the presence of nonzero entries in the parameters $\alpha_{0i}$ and $\beta_{0j}$ at the significance level of $0.05$.
We then apply the support recovery procedure to identify the nonzero signals.
The nonzero out-degree parameters are reported in Table \ref{Tab:NZ}.
Two nodes, ``11" and ``35", exhibit nonzero in-degree parameters with respective estimates of $0.661$ and $0.732$.
Additionally, we implement the procedure described in Section \ref{section:application:TDH} to test for the presence of degree heterogeneity.
The p-values obtained from testing $\alpha_{0i}$ and $\beta_{0i}$ are less than 0.001.
This indicates the presence of degree heterogeneity for both out-degree and in-degree at the significance level of $0.05$.

For the homophily effects, the estimated coefficient for gender is 0.112, with a confidence interval $[0.017, 0.207]$.
This indicates that gender has a positive effect on the formation of friendships among lawyers,
suggesting that they are more likely to befriend individuals of the same gender.
In addition, the estimated coefficient for years with the firm is $-0.219$, with a confidence interval $[-0.267,-0.053],$
indicating a negative effect on network formation.
This implies that a greater disparity in years with the firm between two lawyers
is associated with a lower likelihood of friendship.
{\color{black}Although these results are consistent with the findings reported by \cite{yan2019},
we develop a graphical method to compare the goodness of fit for the network data.
To save space, the details are provided in Section C.5 of the Supplementary Material,
and the results are presented in Figure S3.
We observe that the method proposed by \cite{yan2019}  underestimates both out-degree and in-degree,
suggesting a potential violation of the logistic parametric assumption.
In addition, we calculate the $L_2$-norm errors between the observed degrees and the fitted degrees.
The values for out- and in-degrees obtained using our method are $33.21$ and $32.70$, respectively,
compared to $39.84$ and $37.08$ obtained from Yan et al.'s method. This indicates an improvement in our method.}


\begin{table}[!htpb]
\begin{center}
\caption{Estimation results of the selected degree parameters in the real data analysis. Here, CI denotes the pointwise $95\%$ confidence interval.}\label{Tab:NZ}
{\small
\begin{tabular}{cccccccccccc}
\hline
ID & $\widehat{\alpha}_i$ & CI && ID & $\widehat{\alpha}_i$ & CI  && ID & $\widehat{\alpha}_i$ & CI\\
\hline
1 	&	0.641 	&	$[	0.230 	,	1.053 	]$	&&	20 	&	0.729 	&	$[	0.317 	,	1.140 	]$	&&	40 	&	1.010 	&	$[	0.598 	,	1.421 	]$	\\
2 	&	0.695 	&	$[	0.284 	,	1.106 	]$	&&	22 	&	0.872 	&	$[	0.460 	,	1.283 	]$	&&	43 	&	0.664 	&	$[	0.253 	,	1.075 	]$	\\
3 	&	1.120 	&	$[	0.709 	,	1.532 	]$	&&	25 	&	0.695 	&	$[	0.284 	,	1.106 	]$	&&	45 	&	0.864 	&	$[	0.453 	,	1.276 	]$	\\
8 	&	0.839 	&	$[	0.428 	,	1.251 	]$	&&	27 	&	0.632 	&	$[	0.221 	,	1.044 	]$	&&	47 	&	0.868 	&	$[	0.456 	,	1.279 	]$	\\
10 	&	1.058 	&	$[	0.647 	,	1.470 	]$	&&	29 	&	1.043 	&	$[	0.632 	,	1.454 	]$	&&	50 	&	0.607 	&	$[	0.196 	,	1.018 	]$	\\
11 	&	1.094 	&	$[	0.682 	,	1.505 	]$	&&	35 	&	1.037 	&	$[	0.626 	,	1.448 	]$	&&	56 	&	1.493 	&	$[	1.082 	,	1.905 	]$  \\
12 	&	1.052 	&	$[	0.640 	,	1.463 	]$	&&	37 	&	0.624 	&	$[	0.213 	,	1.036 	]$	&&	57 	&	1.117 	&	$[	0.706 	,	1.528 	]$	\\
15 	&	1.200 	&	$[	0.788 	,	1.611 	]$	&&	38 	&	0.748 	&	$[	0.337 	,	1.160 	]$	&&	58 	&	1.168 	&	$[	0.756 	,	1.579 	]$ 	\\
18 	&	0.691 	&	$[	0.280 	,	1.102 	]$	&&	39 	&	0.820 	&	$[	0.409 	,	1.232 	]$	&&		&           &                               \\
\hline
\end{tabular}}
\end{center}
\end{table}

\section{Conclusions}
\label{Sec:discussion}

We proposed a semiparametric model to analyze the effects of homophily and degree heterogeneity on network formation.
The model was practically flexible as it did not specify the distribution of the latent random variables.
We then developed a kernel-based least squares method to estimate the unknown parameters
and derived the asymptotic properties of the estimators, including consistency and asymptotic normality.
{\textcolor{black}{The extensions to the cases that $\varepsilon_{ij}$ may depend on the covariates or edges may take weighted values,
are presented in Section A of the Supplementary Material.
}}

Several topics need to be addressed in future studies.
First, many large-scale network datasets are highly sparse,
which necessitates additional research on how to adapt the proposed method for analyzing such sparse networks.
Second, we consider a nonparametric method for the conditional density function of $X_{ij1}$ given $Z_{ij}$.
It may suffer from the curse of dimensionality, especially when the dimension of $Z_{ij}$ is large.
Although we propose a linear regression model between $X_{ij1}$ and $Z_{ij}$ in Remark 1 to address this issue,
the theoretical analysis necessitates further effort.
Finally, extending our method to analyze more complex network data,
such as dynamic networks, presents an intriguing area for future research.

\begin{center}
{\large\bf Supplementary material}
\end{center}

\noindent The supplementary material contains the proofs of Theorems \ref{Theorem:LSE}-\ref{Theorem:hatGA:theta}, Corollary \ref{Ident}
and Propositions \ref{proposition:power}-\ref{proposition:SR},
and additional results in the simulation studies and real data analysis.

\spacingset{1.3}
\bibliographystyle{Chicago}
\bibliography{reference}

\newpage

\setcounter{page}{1}
\setcounter{section}{0}
\renewcommand{\thesection}{\Alph{section}}

\if0\blind
{
  \bigskip
  \bigskip
  \bigskip
  \begin{center}
    {\LARGE\bf Supplementary Material to ``Inference in semiparametric formation models for directed networks"}
\end{center}
  \medskip
} \fi

\spacingset{1.8}

The supplementary material is organized as follows.
Section \ref{extension} presents extensions to the cases that $\varepsilon_{ij}$ may depend on the covariates or edges may take weighted values.
Section \ref{section-identification} presents the proofs of Theorem 1 and Corollary 1.
Section \ref{sec-condition-c4} shows that if $\widetilde Z_1,\dots, \widetilde Z_p$ are not collinear
and do not almost surely lie within the linear subspace spanned by the column vectors of $U$, then Condition (C4) holds.
Section \ref{section-proof-theorem} presents the proofs of Theorems 2-6 and Propositions 1-2.
Section \ref{section-add-simulation} presents the additional results in the simulation studies,
 including simulation results for conditionally independent cases, comparison with Candelaria's method, asymptotic relative efficiency and simulation results for weighted networks.
Let $C_0,~C_1,~C_2\dots$ be some constants that may change from line to line.

\section{Extensions}
\label{extension}

In the main paper, we conduct semiparametric inference under the assumption
that the distribution of $\varepsilon_{ij}$ is independent of $X_{ij}$ and the edges are binary.
However, in some cases, $\varepsilon_{ij}$ may depend on the covariates, and edges may take weighted values.
In this section, we extend the proposed model to account for these scenarios.

\subsection{Extension to conditionally independent cases}

We relax Condition (C3) as follows.

\noindent{\bf Condition (C3').} The conditional density function of $\varepsilon_{ij}$ satisfies
$f_{\varepsilon}(\varepsilon_{ij}|X_{ij1},Z_{ij})=f_{\varepsilon}(\varepsilon_{ij}|Z_{ij}).$
In addition, $\mathbb{E}(\varepsilon_{ij}|Z_{ij})=0$ almost surely.

\begin{sloppypar}
Condition (C3') assumes that $\varepsilon_{ij}$ is conditionally independent of $X_{ij1}$,
but it can depend on $Z_{ij}$.
Define $\widehat{\text{Cov}(\epsilon)}=\text{diag}\{\widehat\epsilon_1^\top,\dots,\widehat\epsilon_n^\top\},$
and $\widehat{\text{Cov}(Q)}=\text{diag}\{\widehat{Q}_1^\top,\dots,\widehat{Q}_n^\top\},$
where $\widehat\epsilon_i=(\widehat\epsilon_{i1},\dots,\widehat\epsilon_{i,i-1},\widehat\epsilon_{i,i+1},\dots,\widehat\epsilon_{in})^\top$
and $\widehat{Q}_i=(\widehat Q_{i1},\dots,\widehat Q_{i,i-1},\widehat Q_{i,i+1},\dots,\widehat Q_{in})^\top.$
The following theorem extends the results of Theorem \ref{Theorem:hatGA:theta}.
\end{sloppypar}

\begin{theorem}\label{Theorem:hatGA:Here}
Suppose that Conditions (C1)-(C2), (C3') and (C4)-(C8) hold,
and there exist constants $0<\sigma_{\epsilon L}^2<\sigma_{\epsilon U}^2<\infty$ such that
$\sigma_{\epsilon L}^2<\mathbb{E}(\epsilon_{ij}^2)<\sigma_{\epsilon U}^2$ for all $i\neq j\in [n].$
If the condition \eqref{Theorem:GA:theta:condition} in Theorem \ref{Theorem:GA:theta} holds, then we have
\begin{align*}
&\sup_{x\in\mathbb{R}^{M_{1\alpha}}}\big|\mathbb{P}\big(\sqrt{n-1}\mathcal{L}_{1\alpha}(\widehat\alpha-\alpha_0)\le x\big)
-\mathbb{P}\big(\mathcal{L}_1^\alpha\widehat G_3\le x\big)\big|=o(1)\\
\text{and}~~~&\sup_{x\in\mathbb{R}^{M_{1\beta}}}\big|\mathbb{P}\big(\sqrt{n-1}\mathcal{L}_{1\beta}(\widehat\beta-\beta_0)\le x\big)
-\mathbb{P}\big(\mathcal{L}_1^\beta\widehat G_3\le x\big)\big|=o(1),
\end{align*}
where $\widehat G_3\sim \mathcal{N}_{2n-1}(0,(n-1)V^{-1}U^\top \widehat{\text{Cov}(\epsilon)}UV^{-1}).$
Moreover, if there exist constants $0<\sigma_{Q L}^2<\sigma_{Q U}^2<\infty$
such that $\sigma_{Q L}^2<\mathbb{E}(Q_{ij}^2)<\sigma_{Q U}^2$ for all $i\neq j\in [n],$
and the condition \eqref{Theorem:GA:eta:condition} in Theorem \ref{Theorem:GA:eta} holds,
then we have
\begin{align*}
&\sup_{x\in\mathbb{R}^{M_2}}\big|\mathbb{P}\big(\sqrt{N}\mathcal{L}_2(\widehat\eta-\eta_0)\le x\big)-\mathbb{P}\big(\mathcal{L}_2 \widehat G_4\le x\big)\big|=o(1),
\end{align*}
where $\widehat G_4\sim \mathcal{N}_{p}(0,N(Z^\top DZ)^{-1}Z^\top D\widehat{\text{Cov}(Q)}DZ(Z^\top DZ)^{-1}).$
\end{theorem}

Based on Theorem \ref{Theorem:hatGA:Here}, the inference methods developed in Section \ref{section:application}
can be extended to the parameters $\alpha_0$ and $\beta_0$.
Details are omitted here to save space.

\subsection{Extension to weighted networks}

We generalize the proposed method for analyzing weighted network data.
For this purpose, we consider the following model:
\begin{align} \label{model2}
    A_{ij}=\sum_{l=0}^{R-1} \pi_l \mathbb{I}(\omega_{0l}<\alpha_{0i}+\beta_{0j}+X_{ij1}+Z_{ij}^\top\eta_0-\varepsilon_{ij}\le \omega_{0,l+1}),
\end{align}
where $R$ is a specified constant, $\pi_l$ denotes the weight,
and $\omega_{00}<\omega_{01}<\dots<\omega_{0R}$ are unknown threshold parameters.
We set $\omega_{00}=-\infty$ and $\omega_{0,R}=\infty$.
Under model \eqref{model2},  we have $A_{ij}\in\{\pi_0,\pi_1,\dots,\pi_{R-1}\}$,
and $A_{ij}=\pi_l$ if $(\alpha_{0i}+\beta_{0j}+X_{ij1}+Z_{ij}^\top\eta_0-\varepsilon_{ij})\in(\omega_{0l}, \omega_{0,l+1}].$
For the identification of model \eqref{model2}, we assume $\gamma_{01}=1$ and $\alpha_{0n}=\beta_{0n}=0$.
Subsequently, using arguments similar to those in the proof of Theorem \ref{Theorem:LSE}, we obtain
\begin{align}\label{eq:se7}
\mathbb{E}(Y_{ijl}|Z_{ij})=-\omega_{0l}+\alpha_{0i}+\beta_{0j}+Z_{ij}^\top \eta_0,~~l\in[R-1],
\end{align}
where $Y_{ijl}=[\mathbb{I}(A_{ij} \ge \pi_l)-\mathbb{I}(X_{ij1}>0)]/f(X_{ij1}|Z_{ij}).$
Let $\omega=(\omega_{1},\dots,\omega_{R-1})^\top,$
$\alpha=(\alpha_{1},\dots,\alpha_{n-1})^\top$ without $\alpha_n,$
and $\beta$ and $\eta$ are defined as previously.
Furthermore, let $\omega_0,~\alpha_0,~\beta_0$ and $\eta_0$ denote true values of $\omega,~\alpha,~\beta$ and $\eta$, respectively.
Define $\widehat Y_{ijl}=[\mathbb{I}(A_{ij} \ge \pi_l)-\mathbb{I}(X_{ij1}>0)]/\widehat f(X_{ij1}|Z_{ij}).$
Let $\widetilde Y=\mathds{1}_R\otimes Y,$
$\widetilde Z=(\mathbb{I}_R \otimes \mathds{1}_{N},\mathds{1}_R\otimes Z)$, and $\widetilde U=\mathds{1}_{R} \otimes \overline U,$
where $\mathds{1}_C$ denotes a $C$-dimensional vector of ones and $\overline U$ is obtained by deleting the $i$th column of $U$.
With some abuse of notation, we define $D=I_{RN}-\widetilde{U}(\widetilde{U}^\top\widetilde{U})^{-1}\widetilde{U}^\top.$
By \eqref{eq:se7}, given the constraints $\alpha_{0n}=\beta_{0n}=0$ and $\gamma_{01}=1$,
we can show that $\widetilde Z^\top D \mathbb{E}(\widetilde Y|\widetilde Z)=\widetilde Z^\top D\widetilde Z\eta_0$.
This implies that we can estimate $\eta_0$ by $\widehat\eta=(\widetilde Z^\top D\widetilde Z)^{-1}\widetilde Z^\top D \widehat Y$.
Then, we can obtain the least squares estimators $\widehat\alpha,~\widehat\beta$ and $\widehat\omega$ for $\alpha_0,~\beta_0$ and $\omega_0$  by minimizing
\begin{align*}
\sum_{l=1}^{R-1}\sum_{i=1}^n\sum_{j\ne i}(\widehat Y_{ijl}-\alpha_{i}-\beta_{j}+\omega_l-Z_{ij}^\top\widehat \eta)^2
~~\text{subject~to}~~\alpha_{n}=0~~\text{and}~~\beta_{n}=0.
\end{align*}
Define $\widehat\sigma_{w\epsilon}^2=N^{-1}\sum_{i=1}^n\sum_{j\neq i}
(\breve Y_{ij}+\widetilde\omega-\widehat\alpha_{i}-\widehat\beta_{j}-Z_{ij}^\top \widehat\eta)^2$
and $\widehat\sigma_{wQ,l}^2=N^{-1}\sum_{i=1}^n\sum_{j\neq i}[\widehat Y_{ijl}-\widehat{\mathbb{E}}(Y_{ijl}|X_{ij1},Z_{ij1})]^2,$
where $\breve Y_{ij}=\sum_{l=1}^{R-1}\widehat Y_{ijl}/(R-1)$ and $\widetilde \omega=\sum_{l=1}^{R-1}\widehat \omega_{l}/(R-1).$
Furthermore, define $\widehat \Omega=\text{diag}\{\widehat\Omega_1^\top,\dots,\widehat\Omega_R^\top\},$
where $\widehat\Omega_l=\widehat\sigma_{wQ,l}^2\mathds{1}_{N}.$
Denote $\xi_0=(\omega_0^\top,\eta_0^\top)^\top$ as true value of $\xi$,
and let $\widehat\xi=(\widehat\omega^\top,\widehat\eta^\top)^\top$ represent its estimator.

\begin{theorem}\label{Theorem:hatGA:weight}
If the conditions in Theorem \ref{Theorem:GA:theta} hold, then we have
\begin{align*}
\sup_{x\in\mathbb{R}^{M_{1\alpha}}}\big|\mathbb{P}\big(\sqrt{n-1}\mathcal{L}_{1\alpha}(\widehat\alpha-\alpha_0)\le x\big)
-\mathbb{P}\big(\mathcal{L}_1^\alpha\widehat G_5\le x\big)\big|&=o(1)\\
\text{and}~~~\sup_{x\in\mathbb{R}^{M_{1\beta}}}\big|\mathbb{P}\big(\sqrt{n-1}\mathcal{L}_{1\beta}(\widehat\beta-\beta_0)\le x\big)
-\mathbb{P}\big(\mathcal{L}_1^\beta\widehat G_5\le x\big)\big|&=o(1),
\end{align*}
where $\widehat G_5\sim \mathcal{N}_{2n-2}(0,(n-1)\widehat\sigma_{w\epsilon}^2V^{-1})$.
In addition, if the conditions in Theorem \ref{Theorem:GA:eta} hold, then we have
\begin{align*}
\sup_{x\in\mathbb{R}^{M_2}}\big|\mathbb{P}\big(\sqrt{N}\mathcal{L}_2(\widehat\xi-\xi_0)\le x\big)-\mathbb{P}\big(\mathcal{L}_2 \widehat G_6\le x\big)\big|=o(1),
\end{align*}
where $\widehat G_6\sim \mathcal{N}_{p}(0,N(\widetilde Z^\top D\widetilde Z)^{-1}\widetilde Z^\top D \widehat\Omega D\widetilde Z(\widetilde Z^\top D\widetilde Z)^{-1})$
 and $\mathcal{L}_2$ denotes an $M_2\times (p+R-1)$ matrix.
\end{theorem}

Theorem \ref{Theorem:hatGA:weight} extends the results of Theorem \ref{Theorem:hatGA:theta} to weighted networks.
The proof of Theorem \ref{Theorem:hatGA:weight} is omitted here because it
is similar to that of Theorem \ref{Theorem:hatGA:theta}
by treating $\omega_{0l}~(1\le l\le R-1)$ as intercepts.

\section{Proofs of Theorem 1 and Corollary 1}
\label{section-identification}

\subsection{Proof of Theorem 1}
\begin{proof}
We first consider the case  $\gamma_{01}>0$.
Define $Y_{ij}^*=\alpha_{0i}+\beta_{0j}+Z_{ij}^\top\eta_0-\varepsilon_{ij}.$
Note that for any $y>0,$
we have $[I(y+x>0)-I(x>0)]=1$ if $x\in(-y,0]$ and 0 otherwise.
Thus, if $Y_{ij}^*>0$ and $\gamma_{01}>0$, by Conditions (C2) and (C3),  we have
\begin{align*}
E[Y_{ij}|Z_{ij},\varepsilon_{ij}]=&\int_{B_L}^{B_U}\frac{I(Y_{ij}^*/\gamma_{01}+x>0)-I(x>0)}{f(x|Z_{ij})}f(x|Z_{ij})dx\\
=&\int_{B_L}^{B_U}\big[I(Y_{ij}^*/\gamma_{01}+x>0)-I(x>0)\big]dx\\
=&\int_{-Y_{ij}^*/\gamma_{01}}^0dx=Y_{ij}^*/\gamma_{01}.
\end{align*}
In addition, for any $y\le 0,$
we have $[I(y+x>0)-I(x>0)]=-1$ if $x\in[0,-y)$ and 0 otherwise.
Thus, if $Y_{ij}\le 0$ and $\gamma_{01}>0$, by
Conditions (C2) and (C3),  we have
\begin{align*}
E[Y_{ij}|Z_{ij},\varepsilon_{ij}]=&\int_{B_L}^{B_U}\frac{I(Y_{ij}^*/\gamma_{01}+x>0)-I(x>0)}{f(x|Z_{ij})}f(x|Z_{ij})dx\\
=&\int_{B_L}^{B_U}\big[I(Y_{ij}^*/\gamma_{01}+x>0)-I(x>0)\big]dx\\
=&-\int_0^{-Y_{ij}^*/\gamma_{01}}dx=Y_{ij}^*/\gamma_{01}.
\end{align*}
It follows from Condition (C3) that
\begin{align*}
E[Y_{ij}|Z_{ij}]=\big(\alpha_{0i}+\beta_{0j}+Z_{ij}^\top\eta_0\big)/\gamma_{01}.
\end{align*}
The proof for the case $\gamma_{01}<0$ is similar and omitted here.
This completes the proof.
\end{proof}

\subsection{Proof of Corollary 1}

\begin{proof}
Let $\Upsilon=\{(\alpha_{1},\dots,\alpha_{n},\beta_{1},\dots,\beta_{n},\gamma_{1},\eta^\top)^\top: \beta_{n}=0~~\text{and}~~\gamma_{1}=1\}$.
By Theorem 1 and the constraints $\beta_{0n}=0$ and $\gamma_{01}=1$,
we have
\begin{align*}
Z^\top D \mathbb{E}(Y|Z)=Z^\top D (U \theta + Z\eta_0)=Z^\top D Z\eta_0.
\end{align*}
This, together with Condition (C4), implies that $\eta_0=(Z^\top DZ)^{-1}Z^\top D \mathbb{E}(Y|Z)$.
Thus, $\eta_0$ uniquely exists and is identifiable.
We next consider the identifiability of $\alpha_{0i}$ and $\beta_{0j}$.
Without loss of generality, we assume $\eta_0=0$.
Let $\alpha^*_i,\ \beta_{j}^*$ and $\gamma_{1}^*$ be some constants satisfying $\beta_{n}^*=0$, $\gamma_1^*=1$ and
$$
\alpha_{0i}+\beta_{0j}=\alpha_{i}^*+\beta_{j}^*.
$$
It suffices to show $\alpha_{i}^*=\alpha_{0i}$ for $i\in [n]$  and $\beta_{j}^*=\beta_{0j}$ for $j~ \in [n-1]$.
Since $\beta_{0n}=0$ and $\beta_n^*=0$, we have
$$
\alpha_{0i}=\alpha_{i}^*~~ \text{for}~~ i \in [n].
$$
Additionally, for each $j\in [n-1]$, we have
$$
\alpha_{0i}+\beta_{0j}=\alpha_{i}^*+\beta_{j}^*~\Rightarrow~\beta_{0j}=\beta_{j}^*.
$$
Therefore, $\alpha_{0i}$ and $\beta_{0j}$ are identifiable within the subspace $\Upsilon$.
This completes the proof.
\end{proof}

\subsection{Sufficient conditions for Condition (C4)}
\label{sec-condition-c4}
Here, we show that if $\widetilde Z_1,\dots, \widetilde Z_p$ are not collinear
and do not almost surely lie within the linear subspace spanned by the column vectors of $U$, Condition (C4) holds.

\begin{proof}
Since $D$ is an idempotent matrix,
there exists an orthogonal matrix $\Pi$ such that $D=\Pi\Lambda \Pi^\top$,
where $\Lambda$ is a diagonal matrix with diagonal elements equal to either 1 or 0.
Furthermore, the number of ones is $r=n(n-2)$, which is larger than $p$.
Then, we can write $Z^\top DZ$ as
\begin{align*}
Z^\top DZ=Z^\top \Pi\Lambda \Pi^\top Z=
Z^\top(\Pi_1,\Pi_2)
\begin{pmatrix}
I_{n(n-2)} &0\\
0 & 0
\end{pmatrix}
\begin{pmatrix}
\Pi_1^\top\\
\Pi_2^\top
\end{pmatrix}Z
=(\Pi_1^\top Z)^\top (\Pi_1^\top Z),
\end{align*}
where $\Pi_1\in \mathbb{R}^{N\times r}$ and $\Pi_2\in \mathbb{R}^{N\times n}$ represent the sub-matrices of $\Pi$.
Note that $\Pi$ is an orthogonal matrix.
This, along with the fact that $r=n(n-2)>p$, implies that
$$
\text{rank}(Z^\top DZ)=\text{rank}(\Pi_1^\top Z)=\text{rank}(Z)=p.
$$
Therefore, Condition (C4) holds.
\end{proof}
\section{Proofs of consistency and Gaussian approximation}
\label{section-proof-theorem}

This section is organized as follows.
Sections \ref{subsec-pro-th2}-\ref{subsec-pro-th5} present
the proofs of Theorems 2-5, respectively.
Sections \ref{subse-pro-pro1} and \ref{subsec-pro-pro2}
present the proofs of Propositions 1 and 2, respectively.
Section \ref{subsec-pro-th6} presents the proof of Theorem 6.

Define $F(\theta,\eta)=U^\top(Y-Z\eta)-V\theta$.
For a given $\eta$, write
\[
F_{\eta}(\theta)=(F_{1\eta}(\theta),\dots,F_{(2n-1)\eta}(\theta))^\top=F(\theta,\eta).
\]
Define $\theta_{0\eta}=(\alpha_{01,\eta},\dots,\alpha_{0n,\eta},
\beta_{01,\eta},\dots,\beta_{0,n-1,\eta})^\top$,
where $\alpha_{0i,\eta}$ and $\beta_{0j,\eta}$ are the solution to $\mathbb{E}\{F_{\eta}(\theta)\}=0$ with a given $\eta$.

By the definition of $U$, a direct calculation yields $V=(v_{ij})\in \mathbb{R}^{(2n-1)\times (2n-1)}$ with
\begin{equation}\label{V-explicit}
\begin{array}{l}
v_{ij}=v_{ji}=0, ~~ i,~j=1,\dots,n,~i\neq j;\quad
v_{ij}=v_{ji}=0, ~~ i,~j=n+1,\dots,2n-1, ~i\neq j,\\
v_{(n+i)i}=v_{i(n+i)}=0, ~~ i,~j=1,\dots,n-1; \quad
v_{ii}-\sum_{j=n+1}^{2n-1} v_{ij}=1,~~ i=1,\dots,n,\\
v_{ii}=\sum_{k=1}^{n}v_{ki}=\sum_{k=1}^{n}v_{ik}, ~~i=n+1,\dots,2n-1,\\
v_{ij}=v_{ji}=1, ~~ i=1,\ldots,n,~ j=n+1,\dots,2n-1;~ j\neq n+i.
\end{array}
\end{equation}
The inverse of $V^{-1}$ has a close form, where the $(i,j)$th element of $V^{-1}$ is
\begin{align}
\label{eq-inverse-V}
(V^{-1})_{ij}=
\begin{cases}
\frac{2n-1}{n(n-1)},&\ \ \ 1\le i=j \le n-1,\\
\frac{n^2-3n+1}{n(n-1)(n-2)},&\ \ \ 1\le i\neq j\le n-1,\\
\frac{2n-3}{(n-1)(n-2)},&\ \ \ i=n,~j=n,\\
\frac{1}{n-1},&\ \ \ i=n,~1\le j\le n-1~~\text{or}~~j=n,~1\le i\le n-1,\\
-\frac{1}{n},&\ \ \ 1\le i\le n-1,~j=n+i~~\text{or}~~n+1\le i\le 2n-1,~j=i-n\\
-\frac{1}{n-2},&\ \ \ n+1\le i\le 2n-1,~j=n~~\text{or}~~i=n,~n+1\le j\le 2n-1,\\
-\frac{n-1}{n(n-2)},&\ \ \ n+1\le i\le 2n-1,~1\le j\le n-1~~\text{or}~~1\le i\le n-1,~n+1\le j\le 2n-1,\\
\frac{2(n-1)}{n(n-2)},&\ \ \ n+1\le i=j\le 2n-1,\\
\frac{n-1}{n(n-2)},&\ \ \ n+1\le i \neq j\le 2n-1.
\end{cases}
\end{align}
To simplify calculations, we consider an approximation to $V^{-1}$, denoted by $\breve{S}=(\breve{s}_{ij}),$
where $\breve{s}_{ij}$ is defined as follows:
\begin{align*}
\breve s_{ij}=\left\{
\begin{array}{ll}
\frac{\delta_{ij}}{n-1}+\frac{1}{n-1},&\ \ \ i,j=1,\dots,n,\\
-\frac{1}{n-1},&\ \ \ i=1,\dots,n;j=n+1,\dots,2n-1,\\
-\frac{1}{n-1},&\ \ \ i=n+1,\dots,2n-1;j=1,\dots,n,\\
\frac{\delta_{ij}}{n-1}+\frac{1}{n-1},&\ \ \ i,j=n+1,\dots,2n-1.
\end{array}\right.
\end{align*}
Here, $\delta_{ij}=1$ if $i=j$ and 0 otherwise.

\begin{lemma}\label{lem:Vbound}
Under Conditions (C1)-(C7), we have
\[
\|V^{-1}-\breve{S}\|_{\max}\le C_0/(n-1)^2,
\]
where $C_0>0$ is some constant.
\end{lemma}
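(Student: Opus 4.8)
The plan is to prove the bound entrywise, by directly comparing the closed form of $V^{-1}$ recorded in \eqref{eq-inverse-V} with the definition of $\breve S=(\breve s_{ij})$. The surrogate $\breve S$ has a deliberately simple $2\times 2$ block structure: each diagonal block equals $\tfrac{1}{n-1}(I+J)$ and each off-diagonal block equals $-\tfrac{1}{n-1}J$, where $J$ denotes an all-ones matrix of the appropriate order. Accordingly, for each of the finitely many index regimes appearing in \eqref{eq-inverse-V} I would subtract the corresponding entry of $\breve S$ and check that the leading term of order $1/(n-1)$ (or $1/n$) cancels, leaving a remainder of order $n^{-2}$.

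Carrying this out is a short, if slightly tedious, computation. For example, for $1\le i=j\le n-1$ one gets $(V^{-1})_{ij}-\breve s_{ij}=\tfrac{2n-1}{n(n-1)}-\tfrac{2}{n-1}=-\tfrac{1}{n(n-1)}$; for $1\le i\ne j\le n-1$, $\tfrac{n^2-3n+1}{n(n-1)(n-2)}-\tfrac{1}{n-1}=-\tfrac{1}{n(n-2)}$; for $i=j=n$, $\tfrac{2n-3}{(n-1)(n-2)}-\tfrac{2}{n-1}=\tfrac{1}{(n-1)(n-2)}$; for $1\le i\le n-1$ and $j=n+i$ (and the symmetric case), $-\tfrac1n-\bigl(-\tfrac{1}{n-1}\bigr)=\tfrac{1}{n(n-1)}$; for $i=n$, $n+1\le j\le 2n-1$ (and its transpose), $-\tfrac{1}{n-2}+\tfrac{1}{n-1}=-\tfrac{1}{(n-1)(n-2)}$; and for the remaining cross-block and lower-right-block entries one similarly obtains remainders of the form $-\tfrac{1}{n(n-1)(n-2)}$, $\tfrac{2}{n(n-1)(n-2)}$, $\tfrac{1}{n(n-1)(n-2)}$. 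The entries on which $V^{-1}$ and $\breve S$ agree exactly (for instance $i=n$, $1\le j\le n-1$) contribute $0$.

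To conclude, I would take the maximum of the absolute values of these remainders over all cases; each is bounded by $C/n^2$ for an absolute constant $C$ (several are in fact $O(n^{-3})$), and since $n^{-2}\le (n-1)^{-2}$ for $n\ge 2$ this gives $\|V^{-1}-\breve S\|_{\max}\le C_0/(n-1)^2$ with $C_0=C$. The only points demanding care are organisational rather than analytic: matching up the index ranges of the coarse $2\times 2$ block description of $\breve S$ against the finer case list of \eqref{eq-inverse-V}, disambiguating the mild overlap between the ``$-1/n$'' entries and the ``$-\tfrac{n-1}{n(n-2)}$'' entries in \eqref{eq-inverse-V}, and observing that any degenerate behaviour at very small $n$ is absorbed into the constant. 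There is no genuine obstacle here; all the work of the lemma sits inside the explicit inversion formula \eqref{eq-inverse-V}, which we take as already established.
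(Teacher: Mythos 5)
Your proposal is correct and is essentially the paper's own argument: the paper proves the lemma by noting it "immediately follows from the explicit expressions of $V^{-1}$" in \eqref{eq-inverse-V}, which is exactly your entrywise comparison, and your case-by-case remainders (e.g.\ $-\tfrac{1}{n(n-1)}$, $-\tfrac{1}{n(n-2)}$, $\tfrac{1}{(n-1)(n-2)}$, etc.) are all accurate and of order $n^{-2}$ or smaller. Nothing further is needed.
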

\begin{proof}
The proof immediately follows from the explicit expressions of $V^{-1}$ in
\eqref{eq-inverse-V}.
\end{proof}

\subsection{Proofs for Theorem 2}
\label{subsec-pro-th2}

We state two lemmas before proving Theorem 2.

\begin{lemma}\label{Lem:UpperBound}
Under Conditions (C1)-(C5), we have
\begin{align*}
\sup_{\eta\in\Theta}\|F_\eta(\theta_{0\eta})\|_{\infty}\le O_p\Big(\frac{(\kappa+q_n)}{m}\sqrt{n\log(n)}\Big).
\end{align*}
\end{lemma}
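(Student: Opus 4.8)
The plan is to linearly reduce $F_\eta(\theta_{0\eta})$ to a sum of conditionally independent, bounded, mean-zero dyadic terms, and then control it with a Hoeffding-type tail bound together with a union bound over the $2n-1$ coordinates and over $\eta\in\Theta$.

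First I would obtain a closed form for $F_\eta(\theta_{0\eta})$. Since $U$ and $V=U^\top U$ are nonrandom, and Theorem~\ref{Theorem:LSE} together with the normalisations $\gamma_{01}=1$, $\beta_{0n}=0$ gives $\mathbb{E}(Y)=U\theta_0+(\mathbb{E}Z)\eta_0$, the equation $\mathbb{E}\{F_\eta(\theta)\}=0$ is solved by $\theta_{0\eta}=\theta_0+V^{-1}U^\top(\mathbb{E}Z)(\eta_0-\eta)$. Substituting this back and using $V\theta_0=U^\top U\theta_0$, the $\eta$-terms telescope and one is left with
\[
F_\eta(\theta_{0\eta})=U^\top\epsilon+U^\top\bigl(Z-\mathbb{E}Z\bigr)(\eta_0-\eta),\qquad \epsilon_{ij}:=Y_{ij}-\alpha_{0i}-\beta_{0j}-Z_{ij}^\top\eta_0,
\]
where $\mathbb{E}(\epsilon_{ij}\mid Z_{ij})=0$ by Theorem~\ref{Theorem:LSE}. (If the expectation defining $\theta_{0\eta}$ is taken conditionally on the covariates, the second term disappears and the argument only simplifies.) The key structural fact is that for any dyad-indexed vector $v=(v_{ij})$, each coordinate of $U^\top v$ equals a sum of $n-1$ of the $v_{ij}$'s: those over the dyads sharing a fixed tail node (coordinates $1,\dots,n$) or a fixed head node (coordinates $n+1,\dots,2n-1$). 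Hence it suffices to bound such $(n-1)$-term sums uniformly over the $2n-1$ coordinates and over $\eta\in\Theta$.

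Next I would bound the summands. From $|A_{ij}-\mathbb{I}(X_{ij1}>0)|\le 1$ and the lower bound $f(\cdot\mid Z_{ij})\ge m$ one gets $|Y_{ij}|\le 1/m$, while $|\alpha_{0i}+\beta_{0j}+Z_{ij}^\top\eta_0|\le q_n$ by definition of $q_n$; thus $|\epsilon_{ij}|\le 1/m+q_n=O((\kappa+q_n)/m)$. For the second term, Condition~(C5) gives $\|Z_{ij}-\mathbb{E}Z_{ij}\|_\infty\le 2\kappa$, so $|(Z_{ij}-\mathbb{E}Z_{ij})^\top(\eta_0-\eta)|\le 2\kappa\|\eta_0-\eta\|_1=O(\kappa)$ uniformly over the (compact) parameter set $\Theta$; hence every summand is almost surely bounded by a nonrandom $M=O((\kappa+q_n)/m)$. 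Conditioning on the $\sigma$-field $\mathcal{Z}$ generated by all the covariates $\{Z_{kl}\}$, the variables $\{\epsilon_{ij}\}_{j\ne i}$ for fixed $i$ are independent — this uses the conditional independence of the edges $A_{ij}$ across dyads and the fact that $Y_{ij}$ is a measurable function of $(A_{ij},X_{ij1},Z_{ij})$ — and mean zero, so Hoeffding's inequality gives $\mathbb{P}\bigl(|\sum_{j\ne i}\epsilon_{ij}|>t\mid\mathcal{Z}\bigr)\le 2\exp\{-t^2/(2(n-1)M^2)\}$; the same bound holds for the head-node coordinates and, componentwise, for $\sum_{j\ne i}(Z_{ij}-\mathbb{E}Z_{ij})$. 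Taking $t=M\sqrt{6(n-1)\log n}$ makes each tail $O(n^{-3})$, so a union bound over the $O(n)$ coordinates yields, conditionally on $\mathcal{Z}$ and hence (by the tower rule) unconditionally, $\|F_\eta(\theta_{0\eta})\|_\infty\le C M\sqrt{n\log n}$ uniformly over $\eta\in\Theta$ with probability $1-O(n^{-2})$; since $M=O((\kappa+q_n)/m)$, this is precisely the asserted rate.

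The main obstacle is the probabilistic core of the last step: verifying that, conditionally on the covariates, the residuals $\epsilon_{ij}$ sharing a common node are genuinely independent \emph{and} centred. This requires combining the model's conditional-independence assumption on the edges with the exact form of $\mathbb{E}(Y_{ij}\mid Z_{ij})$ from Theorem~\ref{Theorem:LSE} — note that $\mathbb{E}(Y_{ij}\mid X_{ij})$ is \emph{not} the additive index, so the conditioning must be on $Z_{ij}$, not on $X_{ij}$. Once this is settled, the remaining ingredients (Hoeffding, union bound, and compactness of $\Theta$ to absorb $\sup_{\eta}$) are routine.
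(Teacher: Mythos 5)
Your proposal is correct, but it reaches the bound by a genuinely different route than the paper. The paper never computes $\theta_{0\eta}$ explicitly: it handles $\sup_{\eta\in\Theta}$ by discretizing $\Theta$ into a grid $\{\eta_k\}$, invoking Lipschitz continuity of $\eta\mapsto\theta_{0\eta}$ (asserted, not derived) to make the within-cell fluctuation term vanish for a fine enough grid, and then applying Hoeffding's inequality to $F_{i\eta_k}(\theta_{0\eta_k})=\sum_{j\neq i}[Y_{ij}-(\alpha_{i,\eta_k}+\beta_{j,\eta_k}+Z_{ij}^\top\eta_k)]$ at each grid point, with a union bound over the $O(n)$ coordinates and the grid. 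You instead solve $\mathbb{E}\{F_\eta(\theta)\}=0$ in closed form, $\theta_{0\eta}=\theta_0+V^{-1}U^\top(\mathbb{E}Z)(\eta_0-\eta)$, so that $F_\eta(\theta_{0\eta})=U^\top\epsilon+U^\top(Z-\mathbb{E}Z)(\eta_0-\eta)$; the random objects $U^\top\epsilon$ and $U^\top(Z-\mathbb{E}Z)$ no longer depend on $\eta$, and uniformity over the compact $\Theta$ is absorbed into the deterministic factor $\|\eta_0-\eta\|_1=O(1)$, leaving only Hoeffding plus a union bound over coordinates. Your algebra checks out (the same cancellation $U^\top Y-V\theta_0=V\theta_0$-free form, and the conditional-on-covariates variant with no second term, as you note), and the resulting rate $(1/m+q_n+\kappa)\sqrt{n\log n}$ matches the paper's $(\kappa+q_n)m^{-1}\sqrt{n\log n}$ under the same implicit normalizations the paper uses. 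What your route buys is the elimination of the covering argument and of the unproven Lipschitz claim, plus an explicit formula for $\theta_{0\eta}$ that is reusable later; what the paper's route buys is robustness to the linear structure—its grid-plus-Lipschitz argument would survive if $F$ were not affine in $(\theta,\eta)$, whereas your telescoping exploits that affineness. Both arguments lean on the same implicit assumptions (independence across dyads of $(A_{ij},X_{ij1},Z_{ij})$, boundedness of $\Theta$), and your caveat about centering on $Z_{ij}$ rather than on $(X_{ij1},Z_{ij})$ is exactly the right point to flag.
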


\begin{proof}
Let $\{\eta_k: k\in [J_n]\}$ be the equal grid points on $\Theta$
and $\eta_k^0$ be an interior point of $[\eta_k,\eta_{k+1})$.
Then, it can be shown that
\begin{align}\label{Lem:UpperBound:eq1}
&P\Big(\sup_{\eta \in \Theta}\max_{i\in [2n-1]}|F_{i\eta}(\theta_{0\eta})| >\delta \Big)\nonumber\\
=& P\Big(\bigcup_{k\in [J_n]}\sup_{\eta \in [\eta_{k},\eta_{k+1})}\max_{i\in [2n-1]}|F_{i\eta}(\theta_{0\eta})| > \delta\Big)\nonumber\\
\le& \sum_{k\in [J_n]}P\Big(\sup_{\eta \in [\eta_{k},\eta_{k+1})}\max_{i\in [2n-1]}|F_{i\eta}(\theta_{0\eta})| > \delta\Big)\nonumber\\
\le& \sum_{k\in [J_n]}P\Big(\sup_{\eta \in [\eta_{k},\eta_{k+1})}\max_{i\in [2n-1]}|F_{i\eta}(\theta_{0\eta_{k}^*})-F_{i\eta}(\theta_{0\eta})|
+\sup_{\eta \in [\eta_{k},\eta_{k+1})}\max_{i\in [2n-1]}|F_{i\eta}(\theta_{0\eta_k^*})| > \delta\Big)\nonumber\\
\le& \sum_{k\in [J_n]}P\Big(\sup_{\eta \in [\eta_{k},\eta_{k+1})}\max_{i\in [2n-1]}|F_{i\eta}(\theta_{0\eta_k^*})-F_{i\eta}(\theta_{0\eta})| > \frac{\delta}{2}\Big)
+\sum_{k\in [J_n]}P\Big(\|F_{i\eta}(\theta_{0\eta_k^*})\|_{\infty} > \frac{\delta}{2}\Big)\nonumber\\
:=&B_1+B_2.
\end{align}
Because $\theta_{0\eta}$ is Lipschitz continuous with respect to $\eta$,
we have
$$
|F_{i\eta}(\theta_{0\eta_k^*})-F_{i\eta}(\theta_{0\eta})|\le C_1n\|\eta_k^*-\eta\|_2 <\delta/2
$$
for a sufficiently large $J_n$. This leads to $B_1=0$.

We now bound $B_2$. A direct calculation yields that
\begin{align*}
 F_{i\eta}(\theta_{\eta})=&\sum_{j\ne i}^{n}\big[Y_{ij}-(\alpha_{i,\eta}+\beta_{j,\eta}+Z_{ij}^\top\eta)\big],\ \ \ i\in[n],\nonumber\\
 F_{(n+j)\eta}(\theta_{\eta})=&\sum_{i\ne j}^{n}\big[Y_{ij}-(\alpha_{i,\eta}+\beta_{j,\eta}+Z_{ij}^\top\eta)\big],\ \ \ j\in[n-1].
\end{align*}
By Hoeffding's inequality (Hoeffding, 1963) and choosing a sufficiently large $C_2>0$,
we have that for each given $\eta\in\Theta$,
\begin{align*}
B_2=& \sum_{k\in [J_n]}P\Big(\|F_{i\eta}(\theta_{0\eta_k^*})\|_{\infty} > \frac{C_2(\kappa+q_n)}{m}\sqrt{(n-1)\log(4n-2)}\Big)
\le n^{-c_0},
\end{align*}
where $c_0$ is some constant depending on $C_2$.
This, together with \eqref{Lem:UpperBound:eq1}, completes the proof.
\end{proof}

\begin{lemma}\label{Lem:UpperBound_fix_eta}
If Conditions (C1)-(C5) hold, then
\begin{align*}
\sup_{\eta\in\Theta}\|\widehat{\theta}_{\eta}-\theta_{0\eta}\|_\infty=O_p\bigg(\frac{(\kappa+q_n)}{m}\sqrt{\frac{\log(n)}{n}}+\sqrt{\frac{\log(n)}{m^4n^2h^{2p+2}}}+\frac{h^{r}}{m^2}\bigg).
\end{align*}
\end{lemma}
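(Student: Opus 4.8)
The plan is to reduce $\widehat\theta_\eta-\theta_{0\eta}$ to an explicit linear functional of the data and then control it coordinatewise via Hoeffding-type inequalities together with the sharp description of $V^{-1}$ in Lemma~\ref{lem:Vbound}. Set $\widehat F_\eta(\theta):=U^\top\widehat Y-U^\top Z\eta-V\theta$; since $\mathcal{M}(\cdot,\cdot,\eta)$ is a strictly convex quadratic, $\widehat\theta_\eta$ is the unique root of $\widehat F_\eta$, and because $\widehat F_\eta$ is affine in $\theta$ with constant Jacobian $-V$ while $\theta_{0\eta}$ solves $\mathbb{E}\{F_\eta(\theta)\}=0$, we obtain the identity $\widehat\theta_\eta-\theta_{0\eta}=V^{-1}\widehat F_\eta(\theta_{0\eta})$ and $\widehat F_\eta(\theta_{0\eta})=F_\eta(\theta_{0\eta})+U^\top(\widehat Y-Y)$. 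Using Theorem~\ref{Theorem:LSE} to evaluate $\mathbb{E} Y_{ij}$, one checks that $F_\eta(\theta_{0\eta})=U^\top g$ with $g_{ij}=\epsilon_{ij}+(Z_{ij}-\mathbb{E} Z_{ij})^\top(\eta_0-\eta)$; the increments $g_{ij}$ are bounded by $O((\kappa+q_n)/m)$ uniformly over $\eta\in\Theta$ and are (conditionally) independent across pairs -- the noise part $\epsilon_{ij}$ conditionally on $Z$, the covariate part unconditionally. It therefore suffices to bound, uniformly in $\eta$, the two pieces $\|V^{-1}U^\top g\|_\infty$ and $\|V^{-1}U^\top(\widehat Y-Y)\|_\infty$.

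For the first piece, split $V^{-1}=\breve S+(V^{-1}-\breve S)$. A direct computation from the explicit form of $\breve S$ shows that each coordinate of $\breve S U^\top w$ equals $(n-1)^{-1}$ times a sum of at most $2(n-1)$ entries of $w$, each with multiplicity at most two; hence, writing $(\breve S U^\top g)_i=\sum_{k\neq l}c_{kl}g_{kl}$, we have $\max_{k,l}|c_{kl}|\le 2/(n-1)$ and $\sum_{k,l}c_{kl}^2=O(1/n)$. Applying Hoeffding's inequality (separately to the noise and the covariate components, as in the proof of Lemma~\ref{Lem:UpperBound}) and a union bound over the $2n-1$ coordinates gives, for each fixed $\eta$, $\max_i|(\breve S U^\top g)_i|=O_p\big(\tfrac{\kappa+q_n}{m}\sqrt{\log n/n}\,\big)$; the supremum over $\eta\in\Theta$ is obtained exactly as in the proof of Lemma~\ref{Lem:UpperBound}, by passing to a polynomially fine grid and using that $\eta\mapsto\widehat\theta_\eta-\theta_{0\eta}$ is affine, which only costs an extra logarithmic factor. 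For the remainder, Lemma~\ref{lem:Vbound} and the Hoeffding bound $\|U^\top g\|_\infty=O_p\big(\tfrac{\kappa+q_n}{m}\sqrt{n\log n}\,\big)$ give $\|(V^{-1}-\breve S)U^\top g\|_\infty\le\|V^{-1}-\breve S\|_{\max}\|U^\top g\|_1=O(n^{-2})\cdot O_p\big(n\cdot\tfrac{\kappa+q_n}{m}\sqrt{n\log n}\,\big)=O_p\big(\tfrac{\kappa+q_n}{m}\sqrt{\log n/n}\,\big)$. Hence $\sup_{\eta\in\Theta}\|V^{-1}U^\top g\|_\infty=O_p\big(\tfrac{\kappa+q_n}{m}\sqrt{\log n/n}\,\big)$.

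For the second piece let $w=\widehat Y-Y$; almost surely $|w_{ij}|\le|\widehat f(X_{ij1}|Z_{ij})-f(X_{ij1}|Z_{ij})|/\{\widehat f(X_{ij1}|Z_{ij})\,f(X_{ij1}|Z_{ij})\}$ (since $|A_{ij}-\mathbb{I}(X_{ij1}>0)|\le 1$ and the indicator terms in $\widehat Y$ and $Y$ agree with probability one), which is at most $2m^{-2}\sup_{i\neq j}|\widehat f(X_{ij1}|Z_{ij})-f(X_{ij1}|Z_{ij})|$ on the event $\{\inf\widehat f\ge m/2\}$, of probability tending to one by the uniform consistency of $\widehat f$ (and otherwise the asserted rate is vacuous). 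By the uniform convergence rate of the Nadaraya--Watson conditional density estimator under Conditions~(C6)--(C7) -- a symmetric higher-order kernel with integrable Fourier transform and $r$ bounded derivatives of the relevant densities, over $N=n(n-1)$ pairs -- one has $\sup_{i\neq j}|\widehat f-f|=O_p\big(\sqrt{\log n/(n^2h^{2p+2})}+h^r\big)$. Since the algebraic identity above yields $|(\breve S U^\top w)_i|\le 2\|w\|_\infty$, and Lemma~\ref{lem:Vbound} gives $\|(V^{-1}-\breve S)U^\top w\|_\infty\le\|V^{-1}-\breve S\|_{\max}\|U^\top w\|_1=O(n^{-2})\cdot O(n^2\|w\|_\infty)=O(\|w\|_\infty)$, we obtain $\|V^{-1}U^\top(\widehat Y-Y)\|_\infty=O(\|w\|_\infty)=O_p\big(\sqrt{\log n/(m^4 n^2 h^{2p+2})}+h^r/m^2\big)$. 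Adding the two bounds by the triangle inequality yields the claimed rate.

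The main obstacle is the uniform convergence rate of the conditional density estimator $\widehat f$ over the random evaluation points $(X_{ij1},Z_{ij})$: this is where Conditions~(C6)--(C7) are indispensable, and it requires controlling both the ratio structure of the Nadaraya--Watson estimator and the dependence among the $N=n(n-1)$ pairs. A secondary, more mechanical difficulty is that one must not bound $\|V^{-1}U^\top(\cdot)\|_\infty$ through crude matrix norms -- that would lose a factor $\sqrt n$ -- but instead exploit, via Lemma~\ref{lem:Vbound}, that each coordinate of $\breve S U^\top(\cdot)$ is a scaled sum of only $O(n)$ increments, so that Hoeffding's inequality delivers the $\sqrt{\log n/n}$ rate; the passage from fixed $\eta$ to the supremum over $\Theta$ is routine given the affine dependence on $\eta$.
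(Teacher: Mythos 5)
Your proposal is correct and follows essentially the same route as the paper's proof: the explicit least-squares identity $\widehat\theta_\eta-\theta_{0\eta}=V^{-1}F_\eta(\theta_{0\eta})+V^{-1}U^\top(\widehat Y-Y)$, the split $V^{-1}=\breve S+(V^{-1}-\breve S)$ controlled via Lemma~\ref{lem:Vbound}, Hoeffding bounds with a grid over $\Theta$ (as in Lemma~\ref{Lem:UpperBound}) for the first term, and the uniform kernel convergence rate for $\widehat f$ for the second. Your centering identity for $F_\eta(\theta_{0\eta})$ and the explicit coefficient structure of $\breve S U^\top$ just make more explicit what the paper calls a ``direct calculation,'' so there is no substantive difference.
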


\begin{proof}
For simplicity of presentation, in what follows we assume $p_1=p$.
For each given $\eta\in\Theta,$ we have
\begin{align*}
\widehat{\theta}_\eta=\text{arg}\min_{\theta}\|\widehat Y-U\theta-Z\eta\|_2^2
=V^{-1}U^\top(\widehat Y-Z\eta).
\end{align*}
Note that
\begin{align*}
\widehat{\theta}_{\eta}-\theta_{0\eta}=&V^{-1}[U^\top(\widehat Y-Z\eta)-U^\top U\theta_{0\eta}]\\
=&V^{-1}F_\eta(\theta_{0\eta})+V^{-1}U^\top(\widehat Y-Y).
\end{align*}
It suffices to show
\begin{align}\label{Lem:UpperBound_fix_eta:eq1}
&\|V^{-1}F_\eta(\theta_{0\eta})\|_{\infty}=O_p\bigg(\frac{(\kappa+q_n)}{m}\sqrt{\frac{\log(4n-2)}{n-1}}\bigg),  \ \ \ \mbox{and}\\
\label{Lem:UpperBound_fix_eta:eq2}
&\|V^{-1}U^\top(\widehat Y-Y)\|_{\infty}=O_p\bigg(\sqrt{\frac{\log(n)}{m^4n^2h^{2p+2}}}+\frac{h^{r}}{m^2}\bigg).
\end{align}
To show \eqref{Lem:UpperBound_fix_eta:eq1}, we begin with the following decomposition
\begin{align}\label{Lem:UpperBound_fix_eta:eq3}
V^{-1}F_{\eta}(\theta_{0\eta})=
\underbrace{(V^{-1}-\breve S) F_{\eta}(\theta_{0\eta})}_{B_3} +
\underbrace{\breve S F_{\eta}(\theta_{0\eta})}_{B_4}.
\end{align}
By Lemma \ref{lem:Vbound}, we have
\begin{align}
\label{Lem:UpperBound_fix_eta:eq4}
\|B_3\|_{\infty} \le & (2n-1)\|V^{-1}-\breve S\|_{\max}\|F_{\eta}(\theta_{0\eta})\|_\infty\nonumber\\
=& \frac{C_3(2n-1)}{(n-1)^2}\|F_{\eta}(\theta_{0\eta})\|_\infty\nonumber\\
\le & \frac{C_3(\kappa+q_n)(2n-1)}{m(n-1)^2}\sqrt{(n-1)\log(4n-2)}.
\end{align}
According to the definition of $\breve S$, a direct calculation yields
\begin{align}\label{Lem:UpperBound_fix_eta:eq5}
\|B_4\|_{\infty}=\|\breve S F_{\eta}(\theta_{0\eta})\|_\infty
\le \frac{C_4(\kappa+q_n)}{m(n-1)}\sqrt{(n-1)\log(4n-2)}.
\end{align}
By \eqref{Lem:UpperBound_fix_eta:eq3}, \eqref{Lem:UpperBound_fix_eta:eq4}, \eqref{Lem:UpperBound_fix_eta:eq5} and Lemma \ref{Lem:UpperBound}, we obtain
\begin{align*}
\sup_{\eta\in\Theta}\|V^{-1}F_{\eta}(\theta_{0\eta})\|_{\infty}\le \Big[\frac{C_3(2n-1)}{(n-1)^2}+\frac{C_4}{n-1}\Big]\frac{\kappa+q_n}{m}\sqrt{(n-1)\log(4n-2)},
\end{align*}
which implies that \eqref{Lem:UpperBound_fix_eta:eq1} holds.

We next  show \eqref{Lem:UpperBound_fix_eta:eq2}.
For this, we write
$\widehat f(x|Z=z)=\widehat f_{XZ}(x,z)/\widehat f_Z(z),$ where
\begin{align*}
\widehat f_{XZ}(x,z)=&\frac{1}{n(n-1)}\sum_{1\le s\neq k\le n}\mathcal{K}_{xz,h}\big(X_{sk1}-x,Z_{sk}-z\big),\\
\widehat f_Z(z)=&\frac{1}{n(n-1)}\sum_{1\le s\neq k\le n}\mathcal{K}_{z,h}\big(Z_{sk}-z\big).
\end{align*}
For $i\in[n]$, a direct calculation gives that
\begin{align*}
|(U^\top (\widehat Y-Y))_i|=& \big|\sum_{j\neq i}(\widehat{Y}_{ij}-Y_{ij})\big|
\le  n\max_{j\neq i}\bigg|\frac{\widehat f_{Z}(Z_{ij})}{\widehat f_{XZ}(X_{ij},Z_{ij})}-\frac{f_{Z}(Z_{ij})}{f_{XZ}(X_{ij},Z_{ij})}\bigg|\\
\le & (n/m^2)O_p\bigg(\sqrt{\frac{\log(n)}{n^2h^{2p+2}}}+h^{r}\bigg),
\end{align*}
where the last inequality follows from the uniform convergence of the kernel estimators (e.g., Andrews, 1995).
With similar arguments as in the proof of \eqref{Lem:UpperBound_fix_eta:eq3}, we have
\[
\|V^{-1}U^\top(\widehat Y-Y)\|_{\infty}\le O_p\bigg(\sqrt{\frac{\log(n)}{m^4n^2h^{2p+2}}}+\frac{h^{r}}{m^2}\bigg),
\]
which implies \eqref{Lem:UpperBound_fix_eta:eq2}. It completes the proof.
\end{proof}

We are now ready to prove Theorem 2.

\begin{proof}[Proof of Theorem 2]
Define $\widehat Q_c(\eta)=-Z^\top(\widehat Y-U\widehat\theta_{\eta}-Z\eta)$,
$\widetilde Q_c(\eta)=-Z^\top(\widehat Y-U\theta_{0\eta}-Z\eta)$,
and $Q_c(\eta)=-Z^\top(Y-U\theta_{0\eta}-Z\eta)$.
Then, we have
\begin{align}\label{Theorem1:eq1}
\|\widehat Q_c(\eta)-Q_c(\eta)\|_2\le &\|\widehat Q_c(\eta)-\widetilde Q_c(\eta)\|_2+\|\widetilde Q_c(\eta)-Q_c(\eta)\|_2\nonumber\\
\le &\|Z^\top U(\widehat\theta_{\eta}-\theta_{0\eta})\|_2+\|Z^\top (\widehat Y-Y)\|_2.
\end{align}
For the first term on the right hand side of \eqref{Theorem1:eq1}, by Lemma \ref{Lem:UpperBound_fix_eta},
we have
\begin{align}\label{Theorem:eq11}
\|Z^\top U(\widehat\theta_{\eta}-\theta_{0\eta})\|_2
\le & \Big(\sum_{k=1}^p\Big(\sum_{i=1}^{2n-1}|\sum_{j\neq i}Z_{ij,k}|\Big)^2\Big)^{1/2} \|\widehat\theta_{\eta}-\theta_{0\eta}\|_{\infty}
\nonumber\\
\le &  O_p\bigg(\frac{(\kappa+q_n)}{m}\sqrt{n^3\log(n)}+\sqrt{\frac{n^2\log(n)}{m^4h^{2p+2}}}+\frac{n^2h^{r}}{m^2}\bigg).
\end{align}
For the second term on the right hand side of \eqref{Theorem1:eq1},
under Conditions (C5)-(C6), an application of Theorem 1 in Andrews (1995) yields
\begin{align}\label{Theorem:eq12}
& Z^\top (\widehat Y-Y)\nonumber\\
=&\sum_{1\le i\neq j\le n} Z_{ij}\Big[A_{ij}-\mathbb{I}(X_{ij1}>0)\Big]\Bigg[\frac{\widehat{f}(Z_{ij})}{\widehat f_{XZ}(X_{ij1},Z_{ij})}
-\frac{f_Z(Z_{ij})}{ f_{XZ}(X_{ij1},Z_{ij})}\Bigg]\nonumber\\
=&\sum_{1\le i\neq j\le n} \frac{1}{n(n-1)}\sum_{1\le s\neq k\le n} Z_{sk}Y_{sk}\Bigg[\frac{f(X_{sk1}|Z_{sk})\mathcal{K}_{z,h}\big(Z_{ij}-Z_{sk}\big)
-\mathcal{K}_{xz,h}\big(X_{ij1}-X_{sk1},Z_{ij}-Z_{sk}\big)
}{\widehat f_{XZ}(X_{sk1},Z_{sk})}\Bigg]\nonumber\\
=& \sum_{1\le i\neq j\le n} Z_{ij}\big[\mathbb{E}(Y_{ij}|Z_{ij})-\mathbb{E}(Y_{ij}|X_{ij1},Z_{ij})\big]+O_p\bigg(\frac{\kappa}{m}\Big(n^2h^{r}+\sqrt{\frac{\log n}{h^{2p+2}}}\Big)\bigg).
\end{align}
Note that $\|Z_{ij}[\mathbb{E}(Y_{ij}|Z_{ij})-\mathbb{E}(Y_{ij}|X_{ij1},Z_{ij})]\|_2\le O(\kappa/m)$ under Conditions (C4) and (C5).
Hoeffding's inequality (Hoeffding, 1963) implies that
\begin{align}\label{Theorem:eq13}
 Z^\top (\widehat Y-Y)=O_p\bigg(\frac{n\kappa}{m}\bigg).
\end{align}
Combining \eqref{Theorem1:eq1}-\eqref{Theorem:eq13}, we obtain
\begin{align}\label{Theorem1:Bound}
&\|\widehat Q_c(\eta)-Q_c(\eta)\|_2\nonumber\\
\le &O_p\bigg(\frac{\kappa}{m}\Big[(\kappa+q_n)\sqrt{n^3\log(n)}+\sqrt{\frac{n^2\log(n)}{m^2h^{2p+2}}}+\frac{n^2h^{r}}{m}+n+n^2h^{r}\Big]\bigg)\nonumber\\
\le &O_p\bigg(\frac{\kappa}{m^2}\Big[(\kappa+q_n)\sqrt{n^3\log(n)}+\sqrt{\frac{n^2\log(n)}{h^{2p+2}}}+n+n^2h^{r}\Big]\bigg).
\end{align}
In addition, by Hoeffding's inequality, we have $Q_c(\eta)=\mathbb{E}[Q_c(\eta)]+O_p(\kappa n/m)$.
This, together with \eqref{Theorem1:Bound}, gives
\begin{align}
\nonumber
\|\widehat Q_c(\eta)-\mathbb{E}[Q_c(\eta)]\|_2
\le& \|\widehat Q_c(\eta)-Q_c(\eta)\|_2+\|Q_c(\eta)-\mathbb{E}[Q_c(\eta)]\|_2\\
\label{eqA6}
=&O_p\bigg(\frac{\kappa}{m^2}\Big[(\kappa+q_n)\sqrt{n^3\log(n)}+\sqrt{\frac{n^2\log(n)}{h^{2p+2}}}+n+n^2h^{r}\Big]\bigg).
\end{align}

Let $\vartheta_0$ be any positive constant such that $\|\eta-\eta_0\|\le \vartheta_0$.
For any $w_2\in \mathbb{R}^{p}$ satisfying $\|w\|=1$,
$w^\top \mathbb{E}[Q_c(\eta_0+w\vartheta)]$ increases with $\vartheta$.
It implies that for any $\vartheta \ge \vartheta_0>0$,
$w^\top\{\mathbb{E}[Q_c(\eta_0+w\vartheta)]-\mathbb{E}[Q_c(\eta_0)]\}\ge 0$.
Then, we have
\begin{align*}
 & \|w\|_2\|\mathbb{E}[Q_c(\eta_0+w\vartheta)]-\mathbb{E}[Q_c(\eta_0)]\|_2 \\
\ge&|w^\top\{\mathbb{E}[Q_c(\eta_0+w\vartheta)]-\mathbb{E}[Q_c(\eta_0)]\}|\\
\ge&w^\top (ZDZ^\top) w \vartheta>p\phi\vartheta_0N,
\end{align*}
where $\phi>0$ is some constant.
Here, the first inequality is due to the Cauchy-Schwarz inequality,
and the last one follows from Condition (C4).
Therefore,
\begin{align}\label{eqA7}
\inf_{\|\eta-\eta_0\|>\vartheta_0}\|\mathbb{E}[Q_c(\eta)]-\mathbb{E}[Q_c(\eta_0)]\|_2>p\phi \vartheta_0N.
\end{align}
Note that $\widehat Q_c(\widehat\eta)=0$ almost surely and $\mathbb{E}[Q_c(\eta_0)]=0$.
It then follows from \eqref{eqA6} that
\begin{align}
\nonumber
\|\mathbb{E}[Q_c(\widehat\eta)]-\mathbb{E}[Q_c(\eta_0)]\|_2=&\|\widehat Q_c(\widehat\eta)-\{\widehat Q_c(\widehat \eta)-\mathbb{E}[Q_c(\widehat\eta)]\}\|_2\\
=&O_p\bigg(\frac{\kappa}{m^2}\Big[(\kappa+q_n)\sqrt{n^3\log(n)}+\sqrt{\frac{n^2\log(n)}{h^{2p+2}}}+n+n^2h^{r}\Big]\bigg), \label{eqA8}
\end{align}
and for sufficiently large $n$, $\|\mathbb{E}[Q_c(\widehat\eta)]\|_2<\phi n^2 \vartheta_0/2$.
Therefore, by \eqref{eqA7}, we obtain $\|\widehat\eta -\eta_0\|_2<\vartheta_0$ with probability tending to one.
By Taylor's expansion of $U(\widehat\eta)$ at $\eta_0$, we obtain
\begin{align}
\nonumber
\|\widehat\eta-\eta_0\|_2
=&\|\mathbb{E}(ZDZ^\top)^{-1}\{\mathbb{E}[Q_c(\widehat\eta)]-\mathbb{E}[Q_c(\eta_0)]\}\|_2 \nonumber \\
\le & \|\mathbb{E}(ZDZ^\top)^{-1}\|_2\|\mathbb{E}[Q_c(\widehat\eta)]-\mathbb{E}[Q_c(\eta_0)]\|_2.\label{eqA9}
\end{align}
By Condition (C4), there exists a positive constant $\phi$ such that
$\phi_{\min}(\mathbb{E}(ZDZ^\top))>\phi.$
Thus, it follows from \eqref{eqA8} and \eqref{eqA9} that
$$
\|\widehat\eta-\eta_0\|_2\le O_p\bigg(\frac{\kappa}{\phi m^2}\bigg[(\kappa+q_n)\sqrt{\frac{\log(n)}{n}}+\sqrt{\frac{\log(n)}{n^2h^{2p+2}}}+\frac{1}{n}+h^{r}\bigg]\bigg).
$$

We now show the second part of Theorem 2.
Note that $\widehat\theta=\widehat\theta_{\widehat\eta}$
and $\theta_0=\theta_{0,\eta_0}.$
By the first part of Theorem 2, it suffices to show that
\begin{align}\label{Theorem4:eq1}
\sup_{\eta:\|\eta-\eta_0\|\le \delta_n}\|\widehat\theta_{\eta}-\theta_{0}\|_{\infty}=O_p\bigg(\frac{\kappa}{\phi m^2}\bigg[(\kappa+q_n)\sqrt{\frac{\log(n)}{n}}+\sqrt{\frac{\log(n)}{n^2h^{2p+2}}}+h^{r}\bigg]\bigg),
\end{align}
where $\delta_n$ is a sequence of constants determined by Theorem 2.
The proof of \eqref{Theorem4:eq1} is similar to that of Lemma \ref{Lem:UpperBound_fix_eta}.
We only show the
differences here.
Note that
\begin{align*}
\hat{\theta}_{\eta}-\theta_{0}=&V^{-1}U^\top[(\widehat Y-Z\eta)-U\theta_{0}]\\
=&V^{-1}U^\top Z(\eta-\eta_0)+V^{-1}F_{\eta_0}(\theta_0)+V^{-1}U^\top(\widehat Y-Y).
\end{align*}
With similar arguments as in the proof of \eqref{Lem:UpperBound_fix_eta:eq3}, we have
\begin{align}
\|V^{-1}U^\top Z(\eta-\eta_0)\|_{\infty}\le O_p\big(\kappa\delta_n\big),
\end{align}
which, together with \eqref{Lem:UpperBound_fix_eta:eq1} and  \eqref{Lem:UpperBound_fix_eta:eq2},
implies that \eqref{Theorem4:eq1} holds by choosing
\[
\delta_n=\frac{\kappa}{\phi m^2}\bigg[(\kappa+q_n)\sqrt{\frac{\log(n)}{n}}+\sqrt{\frac{\log(n)}{n^2h^{2p+2}}}+\frac{1}{n}+h^{r}\bigg].
\]
This completes the proof.
\end{proof}

\subsection{Proofs for Theorem 3}
\label{subsec-pro-th3}

Before beginning to prove Theorem 3, we present five lemmas.

\begin{lemma}
\label{Lemma4}
Under Conditions (C1)-(C7), we have
\[
\widehat\theta-\theta_0=V^{-1}U^\top(Y-Z\eta_0-U\theta_0)+O_p\bigg(\frac{\kappa^2}{\phi m^2}\bigg[\frac{1}{n}+(\kappa+q_n)\sqrt{\frac{\log(n)}{n}}+\sqrt{\frac{\log(n)}{n^2h^{2p+2}}}+h^{r}\bigg]\bigg) .
\]
\end{lemma}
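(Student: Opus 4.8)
The plan is to read off the Bahadur-type linearization by plugging the consistency rates of Theorem~\ref{theorem:consistency} into the exact algebraic decomposition of $\widehat\theta_\eta-\theta_0$ that already appears in the proof of that theorem. Recall $\widehat\theta=\widehat\theta_{\widehat\eta}=V^{-1}U^\top(\widehat Y-Z\widehat\eta)$, and that by Theorem~\ref{Theorem:LSE} together with the identifying restrictions $\gamma_{01}=1$ and $\beta_{0n}=0$ we have $\mathbb{E}(Y\mid Z)=U\theta_0+Z\eta_0$, hence $V\theta_0=U^\top\mathbb{E}(Y\mid Z)-U^\top Z\eta_0$, so $\theta_{0,\eta_0}=\theta_0$ and $V^{-1}F_{\eta_0}(\theta_0)=V^{-1}U^\top(Y-Z\eta_0-U\theta_0)$. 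Writing $\widehat Y-Z\widehat\eta-U\theta_0=(\widehat Y-Y)+(Y-Z\eta_0-U\theta_0)-Z(\widehat\eta-\eta_0)$ and using $V^{-1}U^\top U=I_{2n-1}$ gives the exact identity
\[
\widehat\theta-\theta_0=V^{-1}U^\top(Y-Z\eta_0-U\theta_0)+V^{-1}U^\top(\widehat Y-Y)-V^{-1}U^\top Z(\widehat\eta-\eta_0).
\]
The first summand is precisely the asserted leading term, so it remains to show the last two summands are $O_p$ of the stated remainder.

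The middle term is already controlled: by \eqref{Lem:UpperBound_fix_eta:eq2} in Lemma~\ref{Lem:UpperBound_fix_eta}, $\|V^{-1}U^\top(\widehat Y-Y)\|_\infty=O_p\big(\sqrt{\log(n)/(m^4n^2h^{2p+2})}+h^r/m^2\big)$, which is of the claimed order (under Condition~(C5) the prefactor $\kappa^2/\phi$ exceeds a positive constant, so this is dominated by the stated bound). For the last term I will first establish the deterministic estimate $\|V^{-1}U^\top Z\|_{\max}=O(\kappa)$. This is the step with genuine content, since the entries of $U^\top Z$ are partial column sums $\sum_{j\ne i}Z_{ij,k}$ of order $n\kappa$, so a termwise bound through $V^{-1}$ is far too crude. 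The remedy is to replace $V^{-1}$ by its approximation $\breve S$ from Lemma~\ref{lem:Vbound}: one checks $\breve S=(n-1)^{-1}(I_{2n-1}+ww^\top)$ with $w=(\mathbf{1}_n^\top,-\mathbf{1}_{n-1}^\top)^\top$, and that $Uw$ is the $0/1$ vector supported only on the dyads $(i,n)$, whence $w^\top U^\top Z=(Uw)^\top Z=\sum_{i=1}^{n-1}Z_{in,\cdot}$ has entries of size $O(n\kappa)$; therefore every entry of $\breve S U^\top Z$ is $O(\kappa)$. The correction $(V^{-1}-\breve S)U^\top Z$ is $O(\kappa)$ entrywise by Lemma~\ref{lem:Vbound}, and since $p$ is fixed this yields $\|V^{-1}U^\top Z\,v\|_\infty\le p\|V^{-1}U^\top Z\|_{\max}\|v\|_\infty=O(\kappa\|v\|_\infty)$ for any $v\in\mathbb{R}^p$.

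Applying this bound with $v=\widehat\eta-\eta_0$ and inserting the rate $\|\widehat\eta-\eta_0\|_2=O_p\big(\tfrac{\kappa}{\phi m^2}[\tfrac1n+(\kappa+q_n)\sqrt{\tfrac{\log n}{n}}+\sqrt{\tfrac{\log n}{n^2h^{2p+2}}}+h^r]\big)$ from the proof of Theorem~\ref{theorem:consistency} produces exactly the stated remainder, which also dominates the contribution of $V^{-1}U^\top(\widehat Y-Y)$; combining the three pieces completes the argument. The main obstacle is precisely the estimate $\|V^{-1}U^\top Z\|_{\max}=O(\kappa)$: one must exploit the cancellation built into $\breve S$ (equivalently, the quantitative near-orthogonality of each column of $Z$ to the range of $U$) rather than bound term by term; once this is in hand, everything else is routine bookkeeping on top of Theorem~\ref{theorem:consistency} and Lemmas~\ref{lem:Vbound} and \ref{Lem:UpperBound_fix_eta}.
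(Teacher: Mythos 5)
Your proposal is correct and follows essentially the same route as the paper's proof: the identical three-term decomposition into $V^{-1}F_{\eta_0}(\theta_0)$, $V^{-1}U^\top(\widehat Y-Y)$ (bounded via \eqref{Lem:UpperBound_fix_eta:eq2}), and $V^{-1}U^\top Z(\widehat\eta-\eta_0)$, with the last term controlled by replacing $V^{-1}$ by $\breve S$ via Lemma \ref{lem:Vbound} and invoking the rate for $\widehat\eta-\eta_0$ from Theorem \ref{theorem:consistency}. Your rank-one representation $\breve S=(n-1)^{-1}(I_{2n-1}+ww^\top)$ is just a compact restatement of the paper's explicit entrywise computation of $\breve S U^\top Z(\widehat\eta-\eta_0)$, so the cancellation you highlight is exactly the one the paper exploits.
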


\begin{proof}
We begin with the following decomposition
\begin{align*}
\widehat{\theta}-\theta_0=&V^{-1}U^\top(\widehat Y-Z\widehat{\eta})-\theta_0\\
=& \underbrace{V^{-1}U^\top(Y-Z\eta_0-U\theta_0)}_{B_5}
  + \underbrace{ V^{-1}U^\top(\widehat Y-Y)}_{B_6} -
  \underbrace{ V^{-1}U^\top Z(\widehat{\eta}-\eta_0)}_{B_7}.
\end{align*}
It now suffices to show
\begin{align}
    B_6=&O_p\bigg(\frac{1}{m^2}\Big(h^r+\sqrt{\frac{\log(n)}{n^2h^{2p+2}}}\Big)\bigg), \ \ \ \mbox{and}\label{eqB6}\\
    B_7=&O_p\bigg(\frac{\kappa^2}{\phi m^2}\bigg[\frac{1}{n}+(\kappa+q_n)\sqrt{\frac{\log(n)}{n}}+\sqrt{\frac{\log(n)}{n^2h^{2p+2}}}+h^{r}\bigg]\bigg).  \label{eqB7}
\end{align}
By \eqref{Lem:UpperBound_fix_eta:eq2}, we have \eqref{eqB6} holds.

We next show \eqref{eqB7}. Note that
\begin{align*}
B_7=&V^{-1}U^\top Z(\widehat{\eta}-\eta_0)\\
=& \underbrace{\breve SU^\top Z(\widehat{\eta}-\eta_0) }_{ B_{71}} +
   \underbrace{ (V^{-1}-\breve S)U^\top Z(\widehat{\eta}-\eta_0) }_{ B_{72} }.
\end{align*}
Let $B_{71,i}$ as the $i$th element of $B_{71}$. Then, we have
\begin{align*}
B_{71,i}=&\frac{1}{n-1}\sum_{j\ne i}^nZ_{ij}^\top(\widehat{\eta}-\eta_0)+\frac{1}{n-1}\sum_{k\ne n}Z_{kn}^\top(\widehat{\eta}-\eta_0),~~i\in[n],\\
B_{71,n+j}=&\frac{1}{n-1}\sum_{i\ne j}^nZ_{ij}^\top(\widehat{\eta}-\eta_0)-\frac{1}{n-1}\sum_{k\ne n}Z_{kn}^\top(\widehat{\eta}-\eta_0),~~j\in[n-1],
\end{align*}
which, together with Condition (C5) and Theorem 2, implies
\begin{align*}
\|B_{71}\|_{\infty}
\le & 2\kappa p\|\widehat{\eta}-\eta_0\|_\infty\le O_p\bigg(\frac{\kappa^2}{\phi m^2}\bigg[\frac{1}{n}+(\kappa+q_n)\sqrt{\frac{\log(n)}{n}}+\sqrt{\frac{\log(n)}{n^2h^{2p+2}}}+h^{r}\bigg]\bigg).
\end{align*}
We now bound $B_{72}$.
In view of Lemma \ref{lem:Vbound} and Theorem 2, we have
\begin{align*}
\|B_{72}\|_{\infty}\le & \frac{C_{4}(2n-1)}{n-1}\kappa p\|\widehat{\eta}-\eta_0\|_\infty\\
\le & O_p\bigg(\frac{\kappa^2}{\phi m^2}\bigg[\frac{1}{n}+(\kappa+q_n)\sqrt{\frac{\log(n)}{n}}+\sqrt{\frac{\log(n)}{n^2h^{2p+2}}}+h^{r}\bigg]\bigg),
\end{align*}
where $C_4$ is some constant.
Combing the results of $B_{71}$ and $B_{72,}$ we have \eqref{eqB7} holds.
This completes the proof.
\end{proof}

\begin{lemma}\label{Lemma:linear:eta}
Under Conditions (C1)-(C7), we have
\[
\widehat\eta-\eta_0=(Z^\top DZ)^{-1}Z^\top DQ+O_p\bigg(\frac{\kappa}{\phi m}\Big(h^{r}+\sqrt{\frac{\log n}{n^4h^{2p+2}}}\Big)\bigg),
\]
where $Q=(Q_1^\top,\dots,Q_{n}^\top)^\top,$ $Q_i=(Q_{i1},\dots,Q_{i,j-1},Q_{i,j+1},\dots,Q_{in})^\top$ and
$Q_{ij}=Y_{ij}-\mathbb{E}(Y_{ij}|X_{ij1}, Z_{ij}).$
\end{lemma}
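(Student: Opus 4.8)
The plan is to obtain a linear (Bahadur-type) representation for $\widehat\eta-\eta_0$ by expanding the estimator $\widehat\eta=(Z^\top DZ)^{-1}Z^\top D\widehat Y$ around the oracle quantity. First I would write
\[
\widehat\eta-\eta_0=(Z^\top DZ)^{-1}Z^\top D(\widehat Y-U\theta_0-Z\eta_0)
=(Z^\top DZ)^{-1}Z^\top D(\widehat Y-Y)+(Z^\top DZ)^{-1}Z^\top D(Y-U\theta_0-Z\eta_0),
\]
using $DU=0$ and $\mathbb{E}(Y|Z)=U\theta_0/\gamma_{01}+Z\eta_0$ (Theorem~\ref{Theorem:LSE} with $\gamma_{01}=1$) together with the definition $\epsilon_{ij}=Y_{ij}-\alpha_{0i}-\beta_{0j}-Z_{ij}^\top\eta_0$. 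The second term is then further split by inserting $\mathbb{E}(Y_{ij}|X_{ij1},Z_{ij})$, writing $Y-U\theta_0-Z\eta_0=Q+\big(\mathbb{E}(Y|X_{\cdot1},Z)-U\theta_0-Z\eta_0\big)$, where $Q$ collects the $Q_{ij}=Y_{ij}-\mathbb{E}(Y_{ij}|X_{ij1},Z_{ij})$. Since $\mathbb{E}(Y_{ij}|Z_{ij})$ has the additive form, the ``drift'' piece $\mathbb{E}(Y_{ij}|X_{ij1},Z_{ij})-\alpha_{0i}-\beta_{0j}-Z_{ij}^\top\eta_0$ has conditional mean zero given $Z_{ij}$; I would argue it is absorbed into the $Q$ term or, when taken against $Z^\top D$, it is $O_p$ of a lower-order term by a Hoeffding/variance bound, using $\phi_{\min}(Z^\top DZ/N)>\phi$ from Condition~(C4) and boundedness of $Z$ from Condition~(C5).

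The main work is controlling $(Z^\top DZ)^{-1}Z^\top D(\widehat Y-Y)$. Here I would reuse the expansion of $Z^\top(\widehat Y - Y)$ established in the proof of Theorem~\ref{theorem:consistency} (equations \eqref{Theorem:eq12}--\eqref{Theorem:eq13} in the excerpt), which is the standard ``stochastic equicontinuity / U-statistic projection'' step for Nadaraya--Watson denominators: one linearizes $1/\widehat f$ around $1/f$, the leading term telescopes into $\sum Z_{ij}[\mathbb{E}(Y_{ij}|Z_{ij})-\mathbb{E}(Y_{ij}|X_{ij1},Z_{ij})]$, and the remainder is $O_p\big((\kappa/m)(n^2h^r+\sqrt{\log n/h^{2p+2}})\big)$ by the uniform kernel rates of \cite{andrews_1995}. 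Dividing by $Z^\top DZ$ (of order $N=n(n-1)$, with smallest eigenvalue bounded below by $\phi N$) turns this into the stated $O_p\big(\frac{\kappa}{\phi m}(h^{r}+\sqrt{\log n/(n^4h^{2p+2})})\big)$ bound. The leading telescoped term, being exactly $-\big(\mathbb{E}(Y_{ij}|X_{ij1},Z_{ij})-\mathbb{E}(Y_{ij}|Z_{ij})\big)$ summed against $Z$, combines with the drift piece above so that what survives is precisely $(Z^\top DZ)^{-1}Z^\top DQ$; I should double-check the sign and that $D$ acting on the additive-in-$(\alpha,\beta)$ part kills it, which it does since those columns lie in $\mathcal{U}=\mathrm{col}(U)$.

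The step I expect to be the main obstacle is making the kernel-remainder bound uniform and sharp enough: one needs the uniform-in-$(x,z)$ convergence of $\widehat f_{XZ}$ and $\widehat f_Z$ together with the fact that these are themselves $U$-statistic-type averages over the $N$ pairs (not i.i.d. sums), so the ``leave-one-pair-out'' decoupling and the higher-order kernel cancellation (Condition~(C7)) must be invoked carefully to get the bias term at the rate $h^r$ rather than $h^2$, and the variance term at rate $\sqrt{\log n/(N h^{2p+2})}$. Once that is in hand, plugging everything together and collecting terms of order smaller than $(Z^\top DZ)^{-1}Z^\top DQ=O_p(1/\sqrt N)$ yields the claimed representation; the error term is dominated under the bandwidth conditions \eqref{Theorem:GA:eta:condition} assumed in Theorem~\ref{Theorem:GA:eta}, which is exactly what later permits the Gaussian approximation for $\sqrt N\mathcal{L}_2(\widehat\eta-\eta_0)$.
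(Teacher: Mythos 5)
Your proposal takes essentially the same route as the paper's proof: the identical decomposition $\widehat\eta-\eta_0=(Z^\top DZ)^{-1}Z^\top D(\widehat Y-Y)+(Z^\top DZ)^{-1}Z^\top D(Y-U\theta_0-Z\eta_0)$ via $Z^\top DU=0$, the same reuse of the kernel linearization of $Z^\top(\widehat Y-Y)$ from the proof of Theorem~\ref{theorem:consistency} giving the leading term $\Delta_{ij}=\mathbb{E}(Y_{ij}|Z_{ij})-\mathbb{E}(Y_{ij}|X_{ij1},Z_{ij})$ with remainder $O_p((\kappa/m)(n^2h^r+\sqrt{\log n/h^{2p+2}}))$, and the same recombination $\epsilon_{ij}+\Delta_{ij}=Q_{ij}$ after dividing by $Z^\top DZ\succeq\phi N$. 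One caveat: the hedged alternative in your first paragraph (treating the drift term as lower-order by a Hoeffding bound) would not work, since it is of the same order as $(Z^\top DZ)^{-1}Z^\top DQ$ itself; your final assembly, which cancels it against the telescoped kernel term and absorbs it into $Q$, is the correct and paper-matching argument.
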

\begin{proof}
Note that $(Z^\top DZ)^{-1}Z^\top D U=0.$ Thus, we have
\begin{align}\label{lem4:eq1}
\widehat\eta-\eta_0=&(Z^\top DZ)^{-1}Z^\top D(Y-U\theta_0-Z\eta_0+\widehat{Y}-Y)\nonumber\\
=& (Z^\top DZ)^{-1}Z^\top D(Y-U\theta_0-Z\eta_0)+(Z^\top DZ)^{-1}Z^\top D(\widehat{Y}-Y).
\end{align}
With similar arguments as in the proof of \eqref{Theorem:eq12}, we have
\begin{align*}
(Z^\top DZ)^{-1}Z^\top D(\widehat{Y}-Y)=(Z^\top DZ)^{-1}Z^\top D\Delta+O_p\bigg(\frac{\kappa}{\phi m}\Big(h^{r}+\sqrt{\frac{\log n}{n^4h^{2p+2}}}\Big)\bigg),
\end{align*}
where $\Delta=(\Delta_1^\top,\dots,\Delta_{n}^\top)^\top$, $\Delta_i=(\Delta_{i1},\dots,\Delta_{i,j-1},\Delta_{i,j+1},\dots,\Delta_{in})^\top$ and
$\Delta_{ij}=\mathbb{E}(Y_{ij}|Z_{ij})-\mathbb{E}(Y_{ij}|X_{ij}, Z_{ij}).$
By \eqref{lem4:eq1} and the fact that $\mathbb{E}(Y_{ij}|Z_{ij})=\alpha_{0i}+\beta_{0j}+Z_{ij}^\top\eta_0,$  we obtain
\begin{align*}
\widehat\eta-\eta_0=&(Z^\top DZ)^{-1}Z^\top D(Y-U\theta_0-Z\eta_0+\Delta)+O_p\bigg(\frac{\kappa}{\phi m}\Big(h^{r}+\sqrt{\frac{\log n}{n^4h^{2p+2}}}\Big)\bigg)\\
=& (Z^\top DZ)^{-1}Z^\top DQ+O_p\bigg(\frac{\kappa}{\phi m}\Big(h^{r}+\sqrt{\frac{\log n}{n^4h^{2p+2}}}\Big)\bigg).
\end{align*}
This completes the proof.
\end{proof}

\begin{lemma}[Nazarov's inequality]
\label{lem:Nazarov-inequality}
Let $Y=(Y_1,\dots, Y_d)^\top$ be a centered Gaussian vector in
$\mathbb{R}^d$ such that $\mathbb{E}[Y_j^2]\ge \underline{\sigma}^2$ for all $j\in[d]$ and some constant $\underline{\sigma}>0$.
Then for every $y\in\mathbb{R}^d$ and $t>0,$
\[
|\mathbb{P}(Y\le y+t)-\mathbb{P}(Y\le y)|\le (t/\underline{\sigma})(\sqrt{2\log(d)}+2).
\]
\end{lemma}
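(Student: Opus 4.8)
The plan is to treat the statement as the classical Gaussian anti-concentration inequality of Nazarov and reduce it to a uniform bound on the density of a maximum of jointly Gaussian variables. First I would normalize: replacing $Y$ by $Y/\underline{\sigma}$ makes every coordinate variance at least one and replaces $t$ by $t/\underline{\sigma}$, so it suffices to prove the inequality with $\underline{\sigma}=1$ and reinstate the factor $1/\underline{\sigma}$ at the end. I would also dispose of the degenerate case first: if $\Sigma$ is singular, add $\epsilon I_d$ to the covariance, prove the bound for the nondegenerate perturbation, and let $\epsilon\downarrow 0$, using that $\mathbb{P}(Y\le y+t)-\mathbb{P}(Y\le y)$ is continuous in $\Sigma$. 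So assume $Y$ has a density on $\mathbb{R}^d$.

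Next I would rewrite the quantity of interest through a maximum. With $W:=\max_{1\le j\le d}(Y_j-y_j)$ we have $\{Y\le y+t\}=\{W\le t\}$ and $\{Y\le y\}=\{W\le 0\}$, so for $t>0$,
\[
\mathbb{P}(Y\le y+t)-\mathbb{P}(Y\le y)=\mathbb{P}(0<W\le t)=\int_0^t f_W(w)\,dw\le t\,\sup_{w}f_W(w),
\]
where $f_W$ is the density of $W$. Hence the whole statement follows once $\sup_w f_W(w)\le \sqrt{2\log d}+2$ is established. Differentiating the CDF $F_W(w)=\mathbb{P}(Y_i\le y_i+w,\ \forall i)$ in $w$ (the variable enters each of the $d$ upper limits) yields the face decomposition
\[
f_W(w)=\sum_{j=1}^d \varphi_{\sigma_j}(y_j+w)\,\mathbb{P}\!\left(Y_i\le y_i+w,\ \forall i\ne j \mid Y_j=y_j+w\right),
\]
where $\varphi_{\sigma_j}$ is the $\mathcal{N}(0,\sigma_j^2)$ density with $\sigma_j^2\ge 1$. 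Each conditional probability is at most one, and $\varphi_{\sigma_j}(y_j+w)\le (2\pi)^{-1/2}\sigma_j^{-1}\exp\{-(y_j+w)^2/(2\sigma_j^2)\}$.

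To bound the sum I would split the index set at the threshold $y_j+w\ge \sigma_j\sqrt{2\log d}$: for such ``heavy'' indices the marginal density is at most $(2\pi)^{-1/2}\sigma_j^{-1}d^{-1}\le (2\pi)^{-1/2}d^{-1}$, so these contribute at most $(2\pi)^{-1/2}\le 1$ in aggregate. For the remaining indices one has to show the sum of the face terms is at most $\sqrt{2\log d}+1$, which is precisely the statement that the Gaussian surface measure of (the relevant portion of) the boundary of the box $\{x\le (y+w)\mathbf{1}\}$ is $O(\sqrt{\log d})$. I expect this last step to be the main obstacle — specifically, controlling the face probabilities uniformly over an arbitrary correlation structure $\Sigma$. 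Under independence it is a direct computation (the maximum of $d$ variables with variance $\ge 1$ concentrates on a scale $O(1/\sqrt{\log d})$ near a location $O(\sqrt{\log d})$, so its density is $O(\sqrt{\log d})$ at the mode), but for general $\Sigma$ the clean route is to invoke Nazarov's (2003) bound on the Gaussian perimeter of convex sets, or its self-contained proof via Gaussian integration by parts combined with the log-concavity of $\log\Phi$. Since the present lemma is quoted verbatim from the literature, in the paper this argument would simply be a citation; the sketch above records the structure of that cited proof, with the $1/\underline{\sigma}$ factor restored by undoing the initial rescaling.
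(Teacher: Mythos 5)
The paper gives no proof of this lemma beyond the citation ``see Lemma A.1 in Chernozhukov et al.\ (2017)'', and your sketch reconstructs exactly the structure of that cited argument (rescaling by $\underline{\sigma}$, a nondegeneracy approximation, reduction to a pointwise density bound for the shifted maximum via the face decomposition, the heavy/light index split at $\sqrt{2\log d}$, and Nazarov's 2003 Gaussian-perimeter bound for the light indices), ultimately deferring the same crucial step to the literature that the paper itself defers to. Your preliminary reductions are correct, so this is essentially the same citation-based route as the paper takes.
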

\begin{proof}
See Lemma A.1 in Chernozhukov et al. (2017).
\end{proof}

\begin{lemma}[Gaussian-to-Gaussian Comparison]\label{lem:GGC}
If $Z_1$ and $Z_2$ are centered
Gaussian random vectors in $\mathbb{R}^d$ with covariance matrices $\Sigma_1$ and $\Sigma_2,$ respectively,
and $\Sigma_2$  is such that $\Sigma_{2,jj}\ge c$ for all $j\in [d]$ for some constant $c > 0,$ then
\begin{align*}
\sup_{y\in\mathbb{R}^d} |\mathbb{P}(Z_1 \le y)-\mathbb{P}(Z_2 \le y)| \le C(\Psi \log^2 d)^{1/2},
\end{align*}
where $C$ is a constant depending only on $c$ and $\Psi=\|\Sigma_{1}-\Sigma_{2}\|_{\max}.$
\end{lemma}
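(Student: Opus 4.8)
The plan is to establish this Gaussian-to-Gaussian comparison over hyperrectangles via the smoothing-and-interpolation technique, following the approach of Chernozhukov et al. First I would rewrite each rectangle probability as the expectation of a max-functional, $\mathbb{P}(Z_i\le y)=\mathbb{E}\bigl[\mathbb{I}\bigl(\max_{1\le j\le d}(Z_{i,j}-y_j)\le 0\bigr)\bigr]$ for $i=1,2$, and replace the non-smooth integrand by a thrice-differentiable surrogate. For a smoothing level $\beta>0$ let $F_\beta(w)=\beta^{-1}\log\sum_{j=1}^{d}e^{\beta w_j}$ denote the soft-max, which obeys $0\le F_\beta(w)-\max_j w_j\le \beta^{-1}\log d$ and has gradient $\nabla F_\beta(w)=\pi(w)$ with $\pi_j(w)=e^{\beta w_j}/\sum_\ell e^{\beta w_\ell}$ and $\sum_j\pi_j(w)=1$. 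Choosing a $C^3$ function $g_0:\mathbb{R}\to[0,1]$ with $g_0\equiv 1$ on $(-\infty,0]$, $g_0\equiv 0$ on $[\psi,\infty)$ and $\|g_0^{(k)}\|_\infty\le C_k\psi^{-k}$ for $k=1,2,3$, I would set $m_y(w)=g_0\bigl(F_\beta(w-y)-\beta^{-1}\log d\bigr)$. Then
\begin{align*}
\mathbb{I}\bigl(\max_j(w_j-y_j)\le 0\bigr)\le m_y(w)\le \mathbb{I}\bigl(\max_j(w_j-y_j)\le \psi+\beta^{-1}\log d\bigr),
\end{align*}
and, using $\partial_j\partial_k F_\beta=\beta(\pi_j\delta_{jk}-\pi_j\pi_k)$ together with $\sum_j\pi_j=1$, one checks that $\sum_{j,k=1}^{d}\bigl|\partial_j\partial_k m_y(w)\bigr|\le C(\psi^{-2}+\beta\psi^{-1})$ uniformly in $w$ and $y$.

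Next I would interpolate between the two Gaussian laws. Let $\widetilde Z_2$ be an independent copy of $Z_2$ and put $Z(t)=\sqrt{t}\,Z_1+\sqrt{1-t}\,\widetilde Z_2$ for $t\in[0,1]$, a centered Gaussian vector with covariance $t\Sigma_1+(1-t)\Sigma_2$. Writing $\varphi(t)=\mathbb{E}[m_y(Z(t))]$ and applying Gaussian integration by parts (Stein's identity) coordinatewise gives
\begin{align*}
\varphi'(t)=\tfrac12\sum_{j,k=1}^{d}\bigl(\Sigma_{1,jk}-\Sigma_{2,jk}\bigr)\,\mathbb{E}\bigl[\partial_j\partial_k m_y(Z(t))\bigr],
\end{align*}
whence, with $\Psi=\|\Sigma_1-\Sigma_2\|_{\max}$,
\begin{align*}
\bigl|\mathbb{E}[m_y(Z_1)]-\mathbb{E}[m_y(Z_2)]\bigr|\le \tfrac{\Psi}{2}\sup_{t\in[0,1]}\sum_{j,k}\mathbb{E}\bigl|\partial_j\partial_k m_y(Z(t))\bigr|\le C\,\Psi\,(\psi^{-2}+\beta\psi^{-1}).
\end{align*}
Combining the sandwich bound with this estimate, and controlling the resulting shift of $y$ by the anti-concentration inequality of Lemma \ref{lem:Nazarov-inequality} applied to $Z_2$ (valid since $\Sigma_{2,jj}\ge c$ for all $j$), I would obtain, uniformly in $y\in\mathbb{R}^d$,
\begin{align*}
\mathbb{P}(Z_1\le y)-\mathbb{P}(Z_2\le y)\le C\,\Psi\,(\psi^{-2}+\beta\psi^{-1})+C\,c^{-1/2}\bigl(\psi+\beta^{-1}\log d\bigr)\sqrt{\log d},
\end{align*}
with the reverse inequality following from the same argument applied to a shifted rectangle. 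It then remains to calibrate $\beta$ and $\psi$ so as to balance these terms; with the appropriate choice one arrives at a bound of the order $C(\Psi\log^2 d)^{1/2}$ with $C$ depending only on $c$ (and one may assume $\Psi$ small, since otherwise the claim is trivial as the left-hand side never exceeds $1$).

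The hard part is this last calibration step: extracting the sharp $\Psi^{1/2}$-dependence, rather than the $\Psi^{1/3}$ that a single pass of the above smoothing produces, requires a careful -- in fact iterated -- choice of the smoothing scales $\beta$ and $\psi$ together with the sharpest available anti-concentration bound for $\max_j(Z_{2,j}-y_j)$, which is exactly where the non-degeneracy $\Sigma_{2,jj}\ge c$ must be used in full strength; this is the content of the refined high-dimensional Gaussian comparison estimates in the Chernozhukov et al. line of work, which I would invoke here. A secondary, purely bookkeeping, point is that every intermediate bound above must be kept uniform in $y$, so that the supremum over hyperrectangles is taken only at the very end; this is automatic because the surrogate $m_y$ and all the constants appearing are translation-covariant in $y$.
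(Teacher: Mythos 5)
Your proposal is, in substance, the same as the paper's: the paper proves this lemma by a one-line citation to Proposition 2.1 of Chernozhukov, Chetverikov, Kato and Koike (2022), and your argument likewise defers the decisive step — upgrading the rate from the $\Psi^{1/3}$-type bound that a single pass of the softmax-smoothing/Stein-interpolation scheme delivers to the claimed $C(\Psi\log^2 d)^{1/2}$ — to exactly that refined result. The preliminary interpolation sketch you give is a correct outline of the classical comparison argument, but since you (rightly) acknowledge it cannot reach the stated rate on its own and invoke the same external theorem the paper cites, the two proofs coincide where it matters.
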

\begin{proof}
See Proposition 2.1 in Chernozhukov et al. (2022).
\end{proof}

\begin{lemma}\label{lem:GA}
Suppose Conditions (C1)-(C7) hold.
If there exist some constant $\sigma_L^2$ and $\sigma_U^2$ such that $\sigma_L^2<\mathbb{E}(\epsilon_{ij}^2)<\sigma_U^2$ for all $i,j\in [n],$
and $(1/m+q_n+\kappa\big)^2\log^5(n)/n=o(1),$
then we have
\begin{align*}
\sup_{x\in \mathbb{R}^{M_{1\alpha}}}\big|\mathbb{P}\big(\mathcal{L}_1(\alpha)\Gamma U^\top\epsilon/\sqrt{n-1}\le x\big)-\mathbb{P}\big(\mathcal{L}_1(\alpha)G_1\le x\big)\big|&=o(1),\\
\sup_{x\in \mathbb{R}^{M_{1\beta}}}\big|\mathbb{P}\big(\mathcal{L}_1(\beta)\Gamma U^\top\epsilon/\sqrt{n-1}\le x\big)-\mathbb{P}\big(\mathcal{L}_1(\beta)G_1\le x\big)\big|&=o(1),
\end{align*}
where $\Gamma=(V/(n-1))^{-1}$ and $\epsilon=Y-Z\eta_0-U\theta_0.$
\end{lemma}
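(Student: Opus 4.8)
The plan is to write the left-hand vector as a sum of $N=n(n-1)$ independent, mean-zero random vectors and then invoke a high-dimensional central limit theorem over hyperrectangles. Since $\Gamma=(V/(n-1))^{-1}=(n-1)V^{-1}$ and $U^\top\epsilon=\sum_{i\neq j}\epsilon_{ij}u_{ij}$ (with $u_{ij}$ the row of $U$ indexed by the ordered pair $(i,j)$), one has
\[
W:=\mathcal{L}_1(\alpha)\Gamma U^\top\epsilon/\sqrt{n-1}=\sum_{i\neq j}\epsilon_{ij}\,c_{ij},\qquad c_{ij}:=\sqrt{n-1}\,\mathcal{L}_1(\alpha)V^{-1}u_{ij}\in\mathbb{R}^{M_{1\alpha}},
\]
where the coefficients $c_{ij}$ are deterministic and the residuals $\epsilon_{ij}=Y_{ij}-\alpha_{0i}-\beta_{0j}-Z_{ij}^\top\eta_0$ are mutually independent across ordered pairs with $\mathbb{E}\epsilon_{ij}=0$ (by Theorem \ref{Theorem:LSE} with $\gamma_{01}=1$). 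Because $\{\,\cdot\le x:x\in\mathbb{R}^{M_{1\alpha}}\,\}$ is a sub-class of hyperrectangles and $M_{1\alpha}\le n$ by Condition (C8), a high-dimensional CLT for sums of independent random vectors over hyperrectangles (Chernozhukov et al., 2017, and its refinement in Chernozhukov et al., 2022) yields $\sup_x|\mathbb{P}(W\le x)-\mathbb{P}(\widetilde G\le x)|=o(1)$ for the centered Gaussian $\widetilde G$ with $\mathrm{Cov}(\widetilde G)=\mathrm{Cov}(W)=(n-1)\mathcal{L}_1(\alpha)V^{-1}U^\top\mathrm{Cov}(\epsilon)UV^{-1}\mathcal{L}_1(\alpha)^\top$, once a uniform lower bound on the coordinate variances of $W$ and suitable moment bounds on the summands are in hand. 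Since $U^\top U=V$, this $\widetilde G$ is exactly $\mathcal{L}_1(\alpha)G_1$ (in the homogeneous case $\mathrm{Cov}(\epsilon)=\sigma_\epsilon^2 I_N$ the covariance collapses to $(n-1)\sigma_\epsilon^2\mathcal{L}_1(\alpha)V^{-1}\mathcal{L}_1(\alpha)^\top$), which gives the first display; the argument for $\mathcal{L}_1(\beta)$ is identical with the $\beta$-block in place of the $\alpha$-block.

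For the quantitative inputs I would use the explicit formula \eqref{eq-inverse-V} (equivalently Lemma \ref{lem:Vbound}): every entry of $V^{-1}$ is $O(1/n)$ and every diagonal entry is bounded below by a constant multiple of $1/n$. As each $u_{ij}$ has exactly two unit entries, $\|V^{-1}u_{ij}\|_\infty=O(1/n)$, and combining this with the row-sparsity and boundedness of $\mathcal{L}_1(\alpha)$ from Condition (C8) gives $\|c_{ij}\|_\infty\le C\sqrt{n}\cdot n^{-1}=C/\sqrt{n}$. Moreover $|\epsilon_{ij}|\le 1/m+q_n$ almost surely, because $|Y_{ij}|\le 1/m$ and $|\alpha_{0i}+\beta_{0j}+Z_{ij}^\top\eta_0|\le q_n$; hence every coordinate of every summand is bounded by $C(1/m+q_n)/\sqrt{n}$. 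Using $\mathbb{E}(\epsilon_{ij}^2)<\sigma_U^2$ together with the fact that, for a fixed coordinate $k$, a cancellation in \eqref{eq-inverse-V} leaves only $O(n)$ of the $c_{ij}$ with $k$-th coordinate of order $n^{-1/2}$ (the remaining $O(n^2)$ being of order $n^{-3/2}$), one obtains $\max_k\sum_{i\neq j}\mathbb{E}|\epsilon_{ij}(c_{ij})_k|^3=O((1/m+q_n)/\sqrt{n})$, that is, the effective sample size governing the CLT is $n$, not $N$. For the variance lower bound, $\mathrm{Cov}(W)_{kk}\ge(n-1)\sigma_L^2\big(\mathcal{L}_1(\alpha)V^{-1}\mathcal{L}_1(\alpha)^\top\big)_{kk}$; since the leading $n\times n$ block of $V^{-1}$ is a principal submatrix of the positive definite matrix $V^{-1}$, whose smallest eigenvalue equals $1/\lambda_{\max}(V)\ge 1/(2(n-1))$ (using $\lambda_{\max}(V)\le\|U\|_1\|U\|_\infty=2(n-1)$), and each row of $\mathcal{L}_1(\alpha)$ is nonzero with entries of order one, one gets $\mathrm{Cov}(W)_{kk}\gtrsim\sigma_L^2>0$, and a matching upper bound $\mathrm{Cov}(W)_{kk}\le C\sigma_U^2 s_{U\alpha}^2L_{U\alpha}^2$ follows from \eqref{eq-inverse-V}. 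Plugging these into the CLT error, which is a power of $(1/m+q_n)^2\log^5(n)/n$ (using $M_{1\alpha}N\le n^3$ to replace $\log(M_{1\alpha}N)$ by $O(\log n)$), shows it is $o(1)$ under the hypothesis $(1/m+q_n+\kappa)^2\log^5(n)/n=o(1)$, the $\kappa$ only enlarging the bound harmlessly.

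The main obstacle is matching the CLT rate to this hypothesis. A crude count would treat the effective sample size as $N=n(n-1)$, which is far too optimistic; one must instead exploit the block structure of $V^{-1}$ from \eqref{eq-inverse-V}/Lemma \ref{lem:Vbound} to see that each coordinate of $W$ is, up to $O(n^{-1})$ corrections, an average of only $O(n)$ of the $\epsilon_{ij}$'s, and then control the summand moments at that scale; this is precisely what produces the factor $\log^5(n)/n$ rather than $\log^5(n)/n^2$. A secondary point is that $\mathbb{E}(\epsilon_{ij}^2)$ need not be constant in $(i,j)$: this is harmless here since $\widetilde G$ is defined through $\mathrm{Cov}(W)$ and coincides with $\mathcal{L}_1(\alpha)G_1$ under the homogeneity imposed in Theorem \ref{Theorem:GA:theta}, but the Gaussian-to-Gaussian comparison of Lemma \ref{lem:GGC} should be kept available in case a later step must pass from the sandwich covariance to $(n-1)\sigma_\epsilon^2V^{-1}$.
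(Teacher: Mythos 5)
Your proposal is correct in substance but organizes the argument differently from the paper. The paper never applies the CLT to the $N$ pairwise summands with the exact inverse; instead it first replaces $V^{-1}$ by the explicit approximation $\breve S$ of Lemma \ref{lem:Vbound}, so that each coordinate of $\mathcal{L}_1\widetilde S U^\top\epsilon$ becomes an exact sum of $n$ independent blocks of residuals, applies Theorem 2.1 of Chernozhukov et al.\ (2022) to those $n$ blocks to get the rate $(B_n^2\log^5 n/n)^{1/4}$ with $B_n\asymp(1/m+q_n+\kappa)$, then controls the remainder $\mathcal{L}_1(\Gamma-\widetilde S)U^\top\epsilon/\sqrt{n-1}$ by Hoeffding's inequality and absorbs it via Nazarov's anti-concentration inequality (Lemma \ref{lem:Nazarov-inequality}), and finally passes from the $\widetilde S$-covariance Gaussian to $G_1$ by the Gaussian-to-Gaussian comparison (Lemma \ref{lem:GGC}). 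You instead keep the exact $V^{-1}$, view the statistic as a weighted sum over all $N=n(n-1)$ independent pairs, and extract the correct effective scale from the cancellation in \eqref{eq-inverse-V}: indeed, for $s\neq i$ one has $(V^{-1})_{si}+(V^{-1})_{s,n+j}=-1/\{(n-1)(n-2)\}$, so only the $O(n)$ pairs with $i$ in the (sparse) support of a row of $\mathcal{L}_{1\alpha}$ carry coordinates of order $n^{-1/2}$, and your Lyapunov-type moment bounds then reproduce exactly the paper's rate $((1/m+q_n+\kappa)^2\log^5 n/n)^{1/4}$, since the CCK conditions are averaged-moment conditions that tolerate the heterogeneous scales. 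Your route buys a one-shot argument with no remainder term, no Nazarov step, and (in the homogeneous-variance case used in Theorem \ref{Theorem:GA:theta}) no Gaussian comparison, at the price of a more delicate verification of the summand moments; the paper's route buys simpler summands (plain block averages) at the price of three auxiliary steps. Two shared caveats, neither fatal: the variance lower bound needs the rows of $\mathcal{L}_{1\alpha}$ to have $\ell_2$-norm bounded away from zero, which Condition (C8) does not literally state (the paper's own verification in its display (C.9)-type step is equally informal on this point); and identifying the limiting covariance with that of $G_1$ requires $\mathbb{E}(\epsilon_{ij}^2)\equiv\sigma_\epsilon^2$, which you correctly flag and which is imposed in Theorem \ref{Theorem:GA:theta} rather than in the lemma's stated two-sided bound.
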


\begin{proof}
We show the first part and the second part can be obtained similarly.
For simplicity, we let $\mathcal{L}_1=\mathcal{L}_1(\alpha)$
and $L_{1,sk}$ be the $(s,k)$th element of $\mathcal{L}_{1}$.
Recall that $\epsilon_{ij}=Y_{ij}-Z_{ij}^\top\eta_0-\alpha_{0i}-\beta_{0j}$.
Let $\widetilde S=(n-1)\breve S.$
 We begin with the following decomposition
\begin{align*}
&\big|\mathbb{P}\big(\mathcal{L}_1\Gamma U^\top\epsilon/\sqrt{n-1}\le x\big)-\mathbb{P}\big(\mathcal{L}_1G_1\le x\big)\big|\\
\le & \underbrace{ \big|\mathbb{P}\big(\mathcal{L}_1\Gamma U^\top\epsilon/\sqrt{n-1}\le x\big)-\mathbb{P}\big(\mathcal{L}_1\widetilde G_1\le x\big)\big|  }_{B_{81}}
+ \underbrace{ \big|\mathbb{P}\big(\mathcal{L}_1\widetilde G_1\le x\big)-\mathbb{P}\big(\mathcal{L}_1G_1\le x\big)\big|
}_{ B_{82} },
\end{align*}
where $\widetilde G_1\sim \mathcal{N}(0,\widetilde S)$.
To show $B_{81}=o(1),$  we first show
\begin{align}\label{Lemma8:eq1}
    \big|\mathbb{P}\big(\mathcal{L}_1\widetilde S U^\top\epsilon/\sqrt{n-1}\le x\big)-\mathbb{P}\big(\mathcal{L}_1\widetilde G_1\le x\big)\big|=o(1).
\end{align}
A direct calculation yields that the $k$th element of $\mathcal{L}_1\widetilde S U^\top\epsilon$
is $\sum_{i=1}^n\mathcal{L}_{1k}(\epsilon_i+\epsilon_{in}),$
where $\mathcal{L}_{1k}$ is the $k$th row of $\mathcal{L}_{1}$
and $\epsilon_i=(\epsilon_{1i},\dots,\epsilon_{ni})^\top$ with $\epsilon_{ii}=0$.
We apply Theorem 2.1 of Chernozhukov et al. (2022) to the sequence $\mathcal{L}_{1k}(\epsilon_i+\epsilon_{in})=\sum_{s=1}^{n}L_{1,ks}(\epsilon_{si}+\epsilon_{in})$.
By Conditions (C5) and (C6), we know that
\begin{align*}
|\epsilon_{ij}|\le |Y_{ij}|+|\alpha_{0i}+\beta_{0j}|+|Z_{ij}^\top \eta|
\le 1/m+q_n+\kappa.
\end{align*}
By Condition (C8), we have $|L_{1,ks}{\epsilon}_{si}|\le L_{U\alpha}|\epsilon_{si}|$
and
$$\big|\sum_{s=1}^{n}L_{1,ks}(\epsilon_{si}+\epsilon_{in})\big|\le 2L_{U\alpha}s_{U\alpha}\max_{1\le i\neq j\le n}|\epsilon_{ij}\big|.$$
This implies that
\begin{align*}
\mathbb{E}\bigg(\exp\Big\{\Big|\sum_{s=1}^{n}L_{1,sk}(\epsilon_{si}+\epsilon_{in})\Big|/B_n\Big\}\bigg)\le 2
\end{align*}
with $B_n\ge 2L_{U\alpha}s_{U\alpha}(1/m+q_n+\kappa).$

We next show
\begin{align}\label{eq:M}
\frac{1}{n-1}\sum_{k\neq i}^n\mathbb{E}({\epsilon}_{ik}^2) \ge b_1^2,&\qquad\qquad~
\frac{1}{n-1}\sum_{k\neq j}^n\mathbb{E}({\epsilon}_{kj}^2) \ge b_1^2,\nonumber\\
\frac{1}{n-1}\sum_{k\neq i}^n \mathbb{E}({\epsilon}_{ik}^4)\le B_n^2 b_2^2&~~~~
\text{and}~~~~
\frac{1}{n-1}\sum_{k\neq j}^n \mathbb{E}({\epsilon}_{kj}^4)\le B_n^2 b_2^2,
\end{align}
where $b_1$ and $b_2>0$ are some constants.
By condition $\sigma_L^2<\mathbb{E}(\epsilon_{ij}^2)<\sigma_U^2,$  we have $\mathbb{E}(\epsilon_{ij}^4)\le B_n^2  \sigma_U^2.$
Therefore, \eqref{eq:M} holds by setting $b_1=\sigma_L$ and $b_2=\sigma_U.$

By condition (C8), an application of Theorem 2.1 in Chernozhukov et al. (2022) yields that
\begin{align*}
   \big| \mathbb{P}\big(\mathcal{L}_1\widetilde S U^{\top}\epsilon/\sqrt{n-1}<x\big)-\mathbb{P}(\mathcal{L}_1\widetilde G< x)\big|\le C_5\Big(\frac{B_n^2\log^5n}{n}\Big)^{1/4},
\end{align*}
where $C_5$ is a constant depending only on $b_1$ and $b_2.$
This implies that \eqref{Lemma8:eq1} holds  with $B_n^2\log^5n/n=o(1).$

We next show
\begin{align*}
   \big| \mathbb{P}\big(\mathcal{L}_1\Gamma U^{\top}\epsilon/\sqrt{n-1}<x\big)-\mathbb{P}(\mathcal{L}_1G_1< x)\big|=o(1).
\end{align*}
Let $[(\Gamma-\widetilde S) U^\top/\sqrt{n-1}]_j$ denote the $j$th row of $(\Gamma-\widetilde S) U^\top/\sqrt{n-1}.$
Note that
\begin{align*}
\|(V^{-1}-\breve S)(I-V\breve S)\|_{\max}&=\|(V^{-1}-\breve S)-\breve S(I-V\breve S)\|_{\max}\\
& \le \|V^{-1}-\breve S\|_{\max}+\|\breve S(I-V\breve S)\|_{\max}\\
& \le \frac{C_0}{(n-1)^2}+\frac{2}{(n-1)^2}.
\end{align*}
This, together with Condition (C8), we have
\[
\big\|[\mathcal{L}_1(\Gamma-\widetilde S) U^\top/\sqrt{n-1}]_j\big\|_2^2\le O\Big(\frac{1}{n-1}\Big).
\]
Then, by the general Hoeffding’s inequality (Theorem 2.6.3, Vershynin, 2018), we have
\[
\mathbb{P}\bigg(\max_{j\in [2n-1]} \big|\frac{1}{\sqrt{n-1}}[\mathcal{L}_1(\Gamma-\widetilde S) U^\top]_j\epsilon\big|>C_6\big(1/m+q_n+\kappa\big)\sqrt{\frac{\log(n-1)}{n-1}}\bigg)\le (n-1)^{-c},
\]
where $C_6>0$ is some sufficiently large constant and $c>0$ is a constant depending on $C_6.$
This implies that
\[
\|\mathcal{L}_1(\Gamma-\widetilde S) U^\top\epsilon/\sqrt{n-1}\|_{\infty}=O_p\bigg((1/m+q_n+\kappa\big)\sqrt{\frac{\log(n-1)}{n-1}}\bigg).
\]
Let $\varrho_{1n}=C_6(1/m+q_n+\kappa\big)\sqrt{\log(n-1)/(n-1)},$
and define $R_n=(R_{ln})^\top=\mathcal{L}_1(\Gamma-\widetilde S) U^\top\epsilon.$
By Condition (C8), we have
\begin{align*}
\|\sqrt{n-1}\mathcal{L}_1R_n\|_{\infty}\le & \max_{1\le l\le 2n-1}\sum_{k\in S_1}|L_{1,lk}||\sqrt{n-1}R_{kn}|\\
\le & \max_{1\le l\le M_1, 1\le k\le 2n-1}|L_{1,lk}| \max_{1\le l\le 2n-1}\sum_{k\in S_1}|\sqrt{n-1}R_{kn}|\\
\le & \max_{1\le l\le M_1, 1\le k\le 2n-1}|L_{1,lk}| \max_{1\le l\le 2n-1}\max_{k\in S_1}|\sqrt{n-1}R_{kn}|=O_p(L_{1n}).
\end{align*}
Now for any $x\in \mathbb{R}^{2n-1},$
\begin{align*}
     \mathbb{P}\big(\mathcal{L}_1\Gamma U^\top\epsilon/\sqrt{n-1}\le x\big)
    = & \mathbb{P}\big(\mathcal{L}_1\widetilde S U^\top\epsilon/\sqrt{n-1} + \mathcal{L}_1(\Gamma-\widetilde S) U^\top\epsilon/\sqrt{n-1}\le x\big)\\
  \le & \mathbb{P}\big(\mathcal{L}_1\widetilde S U^\top\epsilon/\sqrt{n-1}\le x+\varrho_{1n}\big)+o(1) \\
  \le & \mathbb{P}\big(\mathcal{L}_1\widetilde G_1 \le x+L_{1n}\big)+o(1)\\
  \le & \mathbb{P}\big(\mathcal{L}_1\widetilde G_1 \le x\big)+O(\varrho_{1n}\sqrt{\log(2n-1)})+o(1),
\end{align*}
where second to last
inequality holds due to \eqref{Lemma8:eq1} and the last inequality holds by Lemma \ref{lem:Nazarov-inequality}.
Likewise, we have
\begin{align*}
     \mathbb{P}\big(\mathcal{L}_1\Gamma U^\top\epsilon/\sqrt{n-1}\le x\big)\ge & \mathbb{P}\big(\mathcal{L}_1\widetilde G_1 \le x\big)-O(\varrho_{1n}\sqrt{\log(2n-1)})-o(1).
\end{align*}
Thus, we obtain $B_{81}=o(1)$ with $\varrho_{1n}\sqrt{\log(2n-1)}=o(1).$

In addition,  by Lemma \ref{lem:Vbound}, we have
\begin{align*}
\|\mathcal{L}_1(\Gamma- \widetilde S)\mathcal{L}_1^\top \|_{\max} \le & O((n-1)^{-1}),
\end{align*}
which, together with Lemma \ref{lem:GGC}, implies that
\[
B_{82}=\big|\mathbb{P}\big(\mathcal{L}_1\widetilde G_1\le x\big)-\mathbb{P}\big(\mathcal{L}_1G_1\le x\big)\big|\le O(\log(n)/\sqrt{n}).
\]
This completes the proof.
\end{proof}

We are now ready to prove Theorem 3.

\begin{proof}[Proof of Theorem 3]
Define $\varrho_{2n}=\kappa^2/(\phi m^2)[1/\sqrt{n}+\sqrt{\log(n)/(nh^{2p+2})}+\sqrt{n}h^{r}]$
and $r_n=\sqrt{n-1}\mathcal{L}_1(\alpha)(\widehat\theta-\theta_0)-\mathcal{L}_1(\alpha)\Gamma U^\top\epsilon/\sqrt{n-1}$.
Then, by Lemma \ref{Lemma4} and Condition (C8),  we have
\begin{align*}
&\mathbb{P}\big(\sqrt{n-1}|\mathcal{L}_1(\alpha)r_n|>C_7\varrho_{2n}\big)\\
\le &\mathbb{P}\Big(\max_{1\le l\le M_1, 1\le k\le 2n-1}|L_{1,lk}(\alpha)| \max_{1\le l\le 2n-1}\max_{k\in S_1}|\sqrt{n-1}r_{kn}|>C_7\varrho_{2n}\Big)=o(1),
\end{align*}
where $C_7>0$ is a sufficiently large constant.
Now for any $x\in\mathbb{R}^{2n-1},$
\begin{align*}
\mathbb{P}\big(\sqrt{n-1}\mathcal{L}_1(\alpha)(\widehat\theta-\theta_0)\le x \big)
=& \mathbb{P}\big(\mathcal{L}_1(\alpha)\Gamma U^\top\epsilon/\sqrt{n-1}+r_n\le x\big)\\
\le & \mathbb{P}\big(\mathcal{L}_1(\alpha)\Gamma U^\top\epsilon/\sqrt{n-1}\le x+C_7\varrho_{2n}\big)+o(1)\\
\le & \mathbb{P}\big(\mathcal{L}_1(\alpha)G_1\le x+C_7\varrho_{2n}\big)+o(1) \qquad\qquad\qquad~~~~~(\text{by~Lemma}~\ref{lem:GA})\\
\le & \mathbb{P}\big(\mathcal{L}_1(\alpha)G_1\le x\big)+O(\varrho_{2n}\sqrt{\log(2n-1)})+o(1)~~~~(\text{by~Lemma}~\ref{lem:Nazarov-inequality}).
\end{align*}
Likewise, we have
\begin{align*}
\mathbb{P}\big(\sqrt{n-1}\mathcal{L}_1(\alpha)(\widehat\theta-\theta_0)\le x \big)\ge & \mathbb{P}\big(\mathcal{L}_1(\alpha)G_1\le x\big)-O(\varrho_{2n}\sqrt{\log(2n-1)})-o(1).
\end{align*}
Since $\varrho_{2n}\sqrt{\log(2n-1)}=o(1),$ we obtain the conclusion of the theorem.
\end{proof}

\subsection{Proof of Theorem 4}
\label{subsec-pro-th4}

\begin{proof}
By Lemma \ref{Lemma:linear:eta},
the proof follows the same arguments as Lemma \ref{lem:GA} and Theorem 3,
and we omit the details here.
\end{proof}

\subsection{Proof of Theorem 5}
\label{subsec-pro-th5}

\begin{proof}
We only present the proof of the first part,
and the proof of the second part is similar and omitted.
By Lemma \ref{lem:GGC} and Condition (C8), it suffices to show $\Psi=o_p(\log^{-2}n/n),$
where
\begin{align}\label{Theorem5:eq1}
    \Psi=\|\sigma_{\epsilon}^2V^{-1}-\widehat{\sigma}_{\epsilon}^2 V^{-1}\|_{\max}.
\end{align}
Note that
\begin{align*}
\widehat{\sigma}_{\epsilon}^2-\sigma_{\epsilon}^2
=&\frac{1}{n(n-1)}\sum_{i=1}^n\sum_{j\neq i}(\widehat\epsilon_{ij}-\epsilon_{ij})^2
+\frac{2}{n(n-1)}\sum_{i=1}^n\sum_{j\neq i}\epsilon_{ij}(\widehat\epsilon_{ij}-\epsilon_{ij})\\
&+\frac{1}{n(n-1)}\sum_{i=1}^n\sum_{j\neq i}(\epsilon_{ij}^2-\sigma_{\epsilon}^2).
\end{align*}
Then, by Theorem 2 and \eqref{eqB6}, we can show
\begin{align*}
    |\widehat{\sigma}_{\epsilon}^2-\sigma_{\epsilon}^2|
    = & O_p\big(\|\widehat Y-Y\|_{\infty}+2\|\widehat\theta-\theta_0\|_{\infty}+\kappa\|\widehat\eta-\eta_0\|+1/n\big).\\
    = & O_p\Big(\frac{\kappa^2}{\phi m^2}\Big[h^r+(\kappa+q_n)\sqrt{\frac{\log(n)}{n}}+\sqrt{\frac{\log(n)}{n^2h^{2p+2}}}\Big]\Big).
\end{align*}
This, together with $\|V^{-1}\|_{\max}=O(1/(n-1)),$ gives
\begin{align*}
|\widehat{\sigma}_{\epsilon}^2-\sigma_{\epsilon}^2|\times \|V^{-1}\|_{\max}= O_p\Big(\frac{\kappa^2}{n\phi m^2}\Big(h^r+(\kappa+q_n)\sqrt{\frac{\log n}{n}}+\sqrt{\frac{\log n}{n^2h^{2p+2}}}\Big)\Big).
\end{align*}
Because $(\kappa^2/\phi m^2)[h^r+(\kappa+q_n)\sqrt{(\kappa+q_n)\log(n)/n}+\sqrt{\log(n)/(n^2h^{2p+2})}]=o(\log^{-2}n)$ under the condition (4.3),
we have $\Psi=o_p(\log^{-2}n/n)$.
This completes the proof.
\end{proof}

\subsection{Proof of Proposition 1}
\label{subse-pro-pro1}

\begin{proof}
We only show the conclusion for $\mathcal{T}_{\alpha,S}.$
Let $\widetilde{\mathcal{T}}_{\alpha,S}=\sqrt{n-1}|\max_{i\in[n]}\widehat\alpha_i-\alpha_{0i}|/\widehat\zeta_{i}^{1/2}$
and $\Psi=\text{diag}\{\zeta_{1}^{1/2},\dots,\zeta_{n}^{1/2}\}.$
Since $|\widehat\zeta_{i}-\zeta_{i}|\le (n-1)\|\sigma_{\epsilon}^2V^{-1}-\widehat{\sigma}_{\epsilon}^2V^{-1}\|_{\max}=o_p(1)$ as in the proof of Theorem 5, we have
\begin{align}\label{Proposition:5.1:eq0}
\max_{i\in[n]}|1/\widehat\zeta_{i}-1/\zeta_{i}|=\max_{i\in[n]}|\widehat\zeta_{i}-\zeta_{i}|/(\widehat\zeta_{i}\zeta_{i})=o_p(1).
\end{align}
Under the conditions of Theorem 5,
we can show that the distribution of $\widetilde{\mathcal{T}}_{\alpha,S}$
can be approximated by $\|G_{1\alpha}\|_{\infty}$
with $G_{1\alpha}=(G_{11},\dots,G_{1n})^\top\sim N(0,\Psi^{-1/2}\Omega_{\alpha}\Psi^{-1/2}),$
where $\Omega_{\alpha}=(\omega_{\alpha,ij})$ is the upper left $n\times n$ block of $(n-1)\sigma^2V^{-1}.$
Note that $|\omega_{\alpha,ij}|/(\zeta_{i}\zeta_{j})<1$ and $V$ is positive definite.
By Lemma 6 of Cai et al. (2014), we have for any $x\in\mathbb{R}$ and as $n\rightarrow \infty,$
\[
\mathbb{P}\big(\max_{i\in[n]}|G_{1i}|^2-2\log(n)+\log(\log n)\le x\big)\rightarrow F(x)=\exp\{-\exp(-x/2)/\sqrt{\pi}\}.
\]
In other word, $\max_{i\in[n]}|G_{1i}|^2$ converges in distribution to a Gumbel extreme value distribution.
It further implies that
\[
\mathbb{P}\big(\widetilde{\mathcal{T}}_{\alpha,S}^2\le 2\log(n)-\log(\log n)/2\big)\rightarrow 1.
\]
This, together with the result of Theorem 5, implies that
\begin{align}\label{proposition:eq1}
|c_{\alpha,S}(\nu)^2-2\log(n)+\log(\log n)/2-q_{1-\alpha}|=o(1),
\end{align}
where $q_{1-\alpha}$ denotes the upper $\alpha$-quantile of the distribution $F(x).$
Let $i^*$ be the index such that $|\widehat\alpha_{i^*}-\alpha_{0i^*}|/\zeta_{i^*}^{1/2}>(\sqrt{2}+\lambda_0)\sqrt{\log(n)/(n-1)}.$
Then, for any $\varsigma>0,$ we have
\begin{align}\label{proposition:eq2}
|\alpha_{0i^*}|^2/\widehat\zeta_{i^*}=&|\widehat\alpha_{i^*}-(\widehat\alpha_{i^*}-\alpha_{0i^*})|^2/\widehat\zeta_{i^*}\nonumber\\
\le & (1+\varsigma^{-1})|\widehat\alpha_{i^*}-\alpha_{0i^*}|^2/\widehat\zeta_{i^*} +(1+\varsigma)|\widehat \alpha_{i^*}|^2/\widehat\zeta_{i^*}.
\end{align}
Note that $(n-1)|\widehat\alpha_{i^*}-\alpha_{0i^*}|^2/\widehat\zeta_{i^*}=o_p(\log n)$ if $i^*$ is fixed.
Thus, by \eqref{proposition:eq2} and the fact that $\alpha\in \mathcal{U}_{\alpha}(\sqrt{2}+\lambda_0),$
we have
\begin{align}\label{proposition:eq3}
(n-1)|\widehat\alpha_{i^*}|^2/\widehat\zeta_{i^*}\ge \frac{1}{1+\varsigma}\Big\{(\sqrt{2}+\lambda_0)^2(\log n-o_p(\log n))\Big\}.
\end{align}
By \eqref{proposition:eq1}, \eqref{proposition:eq3} and choosing a small $\varsigma,$ we have
\[
\inf_{\alpha_0\in \mathcal{U}_{\alpha}(\sqrt{2}+\lambda_0)}\mathbb{P}\big(\max_{i\in[n]} \sqrt{n-1}|\widehat\alpha_i|/\widehat\zeta_{i}^{1/2}>c_{\alpha,S}(\nu) \big)\rightarrow 1.
\]
This completes the proof.
\end{proof}

\subsection{Proof of Proposition 2}
\label{subsec-pro-pro2}

\begin{proof}
We first show consistency of $\widehat{\mathcal{S}}_{\alpha}(2)$.
With similar arguments as in the proof of Proposition 1, we can show that
\begin{align}\label{proposition:SR:eq1}
\mathbb{P}\big(\max_{i\in\mathcal{S}_{0\alpha}^{c}}\sqrt{n-1}|\widehat\alpha_i|^2/\widehat\zeta_{i}\le 2\log(n)-\log(\log n)/2\big)\rightarrow 1,
\end{align}
where $\mathcal{S}_{0\alpha}^{c}$ denotes the complementary set of $\mathcal{S}_{0\alpha}.$

Note that
\begin{align*}
\min_{i\in\mathcal{S}_{0\alpha}}|\alpha_{0i}|^2/\widehat\zeta_{i}
=&\min_{i\in\mathcal{S}_{0\alpha}}|\widehat\alpha_{i}-(\widehat\alpha_{i}-\alpha_{0i})|^2/\widehat\zeta_{i}\nonumber\\
\le & 2\max_{i\in\mathcal{S}_{0\alpha}}|\widehat\alpha_{i}-\alpha_{0i}|^2/\widehat\zeta_{i} +2\min_{i\in\mathcal{S}_{0\alpha}}|\widehat \alpha_{i}|^2/\widehat\zeta_{i}.
\end{align*}
In addition, because $(n-1)|\alpha_{0i}|^2/\widehat\zeta_{i}=(n-1)|\alpha_{0i}|^2/\zeta_{i}+o_p(1)$ and
\[
\mathbb{P}\big(\max_{i\in\mathcal{S}_{0\alpha}}(n-1)|\widehat\alpha_{i}-\alpha_{0i}|^2/\widehat\zeta_{i}\le 2\log(n)-\log(\log n)/2\big)\rightarrow 1,
\]
we have
\begin{align}\label{proposition:SR:eq2}
&\mathbb{P}\big(\min_{i\in\mathcal{S}_{0\alpha}}(n-1)|\widehat\alpha_{i}|^2/\widehat\zeta_{i}\le 2\log(n)\big)\nonumber\\
\ge  &\mathbb{P}\big(2\min_{i\in\mathcal{S}_{0\alpha}}(n-1)|\widehat\alpha_{i}|^2/\widehat\zeta_{i}-4\log(n)+\log(\log n)\le 8\log(n)\big)
\rightarrow 1.
\end{align}
Combining the results of \eqref{proposition:SR:eq1} and \eqref{proposition:SR:eq2}, we have
\[\inf_{\alpha_0\in \mathcal{G}_{\alpha}(2\sqrt{2})}\mathbb{P}\big(\widehat{\mathcal{S}}_{\alpha}(2)=\mathcal{S}_{0\alpha}\big)\rightarrow 1.\]

We next show the optimality of $t=2$.
Note that with a sufficiently large $n,$ we can
choose a subset $\widetilde{\mathcal{G}}_{\alpha}$ of $\{i\in[n]: \alpha_{0i}=0\}$
such that $n^{t/2}<|\widetilde{\mathcal{G}}_{\alpha}|< n.$ For simplicity, we denote $|\widetilde{\mathcal{G}}_{\alpha}|=\lfloor n^{t_1}\rfloor$
with $t/2<t_1<1,$ where $\lfloor x\rfloor$ denotes the maximum integer that is not larger than $x.$
Then, using the above arguments, we can show that the distribution of $\max_{i\in \widetilde{\mathcal{G}}_{\alpha}}\sqrt{n-1}|\widehat\alpha_{i}|/\widehat{\zeta}_{i}^{1/2}$
can be approximated by $\max_{i\in \widetilde{\mathcal{G}}_{\alpha}} |G_{1i}|$
with $G_{1\alpha}=(G_{11},\dots,G_{1n})^\top\sim N(0,\Omega_{\alpha}),$
and for any $x\in\mathbb{R}$ and as $n\rightarrow \infty,$
\[
\mathbb{P}\big(\max_{i\in\widetilde{\mathcal{G}}_{\alpha}}|G_{1i}|^2-2\log(\lfloor n^{t_1}\rfloor)+\log(\log \lfloor n^{t_1}\rfloor)\le x\big)\rightarrow \exp\{-\exp(-x/2)/\sqrt{\pi}\}.
\]
Then, for any $t_2$ such that $t<t_2<2t_1<2,$ we have
\[
\mathbb{P}\big(\max_{i\in \widetilde{\mathcal{G}}_{\alpha}} (n-1)|\widehat\alpha_{i}|^2/\widehat{\zeta}_{i}>t_2\log(n)\big)\rightarrow 1.
\]
In other words, some elements in $\widetilde{\mathcal{G}}_{\alpha}$ are retained in $\widehat{\mathcal{S}}_{\alpha}(t)$ with $0<t<2.$
The conclusion thus follows from the definition of $\widetilde{\mathcal{G}}_{\alpha}.$
Similarly, we can show that it also holds for $\widehat{\mathcal{S}}_{\beta}(t).$
This completes the proof.
\end{proof}

\subsection{Proof of Theorem 6}
\label{subsec-pro-th6}

\begin{proof}
By some arguments similar to the proof of \eqref{Theorem5:eq1}, we have
\begin{align*}
\Psi=&\|V^{-1}U^\top(\widehat{\text{Cov}(\epsilon)}-{\text{Cov}(\epsilon)})UV^{-1}\|_{\max}\\
= &O_p\Big(\frac{\kappa^2}{n\phi m^2}\Big(h^r+(\kappa+q_n)\sqrt{\frac{\log n}{n}}+\sqrt{\frac{\log n}{n^2h^{2p+2}}}\Big)\Big).
\end{align*}
which implies that $\Psi=o_p(\log^{-2}n/n)$ under the condition (4.3).
This completes the proof.
\end{proof}

\section{Additional simulation studies}
\label{section-add-simulation}
In this section, we conduct additional simulation studies to evaluate the finite sample performance of the proposed method.

\subsection{Simulation studies for conditionally independent cases}
We independently generate $\varepsilon_{ij}~(i,j\in[n])$ from  $U(-0.5,0.5)$ when $Z_{ij1}Z_{ij2}>0$
and from $U(-1,1)$ otherwise. We set $n=100$ and $150.$
All other settings remain consistent with those in Section 5 of the main paper.
The results, presented in Tables S1 and S2 and Figures S1 and S2,
indicate that the proposed methods still work well
even when $\varepsilon_{ij}$ is conditionally independent of $X_{ij1}$ given $Z_{ij}$.

\begin{table}[h]\centering
{\small
\caption*{Table S1: The results of bias, standard deviation (SD) and $95\%$ empirical coverage probability (CP) of the estimators for conditionally-independent cases.}
\label{Tab:CN:Herenorm}
\begin{tabular}{ccrcccrcc}
\hline
& &\multicolumn{3}{c}{$\rho_1=0.1$} && \multicolumn{3}{c}{$\rho_1=0.2$} \\
\cline{3-5} \cline{7-9}
$n$ & & Bias & SD &  CP($\%$) &  & Bias & SD &  CP($\%$)\\
\hline
$n=50$	&	$\alpha_{01}$	&	0.017 	&	0.312 	&	91.2 	&&	0.017 	&	0.292 	&	92.8 	\\
	&	$\alpha_{0\frac{n}{2}}$	&	$-$0.002 	&	0.259 	&	96.3 	&&	0.000 	&	0.253 	&	96.1 	\\
	&	$\alpha_{0n}$	&	$-$0.009 	&	0.240 	&	97.2 	&&	0.000 	&	0.260 	&	95.0 	\\
	&	$\alpha_{\frac{n}{5},\frac{4n}{5}}$	&	0.015 	&	0.280 	&	95.1 	&&	0.023 	&	0.290 	&	92.4 	\\
	&	$\alpha_{\frac{n}{2},\frac{n}{2}+1}$	&	0.006 	&	0.265 	&	95.8 	&&	$-$0.007 	&	0.257 	&	95.7 	\\
	&	$\eta_{01}$	&	0.015 	&	0.027 	&	92.3 	&&	0.016 	&	0.027 	&	92.1 	\\
	&	$\eta_{02}$	&	$-$0.014 	&	0.026 	&	92.5 	&&	$-$0.018 	&	0.026 	&	92.9 	\\
$n=100$	&	$\alpha_{01}$	&	0.034 	&	0.257 	&	91.2 	&&	0.028 	&	0.234 	&	91.6 	\\
	&	$\alpha_{0\frac{n}{2}}$	&	$-$0.006 	&	0.192 	&	97.0 	&&	$-$0.002 	&	0.184 	&	97.2 	\\
	&	$\alpha_{0n}$	&	0.000 	&	0.191 	&	97.9 	&&	$-$0.004 	&	0.207 	&	94.9 	\\
	&	$\alpha_{\frac{n}{5},\frac{4n}{5}}$	&	0.022 	&	0.262 	&	91.1 	&&	0.017 	&	0.238 	&	90.7 	\\
	&	$\alpha_{\frac{n}{2},\frac{n}{2}+1}$	&	0.025 	&	0.241 	&	92.2 	&&	0.009 	&	0.232 	&	92.2 	\\
	&	$\eta_{01}$	&	0.006 	&	0.014 	&	94.4 	&&	0.009 	&	0.014 	&	94.0 	\\
	&	$\eta_{02}$	&	$-$0.005 	&	0.014 	&	94.4 	&&	$-$0.009 	&	0.014 	&	93.8 	\\
\hline
\end{tabular}}
\end{table}

\begin{table}[h]\centering
\caption*{Table S2: The mean and standard deviation (SD) of $\mathcal{M}(\mathcal{S}_0, \widehat{\mathcal{S}})$, and the numbers of false positives (FP) and false negatives (FN).}\label{Tab:Supp:Here}
{\small
\begin{tabular}{cccccccccc}
\hline
&\multicolumn{4}{c}{$\alpha_0$} && \multicolumn{4}{c}{$\beta_0$}\\
\cline{2-5} \cline{7-10}
 $n$ & Mean & SD &  FP & FN &  & Mean & SD &  FP & FN\\
\hline
$100$	&	0.959 	&	0.094 	&	0.430 	&	0.533 	&&	0.958 	&	0.099 	&	0.463 	&	0.522 	\\
$150$	&	0.981 	&	0.058 	&	0.327 	&	0.117 	&&	0.981 	&	0.061 	&	0.332 	&	0.130 	\\
\hline
\end{tabular}}
\end{table}

\begin{figure}
\centering
\includegraphics[width=0.75\textwidth]{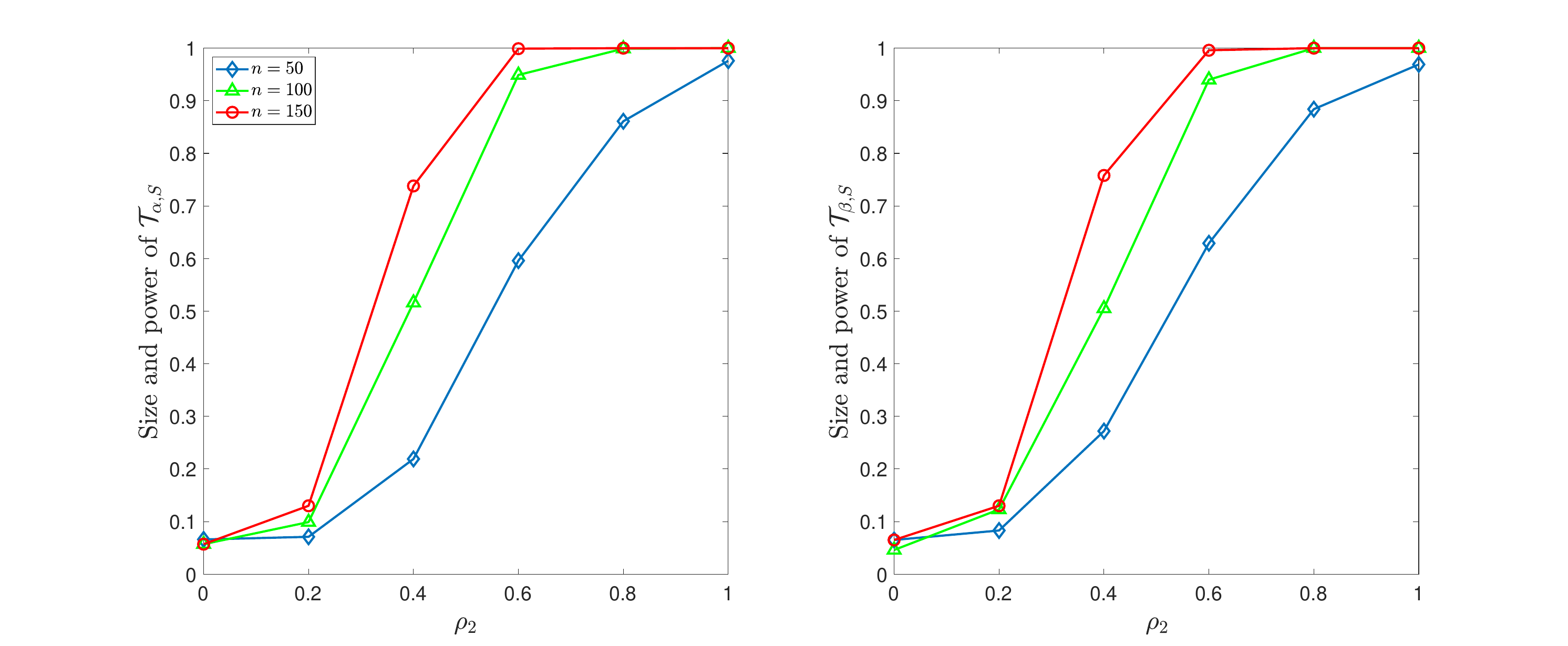}
\caption*{Figure S1: Empirical size and power of $\mathcal{T}_{\alpha,S}$
and $\mathcal{T}_{\beta,S}$ for conditionally independent cases.}\label{Figure:Supp:TS}
\end{figure}

\begin{figure}
\centering
\includegraphics[width=0.75\textwidth]{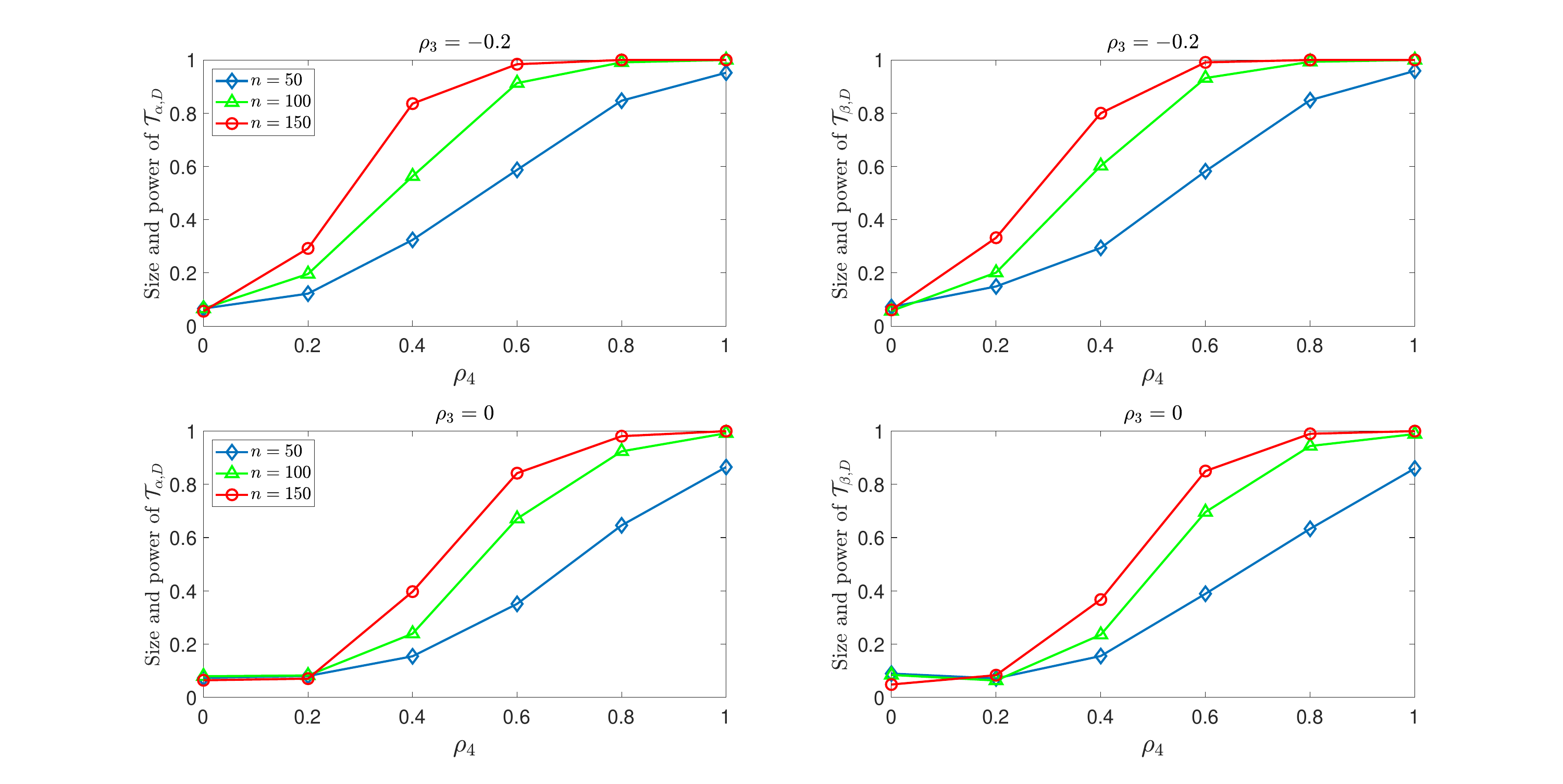}
\caption*{Figure S2: Empirical size and power of $\mathcal{T}_{\alpha,D}$
and $\mathcal{T}_{\beta,D}$ for conditionally independent cases.}\label{Figure:Supp:TD}
\end{figure}

\subsection{Comparison with Candelaria's method}

In this subsection, we conduct simulation studies to compare our method with Candelaria's approach.
Since Candelaria's method focuses on modeling undirected networks,
we adjust our method to accommodate this scenario.
In addition, we also compare with the conditional maximum likelihood method presented in Graham (2017).
To ensure a fair comparison, we set $\gamma_{01}=1$ when implementing the method of Graham (2017).
We employ the simulation settings used in Candelaria (2020).
Specifically, let $\breve{X}_i$ be drawn from a beta distribution $\text{Beta}(2, 2)-1/2$,
where $\text{Beta}(a_1, a_2)$ denotes a beta distribution with parameters $a_1$ and $a_2$.
The pair-specific covariate $Z_{ij}$ is defined as $Z_{ij}=\breve{X}_i\breve{X}_j$.
The agent-specific unobserved factor $\alpha_{0i}$ is generated as follows:
$$
\alpha_{0i}=0.75\breve{X}_i - 0.25C_n\times\text{Beta}(0.5, 0.5),
$$
where $C_n\in\{\log n, \log\log n, (\log n)^{1/2}\}$ controls the sparsity level of the network.
The special regressor $X_{ij1}=X_{ji1}$ is simulated as $X_{ij1}\sim \mathcal{N}_1 (0, 2)$ for $i<j$.
The noise term $\varepsilon_{ij}=\varepsilon_{ji}$ is generated from $\text{Beta}(2,2)-1/2$ for $i<j$.
We set $\eta_0=1.5$ and the network size $n=50$.
Then, we generate $A_{ij}$ using the following formula:
$A_{ij}=\mathbb{I}(\alpha_{0i}+\alpha_{0j}+X_{ij1}+Z_{ij}^\top \eta_0>\varepsilon_{ij}).$
The results are presented in Table S3.
We see that our method exhibits a smaller bias in the estimators compared to Candelaria's method,
while both methods yield comparable standard errors.
Since the noise does not follow a logistic distribution, Graham's method exhibits a non-negligible bias in the estimator.
Moreover, our method is computationally more efficient than the methods in Graham (2017) and Candelaria (2020).
This is because their methods involve a U-statistic of order four, which leads to a high computational cost.
In contrast, our projection method with a least squares solution effectively addresses this issue.

\begin{table}[h]\centering
\caption*{Table S3: The comparison results of our method, Graham's method and Candelaria's method.}\label{Tab:Supp:CB}
{\small
\begin{tabular}{cccccccccccc}
\hline
&\multicolumn{3}{c}{Graham's method} && \multicolumn{3}{c}{Candelaria's method} && \multicolumn{3}{c}{Our method}\\
\cline{2-4} \cline{6-8} \cline{10-12}
 $C_n$ & Bias & SD & Time &  & Bias & SD & Time && Bias & SD & Time\\
\hline
$\log\log(n)$	&	0.519	&	0.856	&	9.154 	&&	0.151 	&	0.847 	&	8.181 	&&	$-0.060$ 	&	0.776 	&	0.081 	\\
$\log(n)^{1/2}$	&	0.536	&	0.885	&	9.160 	&&	0.116 	&	0.896 	&	8.212 	&&	$-0.064 $	&	0.936 	&	0.080 	\\
$\log(n)$	&	0.562	&	1.017	&	9.259 	&&	0.168 	&	1.129 	&	8.244 	&&	$-0.019$ 	&	1.378 	&	0.079 	\\
\hline
\end{tabular}}
\end{table}

\subsection{Asymptotic relative efficiency}
In this subsection, we study the asymptotic relative efficiency of the proposed method.
For comparison, we consider the following model:
$$
\mathbb{P}(A_{ij}=1|X_{ij1},Z_{ij})=\frac{e^{(\alpha_{0i}+\beta_{0j}+X_{ij1}+Z_{ij}^\top\eta_0)/\upsilon}}
{1+e^{(\alpha_{0i}+\beta_{0j}+X_{ij1}+Z_{ij}^\top\eta_0)/\upsilon}}.
$$
In this context, the noise $\varepsilon_{ij}$ in model (2.1) is assumed to follow a generalized logistic distribution,
with distribution function given by $(1+e^{-x/\upsilon})^{-1}$.
When obtaining the MLE, we assume that $\upsilon$ is known a priori to ensure model identification and focus on estimating $\alpha_{0i},~\beta_{0j}$ and $\eta_0$.
To estimate these unknown parameters, we employ the maximum likelihood method developed in Yan et al. (2019).
Using arguments similar to those in the proofs of Theorems 3.1-3.3 from Yan et al. (2019),
we can establish that the asymptotic variance of the maximum likelihood estimator (MLE) for $\alpha_{0i}$ under the generalized logistic model is given by
$\breve\zeta_{i}=\upsilon^{-1}[1/{\breve{v}_{ii}}+1/\breve{v}_{(2n)(2n)}]$, where
$$
\breve{v}_{ii}=\sum_{j\neq i} \frac{ e^{(\alpha_{0i}+\beta_{0j}+X_{ij1}+Z_{ij}^\top\eta_0)/\upsilon}}
{[1+e^{(\alpha_{0i}+\beta_{0j}+X_{ij1}+Z_{ij}^\top\eta_0)/\upsilon}]^2},
$$
for $i=1, \ldots, n-1$ and
$$
\breve{v}_{(2n)(2n)}=\sum_{i=1}^{n-1} \frac{ e^{(\alpha_{0i}+\beta_{0n}+X_{in1}+Z_{in}^\top\eta_0)/\upsilon}}
{[1+e^{(\alpha_{0i}+\beta_{0n}+X_{in1}+Z_{in}^\top\eta_0)/\upsilon}]^2}.
$$
Furthermore, according to Theorem 3 in the main paper,
the asymptotic variance of $\widehat\alpha_i$ is given by $\zeta_{i}=(2n-1)\sigma^2_{\epsilon}/[n(n-1)]$.
Therefore, the asymptotic relative efficiency (ARE) between $\zeta_i$ and $\breve\zeta_i$ is defined as
\begin{align*}
\text{ARE}(\zeta_i,\breve\zeta_i)
=\frac{\zeta_i}{\breve\zeta_i}
=\frac{\sigma^2_{\epsilon}}{\upsilon}\frac{(2n-1)}{n(n-1)}\frac{1}{\breve{v}_{ii}^{-1}+\breve{v}_{(2n)(2n)}^{-1}}.
\end{align*}
Since the above ratio is highly dependent on the configuration of parameters,
providing a direct comparison is challenging.
Therefore, we conduct simulation studies to evaluate it.
Specifically, we set $\upsilon\in\{0.5, 0.6, 0.7, 0.8, 0.9, 1\}.$
For each $i\in[n]$, the parameter $\alpha_{0i}$ is taken as
$$\alpha_{0i}=-0.25\log(n)+\frac{i-1}{n-1}\big[0.25\log(n)+\rho_1\log(n)\big],$$
where $\rho_1\in\{0, 0.1\}$ controls the sparsity level of the network.
In addition, we set $\beta_{0i}=\alpha_{0i}$ for $i\in[n-1]$
and $\beta_{0n}=0$ for model identifiability.
All other settings remain consistent with those outlined in Section 7.1.
We estimate $\text{ARE}(\zeta_i,\breve \zeta_i)$ by replacing $\zeta_i$ and $\breve \zeta_i$ with their estimators,
denoted by $\widehat{\text{ARE}}(\zeta_i,\breve \zeta_i)$.
We then calculate the average of AREs as follows: $\text{ave(ARE)}=n^{-1}\sum_{i=1}^n\widehat{\text{ARE}}(\zeta_i,\breve \zeta_i)$.
Furthermore, we examine a scenario where the model is misspecified for the MLE by incorrectly setting $\upsilon$ to 1 for various values of $\upsilon$.
The results are presented in Table S4.
We observe that the values of $\text{ave(ARE)}$ are approximately 1.3 for various values of $\upsilon$
when compared to the MLE obtained under correct model specification.
This indicates that our method may exhibit some loss of efficiency,
as it does not specify a parametric model for the noise distribution.
However, in cases where the model is misspecified for the MLE,
the values of $\text{ave(ARE)}$ decrease as $\upsilon$ varies from 1 to 0.5, even falling below $1$ when $\upsilon=0.5$ and $0.6$.
This suggests that our method can be more efficient than the MLE obtained under model misspecification.
This is understandable, as parametric methods are highly sensitive to model assumptions.

\begin{table}[h]\centering
\caption*{Table S4: The results for asymptotic relative efficiency.
In Case I, the model is correctly specified with $\upsilon$ known a priori,
whereas Case II illustrates a misspecified model in which $\upsilon$ is incorrectly set to 1 for various values of $\upsilon$.}\label{Tab:Supp:ARE}
{\small
\begin{tabular}{ccccccccccccccccccc}
\hline
&\multicolumn{2}{c}{$\upsilon=0.5$} && \multicolumn{2}{c}{$\upsilon=0.6$} && \multicolumn{2}{c}{$\upsilon=0.7$}
&& \multicolumn{2}{c}{$\upsilon=0.8$} && \multicolumn{2}{c}{$\upsilon=0.9$} && \multicolumn{2}{c}{$\upsilon=1$}\\
\cline{2-3} \cline{5-6} \cline{8-9} \cline{11-12} \cline{14-15} \cline{17-18}
Case & $\rho_1=0$ & 0.1 && 0 & 0.1 && 0 & 0.1 && 0 & 0.1 && 0 & 0.1 && 0 & 0.1\\
\hline
I	&	1.31  & 1.31 &&  1.32  & 1.30 &&  1.32 &  1.30 && 1.43 & 1.28  &&  1.27 & 1.31 && 1.38 & 1.28  \\
II	&	0.83  &	0.85 &&  0.94  & 0.93 &&  1.02 &  1.02 && 1.22 & 1.10  &&  1.18 & 1.21 && 1.38 & 1.28  \\
\hline
\end{tabular}}
\end{table}

\subsection{Simulation studies with weighted networks}
In this section, we evaluate the finite-sample performance of the proposed method for weighted networks.
Specifically, we generate $A_{ij}$ using
$
    A_{ij}=\sum_{l=0}^{R-1} \pi_l \mathbb{I}(\omega_{0l}<\alpha_{0i}+\beta_{0j}+X_{ij1}+Z_{ij}^\top\eta_0-\varepsilon_{ij}\le \omega_{0,l+1}).
$
Here, we set $R=7$ and $\pi_l=l-1.$
Under this setting, $A_{ij}\in\{0,1,\dots, 6\}.$
In addition, let $\omega_{0l}=0.25(l-1)$ for $l\in[R-1],$ $\alpha_{0i}=\beta_{0i}=-0.3$ for $i\in[n-1],$ and
$\alpha_{0n}=\beta_{0n}=0.$
We choose the sample size to be $50$ and $100.$
All other settings remain consistent with those outlined in Section 7.
The results, presented in Table S5, indicate that the proposed method still works well for weighted networks.

\begin{table}[h]\centering
\caption*{Table S5: The results of bias, standard deviation (SD) and $95\%$ empirical coverage probability (CP) of the estimators for weighted networks.}
{\small
\begin{tabular}{ccrcccrcccrcc}
\hline
& &\multicolumn{3}{c}{$\varepsilon_{ij}\sim N(0,1)$} && \multicolumn{3}{c}{$\varepsilon_{ij}\sim \text{Logistic(0,1/2)}$} && \multicolumn{3}{c}{$\varepsilon_{ij}\sim\text{Mnorm}_1$} \\
\cline{3-5} \cline{7-9} \cline{11-13}
$n$ &   &  Bias & SD &  CP($\%$) &  & Bias & SD &  CP($\%$) &  & Bias & SD &  CP($\%$)\\
\hline
$50$	&	$\alpha_{01}$	&	0.036 	&	0.406 	&	95.6 	&	&	0.015 	&	0.390 	&	94.2 	&	&	0.035 	&	0.395 	&	95.5 	\\
	&	$\alpha_{0\frac{n}{2}}$	&	0.021 	&	0.413 	&	94.7 	&	&	0.038 	&	0.386 	&	95.0 	&	&	0.019 	&	0.400 	&	95.6 	\\
	&	$\alpha_{0n}$	&	0.047 	&	0.403 	&	95.4 	&	&	0.039 	&	0.373 	&	95.0 	&	&	0.030 	&	0.404 	&	96.3 	\\
	&	$\alpha_{\frac{n}{5},\frac{4n}{5}}$	&	0.017 	&	0.417 	&	93.7 	&	&	0.020 	&	0.403 	&	94.0 	&	&	0.016 	&	0.452 	&	92.7 	\\
	&	$\alpha_{\frac{n}{2},\frac{n}{2}+1}$	&	0.013 	&	0.400 	&	95.4 	&	&	$-$0.022 	&	0.421 	&	93.4 	&	&	$-$0.007 	&	0.415 	&	94.7 	\\
	&	$\eta_{01}$	&	$-$0.015 	&	0.034 	&	96.5 	&	&	$-$0.014 	&	0.032 	&	95.6 	&	&	$-$0.015 	&	0.037 	&	94.7 	\\
	&	$\eta_{02}$	&	0.015 	&	0.035 	&	95.2 	&	&	0.014 	&	0.032 	&	95.7 	&	&	0.016 	&	0.036 	&	94.5 	\\
	&	$\omega_1$	&	0.040 	&	0.368 	&	95.6 	&	&	0.054 	&	0.348 	&	95.0 	&	&	0.034 	&	0.375 	&	95.7 	\\
	&	$\omega_2$	&	0.025 	&	0.368 	&	95.6 	&	&	0.042 	&	0.346 	&	95.4 	&	&	0.020 	&	0.374 	&	95.7 	\\
	&	$\omega_3$	&	0.001 	&	0.368 	&	95.6 	&	&	0.023 	&	0.346 	&	95.4 	&	&	$-$0.005 	&	0.376 	&	95.5 	\\
	&	$\omega_4$	&	$-$0.032 	&	0.369 	&	95.2 	&	&	$-$0.006 	&	0.346 	&	95.5 	&	&	$-$0.039 	&	0.377 	&	95.6 	\\
	&	$\omega_5$	&	$-$0.079 	&	0.370 	&	95.0 	&	&	$-$0.047 	&	0.346 	&	95.5 	&	&	$-$0.085 	&	0.378 	&	93.8 	\\
	&	$\omega_6$	&	$-$0.138 	&	0.369 	&	94.1 	&	&	$-$0.102 	&	0.346 	&	94.7 	&	&	$-$0.146 	&	0.378 	&	93.0 	\\
$100$	&	$\alpha_{01}$	&	0.025 	&	0.190 	&	97.1 	&	&	0.010 	&	0.327 	&	94.7 	&	&	0.018 	&	0.315 	&	96.2 	\\
	&	$\alpha_{0\frac{n}{2}}$	&	0.007 	&	0.204 	&	96.5 	&	&	0.009 	&	0.318 	&	95.3 	&	&	0.007 	&	0.345 	&	95.7 	\\
	&	$\alpha_{0n}$	&	0.024 	&	0.193 	&	97.0 	&	&	0.018 	&	0.314 	&	96.4 	&	&	0.019 	&	0.319 	&	95.2 	\\
	&	$\alpha_{\frac{n}{5},\frac{4n}{5}}$	&	0.002 	&	0.211 	&	94.6 	&	&	0.000 	&	0.311 	&	93.6 	&	&	0.008 	&	0.344 	&	94.4 	\\
	&	$\alpha_{\frac{n}{2},\frac{n}{2}+1}$	&	0.012 	&	0.208 	&	95.5 	&	&	0.002 	&	0.324 	&	93.7 	&	&	0.006 	&	0.391 	&	93.6 	\\
	&	$\eta_{01}$	&	$-$0.010 	&	0.011 	&	95.7 	&	&	$-$0.015 	&	0.018 	&	92.6 	&	&	$-$0.016 	&	0.020 	&	92.5 	\\
	&	$\eta_{02}$	&	0.010 	&	0.011 	&	96.6 	&	&	0.015 	&	0.018 	&	92.3 	&	&	0.015 	&	0.020 	&	91.9 	\\
	&	$\omega_1$	&	0.039 	&	0.186 	&	95.8 	&	&	0.037 	&	0.297 	&	96.7 	&	&	0.024 	&	0.307 	&	95.0 	\\
	&	$\omega_2$	&	0.040 	&	0.186 	&	95.8 	&	&	0.029 	&	0.297 	&	96.7 	&	&	0.014 	&	0.307 	&	95.4 	\\
	&	$\omega_3$	&	0.035 	&	0.186 	&	96.1 	&	&	0.015 	&	0.298 	&	96.9 	&	&	$-$0.003 	&	0.307 	&	95.3 	\\
	&	$\omega_4$	&	0.023 	&	0.185 	&	96.7 	&	&	$-$0.006 	&	0.298 	&	97.2 	&	&	$-$0.029 	&	0.307 	&	95.2 	\\
	&	$\omega_5$	&	0.003 	&	0.185 	&	96.7 	&	&	$-$0.037 	&	0.298 	&	96.9 	&	&	$-$0.064 	&	0.308 	&	95.5 	\\
	&	$\omega_6$	&	$-$0.028 	&	0.186 	&	96.2 	&	&	$-$0.079 	&	0.299 	&	96.8 	&	&	$-$0.111 	&	0.308 	&	95.0 	\\
\hline
\end{tabular}}
\label{Tab:CN:WN}
\end{table}

\clearpage
\newpage

\subsection{Evaluating logistic parametric assumptions}

In this subsection, we develop a graphical method to compare the goodness of fit for the network data
and to investigate the logistic parametric assumptions (Yan et al., 2019).
Specifically, we compare the observed degrees with the fitted degrees based on estimated graphs constructed from our method and that of Yan et al.(2019).
To do that, define $d_i^O=(n-1)^{-1}\sum_{j\neq i}A_{ij}$ and $d_i^I=(n-1)^{-1}\sum_{i\neq j}A_{ij}$ for $i\in[n]$.
To examine the logistic parametric assumption, we consider the following model presented in Yan et al. (2019):
\[
\mathbb{P}(A_{ij}|X_{ij})=p_{ij}=\frac{e^{\alpha_{0i}+\beta_{0j}+X_{ij}^\top\gamma_0}}{1+e^{\alpha_{0i}+\beta_{0j}+X_{ij}^\top\gamma_0}}.
\]
To estimate the unknown parameters $\alpha_{0i},~\beta_{0j},$ and $\gamma_0$,
we apply the maximum likelihood method developed in Yan et al. (2019).
The corresponding estimators are denoted as $\breve{\alpha}_i,~\breve{\beta}_j,$ and $\breve{\gamma}$, respectively.
For each pair $(i,j)$, we then estimate $p_{ij}$ using $\breve p_{ij}$,
which is obtained by substituting $\alpha_{0i},~\beta_{0j},$ and $\gamma_0$ in the above model
with $\breve{\alpha}_i, \breve{\beta}_j,$ and $\breve{\gamma}$, respectively.
Subsequently, we calculate the estimates of $d_i^O$ and $d_i^I$ using
$\breve{d}_i^O=(n-1)^{-1}\sum_{j\neq i}\breve A_{ij}$ \text{and} $\breve{d}_i^I=(n-1)^{-1}\sum_{j\neq i}\breve A_{ji},$ respectively,
where $\breve A_{ij}=1$ if $\breve p_{ij}>0.5$ and 0 otherwise.
Finally, we display the scatter plots of $(d_i^O, \breve{d}_i^O)$ and $(d_i^I, \breve{d}_i^I)$.
For the proposed method, we can estimate $d_i^O$ and $d_i^I$ in a similar manner.
Denote the corresponding estimators as $\widehat{d}_i^I$ and $\widehat{d}_i^O$, respectively.

\begin{figure}
\centering
\includegraphics[width=0.8\textwidth]{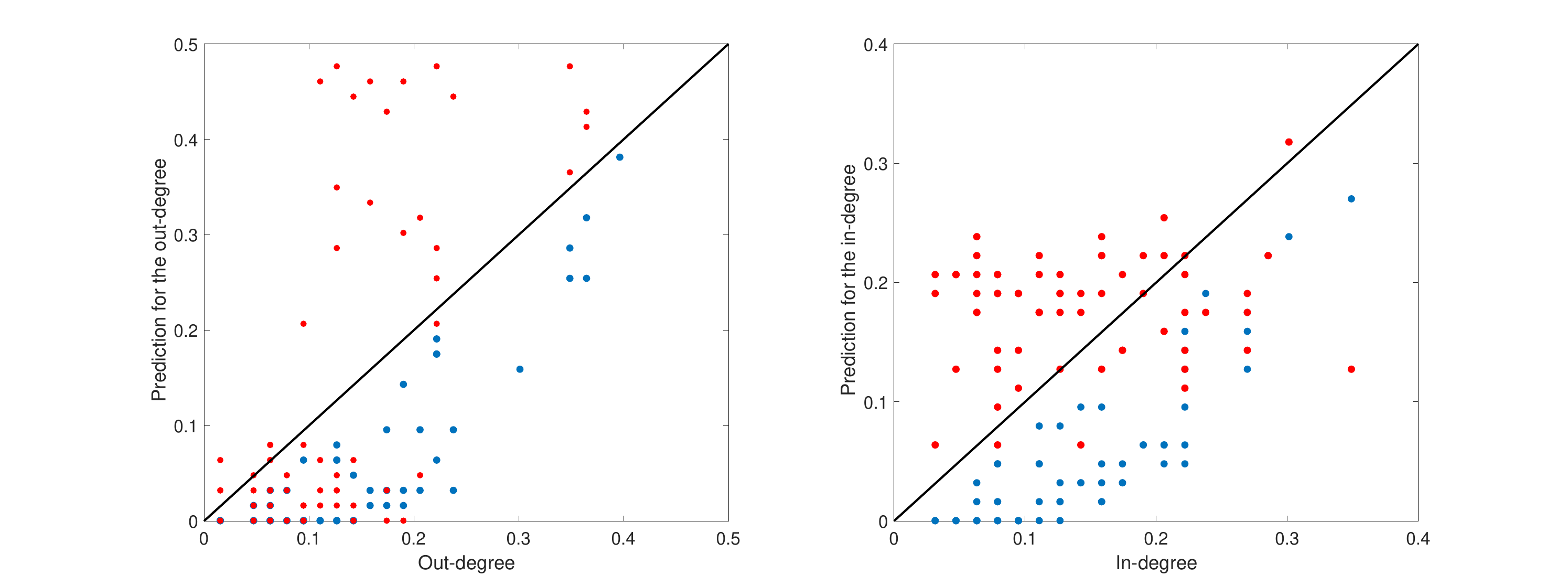}
\caption*{Figure S3: The left panel presents the scatter plots of $(d_i^O,\widehat d_i^O)$ (red dot) and $(d_i^O,\breve d_i^O)$ (blue dot),
while the right panel displays the scatter plots of $(d_i^I,\widehat d_i^I)$ (red dot) and $(d_i^I,\breve d_i^I)$ (blue dot).}
\label{figure-A}
\end{figure}

The results are presented in Figure S3.
We observe that $\breve{d}_i^O$ and $\breve{d}_i^I$ significantly underestimate $d_i^O$ and $d_i^I$,
indicating that the logistic parametric assumption may be violated.
In addition, we calculate the $L_2$-norm errors, i.e., $\| d_i^O - \breve{d}_i^O \|_2$ and $\| d_i^I - \breve{d}_i^I\|_2$.
The results are 0.738 and 0.732 for our method while they are 0.808 and 0.780 in Yan et al.'s method.
This indicates an improvement with our proposed method.

\begin{center}
{\large \textsc{References}}
\end{center}
\begin{description}
\itemsep=-\parsep
\itemindent=-1cm

\medskip\item
Andrews, D. W. (1995). Nonparametric kernel estimation for semiparametric models.
{\it Econometric Theory}, 11, 560-586.
\medskip\item
Cai, T. T., Liu, W. and Xia, Y. (2014).
Two-sample test of high dimensional means under dependence.
{\it Journal of the Royal Statistical Society Series B: Statistical Methodology, 76}, 349-372.
\medskip\item
Candelaria, L. E. (2020). A semiparametric network formation model with unobserved linear
heterogeneity. {\it arXiv preprint arXiv:2203.15603}.

\medskip\item
Chernozhukov, V., Chetverikov, D. and Kato, K. (2017).
Central limit theorems and bootstrap in high dimensions.
{\it The Annals of Probability, 45}, 2309-2352.
\medskip\item
Chernozhuokov, V., Chetverikov, D., Kato, K. and Koike, Y. (2022).
Improved central limit theorem and bootstrap approximations in high dimensions.
{\it The Annals of Statistics, 50}, 2562-2586.
\medskip\item
Graham, B. S. (2017). An econometric model of network formation with degree heterogeneity.
{\it Econometrica}, {\bf 85}, 1033-1063.
\medskip\item
Hoeffding , W. (1963).
Probability inequalities for sums of bounded random variables.
{\it Journal of the American Statistical Association, 58}, 13-30.
\medskip\item
Vershynin, R. (2018). {\it High-Dimensional Probability: An Introduction with Applications in Data Science}.
Cambridge university press.
\medskip\item
Yan, T., Leng, C. and Zhu, J. (2016).
Asymptotics in directed exponential random graph models with an increasing bi-degree sequence.
{\it The Annals of Statistics, 44}, 31-57,
\medskip\item
Yan, T., Jiang, B., Fienberg, S. E. and Leng, C. (2019).
Statistical inference in a directed network model with covariates.
{\it Journal of the American Statistical Association}, {\bf 114}, 857-868.
\end{description}

\clearpage
\newpage

\begin{figure}
\centering
\includegraphics[width=1\textwidth]{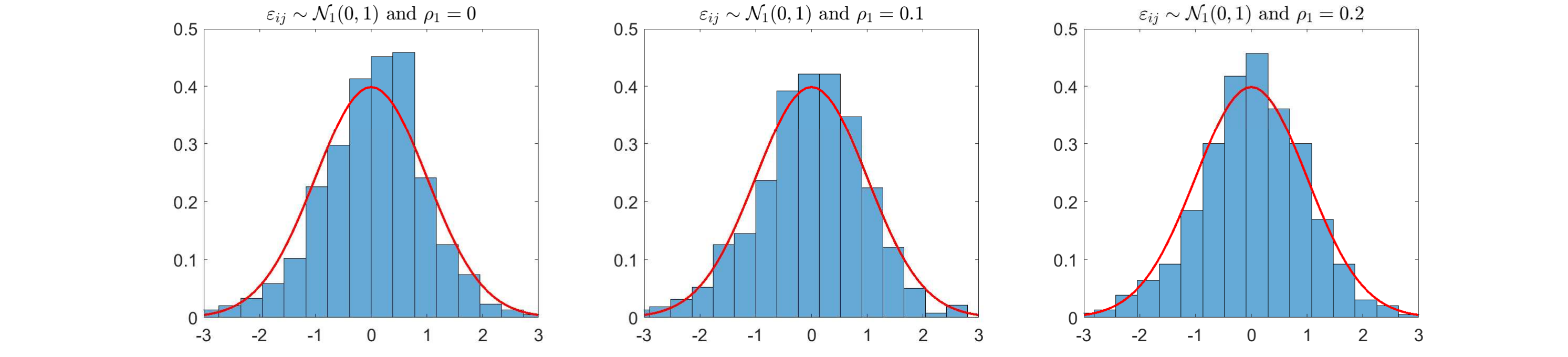}
\includegraphics[width=1\textwidth]{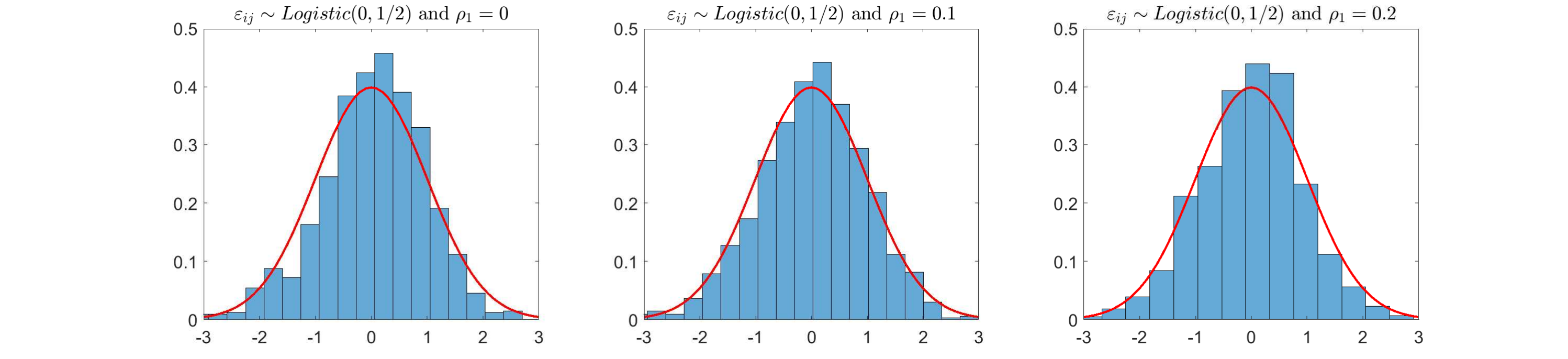}
\includegraphics[width=1\textwidth]{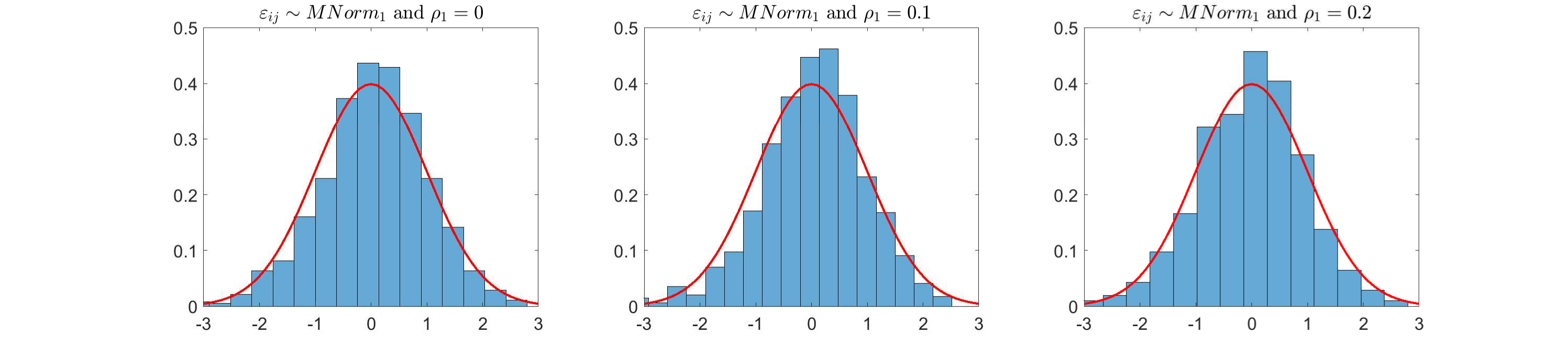}
\caption*{Figure S4: Asymptotical null distribution of the standardized estimator $\widehat\alpha_{n/2}$.
The number of nodes is $50$. The red solid lines denote the density function of the standard normal distribution.}\label{Figure:Normal}
\end{figure}

\begin{figure}
\centering
\includegraphics[width=0.8\textwidth]{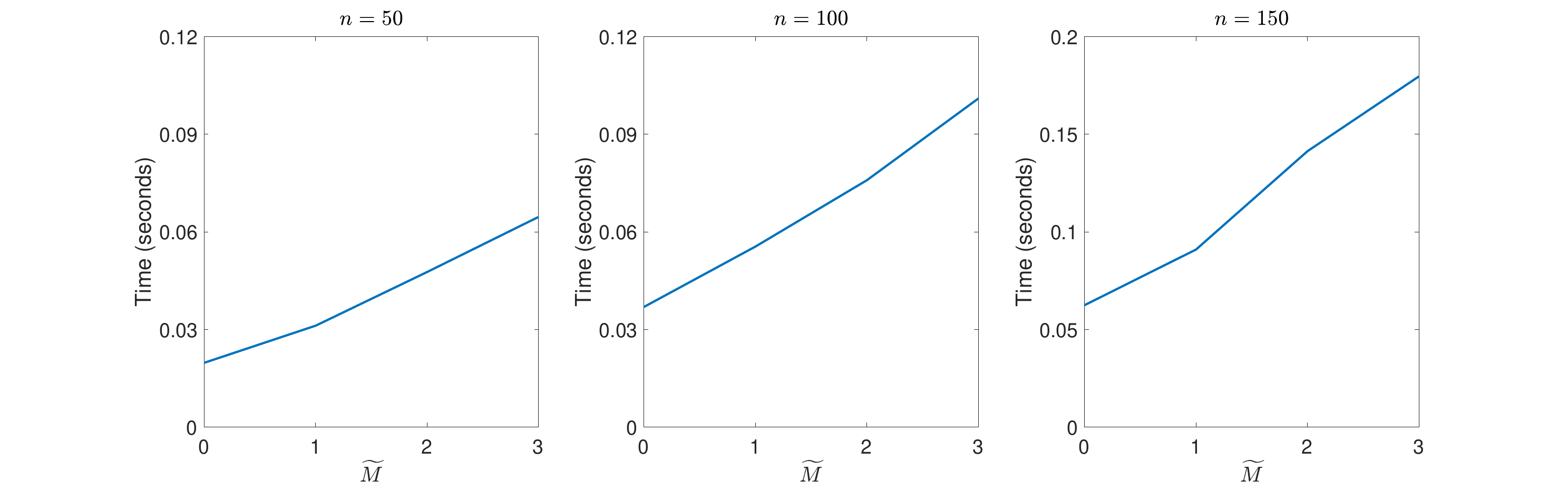}
\caption*{Figure S5: Time (seconds) for calculating $\mathcal{T}_{\alpha,D}(\widetilde M)$ with $\widetilde M\in\{0, 1, 2, 3\}$.
The noise $\varepsilon_{ij}$ follows $\mathcal{N}(0,0.25)$.}
\label{figure-A}
\end{figure}

\begin{figure}
\centering
\includegraphics[width=0.35\textwidth]{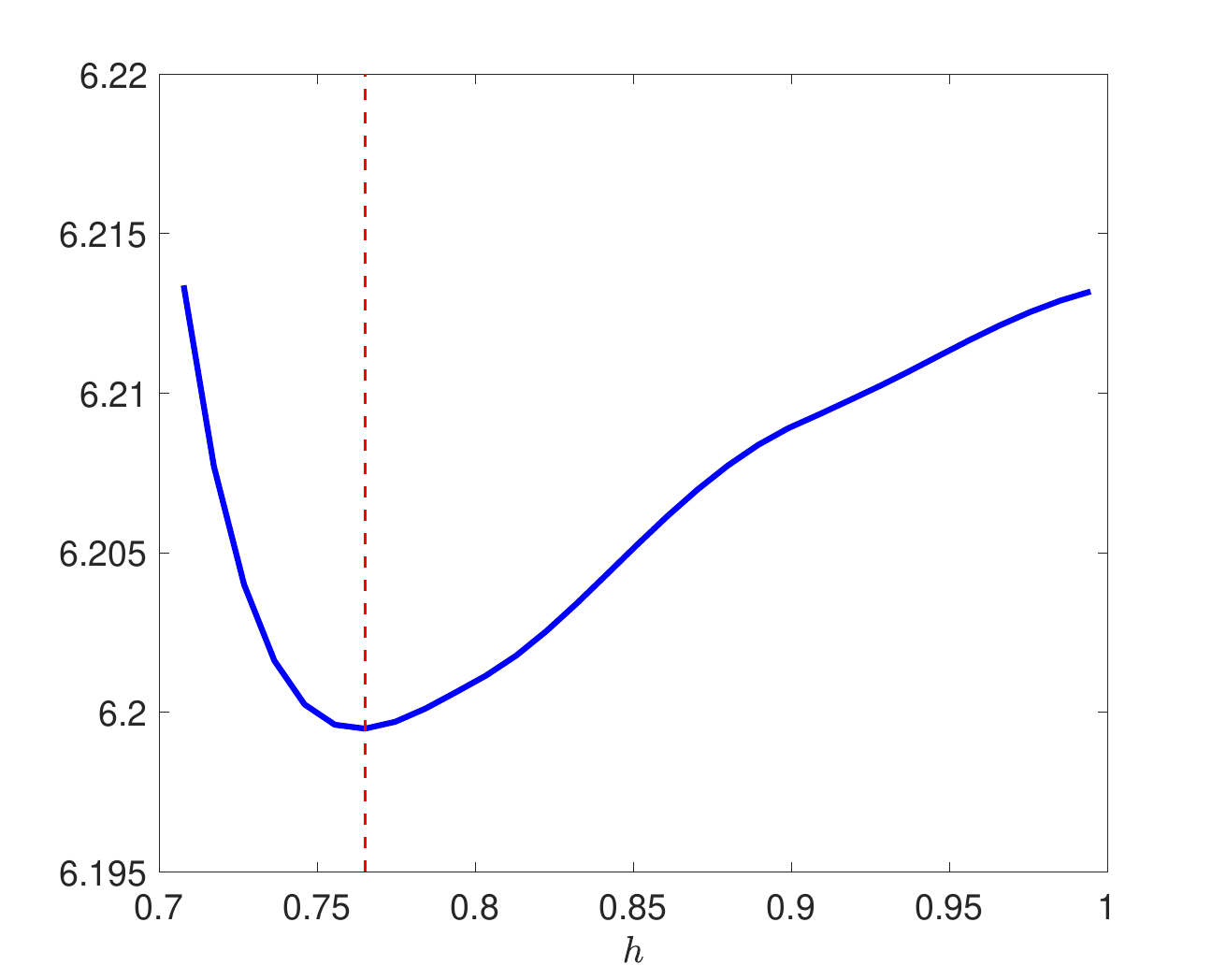}
\caption*{Figure S6: Prediction errors versus bandwidths.
The optimal bandwidth is $\widehat h=0.7651.$}\label{opth}
\end{figure}

\end{document}